\newcommand{\columnsversion}[2]{#1}{} 
\newcommand{\E}{\mathbb{E}}
\newcommand{\abs}[1]{\vert{#1}\vert}
\newtheorem{theorem}{Theorem}[section]
\newtheorem{lemma}[theorem]{Lemma}
\newtheorem{corollary}[theorem]{Corollary}
\newtheorem{defn}[theorem]{Definition}
\newtheorem{fact}[theorem]{Fact}
\newcommand{\algspace}{\text{    } \quad  \text{    }}
\def\squareforqed{\hbox{\rule{2.5mm}{2.5mm}}}
\def\QED{\ifmmode\squareforqed 
  \else{\nobreak\hfil   
    \penalty50                 
    \hskip1em                  
    \null                      
    \nobreak                   
    \hfil                      
    \squareforqed              
    \parfillskip=0pt           
    \finalhyphendemerits=0     
    \endgraf}                  
  \fi}
\def\blksquare{\rule{2mm}{2mm}}
\def\qedsymbol{\blksquare}
\newcommand{\bg}[1]{\medskip\noindent{\bf #1}}
\newcommand{\ed}{{\hfill\qedsymbol}\medskip}
\newenvironment{proof}{\bg{Proof : }}{\ed}
\newenvironment{proofof}[1]{\bg{Proof of #1 : }}{\ed}
\newcommand{\R}{\ensuremath{\mathbb R}}
\newcommand{\comment}[1]{}
{}  
\newcommand{\junk}[1]{}
\newlength{\tmp} \newlength{\lpsx} \newlength{\lpsy} \newlength{\upsx}
\newlength{\upsy}
\newcommand{\email}[1]{\texttt{#1}}
\newcommand{\one}{\mathbbm{1}}
\begin{document}

\columnsversion{

\setcounter{page}{0}

\title{Polyhedral Clinching Auctions and the Adwords Polytope}

\author{
Gagan Goel\\
       Google Inc., New York\\
       \email{gagangoel@google.com}
\and
Vahab Mirrokni\\
       Google Inc., New York\\
       \email{mirrokni@google.com}
\and
Renato Paes Leme\thanks{This author's contribution to this work was made while
he was an intern at Google Inc. He is currently a graduate student at Cornell
University supported by a Microsoft Fellowship.
}\\
       Cornell University\\
       \email{renatoppl@cs.cornell.edu}
}

}{
\conferenceinfo{STOC'12,} {May 19--22, 2012, New York, New York, USA.} 
\CopyrightYear{2012} 
\crdata{978-1-4503-1245-5/12/05} 
\clubpenalty=10000 
\widowpenalty = 10000

\title{Polyhedral Clinching Auctions and the Adwords Polytope}

\numberofauthors{3}

\author{
\alignauthor
Gagan Goel\\
       \affaddr{Google Research, NYC}\\
       \email{gagangoel@google.com}
\alignauthor
Vahab Mirrokni\\
       \affaddr{Google Research, NYC}\\
       \email{mirrokni@google.com}
\alignauthor
Renato Paes Leme\titlenote{This work was done while this author was an intern at
Google Inc. During the academic year, he is supported by a Microsoft Research
Fellowship.}\\
       \affaddr{Cornell University}\\
       \email{renatoppl@cs.cornell.edu}
}

}

\maketitle

\begin{abstract}

A central issue in applying auction theory in practice is the problem
of dealing with budget-constrained agents. A desirable goal in 
practice is to design
 incentive compatible, individually rational, and Pareto optimal auctions while
 respecting the budget constraints. Achieving this goal is particularly
challenging in the 
 presence of nontrivial combinatorial constraints over the set of feasible
allocations. 

Toward this goal and motivated by AdWords auctions, 
we present an auction for {\em polymatroidal} environments satisfying
the above properties.
Our auction employs a novel clinching technique with
a clean geometric description and only needs an
oracle access to the submodular function defining the polymatroid.
As a result, this auction not only simplifies and generalizes all previous
results, 
it applies to  several new applications including AdWords Auctions, bandwidth
markets, 
and video on demand. In particular, our  characterization of the
AdWords auction as polymatroidal constraints might  be of
independent interest. This allows
us to design the first mechanism for Ad Auctions taking into account
simultaneously budgets, multiple keywords and multiple slots.

We show that it is impossible to extend this result to generic polyhedral
constraints. This also implies an impossibility result for
multi-unit auctions with
decreasing marginal utilities in the presence of budget constraints.

\end{abstract}

\columnsversion{
\renewcommand{\thepage}{}
\clearpage
\pagenumbering{arabic}
}{
\category{J.4}{Social and Behavioral Sciences}{Economics}

\category{F.2.0}{Theory of Computation}{Analysis of Algorithms and Problem
Complexity}

\terms{Algorithms, Economics, Theory}

\keywords{Clinching Auctions, Budgets, Polymatroids, AdWords} 
}

\section{Introduction}
A large part of auction theory deals with the problem of
designing truthful mechanisms for quasi-linear settings. For these settings,
 the VCG mechanism, or variants of it like affine-maximizers, can be applied to get  optimal or near-optimal
auctions. However, when we deviate from the quasi-linear model, very little is
known. One of the most natural and practically important
feature missing from the quasi-linear model is 
the presence of \emph{budget constraints}. Budgets play a major role
in several real-world auctions where the magnitude of the transactions involved naturally put a financial constraint on the bidders.
Examples of such auctions include those used for
the privatization of public assets in eastern Europe, or those for the distribution of radio spectra
in the US (for a discussion on this, see Benoit and Krishna
\cite{benoit_krishna}). Another important example is that of Ad Auctions
where advertisers explicitly declare budget constraints. In such settings,
 respecting the declared budget constraints is a necessary property
any mechanism must satisfy. There is much
discussion on the source of budget constraints (we refer to
Che and Gale \cite{che_gale} for a detailed discussion on this topic).

Satisfying budget constraints while keeping  incentive compatibility and
efficiency is a challenging problem, and it becomes even harder
in the presence of complex combinatorial constraints over 
the set of feasible allocations. 
In the presence of budgets, 
individual rationality and truthfulness cannot be satisfied at
the same time as  maximizing social welfare~\cite{dobzinski_budgets}, and thus 
the goal of maximizing efficiency can be achieved mainly through 
{\em Pareto-optimal} auctions~\footnote{An auction is
Pareto-optimal if it outputs an allocation and payments such that no alternative set of
allocation and payments improves the utility of at least one agent and keeps
the other agents at least as happy as before. Here agents include bidders and
the auctioneer where the auctioneer's utility is its revenue}. 
Therefore, a desirable goal 
under budget constraints is to design incentive-compatible (IC) and individually-rational (IR)
 auctions while producing Pareto-optimal outcomes.
The first successful example of such mechanisms was developed in the seminal paper of
Dobzinski, Lavi and Nisan~\cite{dobzinski_budgets}, where the authors adapt
the clinching auction framework of Ausubel \cite{Ausubel_multi} to give a
truthful mechanism that achieves Pareto-optimality. Their setting, however,
captures only a simple allocation constraint: 
there is a limited supply of $k$ items and each player has a value of $v_i$ for
each item (and hence value of $v_i \cdot t$ for getting $t$ items) and budget
$B_i$.

As for more general allocation constraints, 
there have been  a couple of subsequent work capturing 
special families of allocation constraints,  
e.g., unit demands~\cite{Aggarwal09}, or  multi-unit demands with matching
constraints \cite{fiat_clinching}.
Although these results are mainly based on Ausubel's clinching auction, each of them need to develop independent techniques to deal with their specific
environment - for example, Fiat, Leonardi, 
Saia, and Sankowski~\cite{fiat_clinching} define a complex
clinching procedure based on trading paths in a bipartite graph and computing
{\em $S$-avoid matchings}.
Our goal in this paper is to extend these results to a much more general class
of polyhedral constraints.
In particular, we would like to understand for what polyhedral environments one can design such auctions, and 
also identify simple environments for which designing such auctions is not
possible.\\

{\bf Our Results and Techniques.}
Firstly, inspired by an application in Sponsored Search Ad Auctions and
several other applications,  we study {\em polymatroid} constraints over feasible allocations
and give an auction that achieves all the desired properties, i.e., it satisfies IC, IR, and produces Pareto-optimal outcomes while satisfying the budget constraints. 
We assume that the budgets are 
public - which was shown in \cite{dobzinski_budgets} to be a necessary
assumption\footnote{Dobzinski et. al. ~\cite{dobzinski_budgets}
showed that with private budgets, truthfulness and Pareto-optimality 
cannot be achieved using deterministic mechanisms - not even for multi-unit
auctions}.
While following Ausubel's framework to design this auction, we need to invent 
the main component of the mechanism, i.e., the {\em clinching step}
that copes with the polyhedral allocation constraints.
Our clinching step uses submodular minimization as a subroutine and only needs a value oracle
access to the submodular function corresponding to the polymatroid. As a result, our mechanism has a
clean geometric description that abstracts away the combinatorial
complications of previous designs. This leaves the auctioneer free to
focus on modeling the environment, and then use our
mechanism as a black-box. 
 This general technique not only generalizes
(and simplifies) the previously known results like  multi-unit auctions with matching
constraints~\cite{dobzinski_budgets,fiat_clinching}, but also extend clinching auctions to
many other applications like the AdWords Auction and settings like spanning
tree auctions and
video on demand~\cite{Bikhchandani11}. 
Our main application is in  sponsored search auctions where we model the AdWords
Auction 
with multiple keywords and multiple position slots per keyword as a polymatroid  called
the \emph{AdWords polytope} (See Section~\ref{subsec:adwords_polytope} for
details). 

In order to extend this result to more general polyhedral constraints,
we turn our attention to $2$-player auctions with budget constraints and
prove several structural properties of Pareto-optimal truthful auctions for polyhedral environments
In particular, we present a characterization of such auctions that results 
in  various impossibility results and one positive result.
On the positive side,  
we present a truthful individually rational Pareto-optimal auction
for any environment if only one player is budget-constrained.
On the other hand,  if more than one player is budget-constrained, 
we illustrate simple polytope constraints for which it is impossible to achieve
a truthful Pareto-optimal auction even for two players. Moreover, as a
byproduct of this characterization, we get an impossibility result for multi-unit auctions with
decreasing marginal utilities. This impossibility result
disproves an implied conjecture by Ausubel
\cite{Ausubel_multi} which has been reinforced by follow-up
papers \cite{dobzinski_budgets,Lavi_May}.
In fact, this conjecture was reinforced by
the fact that getting such an auction is possible whenever the marginals are
flat. In appendix
\ref{appendix:clinching-counter-example} we provide an explicit counter-example
for this case.\\


{\bf Applications to Sponsored Search.}
Online advertisement is a growing business that was worth 25 billion dollars in
2010. It also has become a central piece in
the current internet landscape, since it is the primary way
internet companies monetize
their services. Large
part of this revenue comes from the search advertisement, hence it is not
surprising that is has been extensively studied in the literature. The basic
model of sponsored search ad auctions was proposed simultaneously by Edelman, Ostrovky and
Schwarz \cite{edelman07sellingbillions} and Varian
\cite{Varian06positionauctions}. The authors model the current auction as a
non-truthful mechanism and analyze its equilibrium properties. The social
welfare of such equilibria were studied in \cite{PLT10, Lucier10,
caragiannisetal} and its revenue properties in \cite{LPLT11}.

In search advertising, there are usually multiple keywords and each keyword
has multiple slots associated with it. Most of the previous
work treat auctions for different keywords as being independent, and therefore
focus on a single keyword. A recent paper by Dhangwatnotai \cite{Dhangwatnotai}
approaches the problem of analyzing keyword auctions for multiple keywords, but
it is restricted to a special case of one slot per page. Similarly, the work
of Fiat et al \cite{fiat_clinching} can be seen as an auction for multiple
keywords with only one slot per page.

To the best of our knowledge, our work is the first work to combine multiple slots
per page and multiple keywords. We do so by giving a non-trivial
characterization of the set of all feasible allocations of clicks: we call it the
\emph{AdWords Polytope}. We show it has the structure of a
polymatroid and therefore we can use our result to generate a truthful
Pareto-optimal auction for this setting.\\




{\bf Other Related Work.}
There are two streams of related work. The first, like ours, is on designing
truthful mechanisms when players
have budget constraints with the goal of achieving Pareto-optimal outcomes; for example, 
 \cite{dobzinski_budgets}, and \cite{fiat_clinching}.
Bhattacharya et al \cite{Bhattacharya10} show a budget-monotonicity property
for the clinching auction of \cite{dobzinski_budgets}, therefore arguing that no player can improve his
utility by under-reporting his budget. For the case of unit-demand players, 
Aggrawal et al \cite{Aggarwal09}
design auctions for unit-demand players with budget constraints. 

On the question of maximizing revenue, Borgs et al \cite{Borgs05} gave a
truthful auction whose revenue is asymptotically within a constant factor of the optimal revenue.
These results were improved by Abrams \cite{Abrams06}. Subsequently, Hafalir,
Ravi and Sayedi \cite{Sayedi09} relax the truthfulness requirement, moving to
ex-post Nash equilibrium as a solution concept, and give an auction that, in
equilibrium, has good efficiency and revenue properties. More recently, Pai and
Vohra \cite{pai_vohra} gave a revenue-optimal auction for the Bayesian version
of the problem. We would like to highlight that the above work focused
on the multi-unit setting only.

Our auction also generalizes the ascending auction of Bikhchandani et al
\cite{Bikhchandani11}. The authors consider environments where the set of
allocations is defined by a polymatroid, but don't consider budget constraints.

The second line of related work relates to the practical problem of designing
mechanisms for Ad Auctions. The work of Feldman et al
\cite{feldman08} design an auction for the environment with one keyword and
multiple slots. Their model is, however, different from the standard utilitarian
utility model. Instead of being profit maximizers, the players are clicks
maximizers, i.e., the players want to get as many clicks as possible without
exhausting their budget and without paying more per click than their value,
which is a simpler setting than ours. In order to design their auction, they
describe the structure of the set of possible randomized allocations of players
to slots. We note that the structure they identify is in fact a polymatroid and
use this fact to apply our auction to this setting. We further extend this
characterization to the setting with multiple keywords.

Also for one keyword and multiple slots, Ashlagi et al \cite{Ashlagi} design
an auction for the usual utility model but relax
the truthfulness requirement and get an auction that is Pareto-optimal for all
ex-post Nash equilibria. The main weakness in the setting of \cite{Ashlagi} is 
that the agents are allowed to be allocated only to one slot position for all
the different queries of the given keyword. However, in reality, agents can be
allocated to different slot positions for different queries of a given keyword. 
In the restricted setting of \cite{Ashlagi}, the Pareto-optimality requirement
becomes easier to satisfy.

Independently of our work, Colini-Baldeschi et al \cite{henzinger11} also study
the
problem of designing incentive compatible, individually rational, budget
feasible and Pareto-optimal auctions for sponsored search. The authors present
two auctions satisfying those properties: one for the case with a single keyword
but
multiple slots with different click-through-rates and one for the case of
multiple keywords and multiple slots with homogeneous click-through-rates (i.e.
all slots are identical). 
\\

{\bf  Impossibility results.} 
The impossibility of a Pareto-optimal auction for heterogeneous goods in the
budgeted setting was given in \cite{fiat_clinching}. It remained an open
problem whether an auction was possible if goods where identical, i.e.,
utilities depended only on the number items acquired and not on which items
they were. A very recent result by Lavi and May \cite{Lavi_May} shows an
impossibility result for the case where the valuation can be an arbitrary
function of the number of items - i.e. players are allowed to express
complementarities. Since their setting is more expressive, an impossibility result is
easier. Our impossibility result for multi-unit auctions can be seen as a
stronger version of their result, since we allow players only to express
valuations with diminishing marginals. This came as a surprise to us, since it
was generally believed that such a positive result could be achieved using a
variation of~\cite{dobzinski_budgets}.\\

\comment{
{\bf  Roadmap.} 
In Section \ref{sec:setting_polyhedral_clinching} we present the setting of
polyhedral environments and describe the desideratum in the first part. The rest
of this section described many applications, focusing on the application to
sponsored search. In subsection \ref{subsec:adwords_polytope} we define and
describe the AdWords polytope. Section \ref{sec:clinching_main} we...
}

\columnsversion{\newpage}{}
\section{Auctions for Polyhedral \columnsversion{}{\\}
Environments}\label{sec:setting_polyhedral_clinching}

Consider $n$ players, where player $i$ has a positive value $v_i$ per unit of
some good $g$ 
and a budget of $B_i$. We assume that the valuations are private information of
the players, whereas the budgets are public. We are also given a subset $X
\subseteq \R_+^n$ that defines all the possible ways to allocate the good $g$.
We assume that the subset $X$ is a convex set that is bounded and downward
closed\footnote{This is without loss of genearality if one is allowed to
randomize between outcomes, and can allocate lesser amount of the good $g$ to
any player.}. We will call this set $X$ an {\em environment}. Note that if
player $i$ receives $x_i$ amount of good $g$ and pays $p_i$, her utility $u_i$
is equal to $v_i x_i - p_i$ if $p_i \leq B_i$ and $-\infty$ otherwise. However,
since we will require the mechanism to never charge more than the
budgets, we won't have to deal with the latter case.
Our goal is to design an auction mechanism that elicits valuations $v$ from the
players and outputs 
a feasible allocation
$x(v) \in X$ and a feasible payment vector $p(v) \leq B$
that satisfies the following three properties:  
\begin{itemize}
\item {\em Individual Rationality} (a.k.a. voluntary participation): Each player
has net non-negative utility 
from participating in the auction, i.e., $u_i \geq 0$. 
\item {\em Incentive compatibility} (a.k.a. truthfulness) : It is a
dominant strategy for each player to participate in the auction and report their
true value, i.e., \columnsversion{$v_i x_i(v_i, v_{-i})
- p_i(v_i, v_{-i}) \geq v_i x_i(v'_i, v_{-i}) - p_i(v'_i, v_{-i})$.}{$$v_i
x_i(v_i, v_{-i})
- p_i(v_i, v_{-i}) \geq v_i x_i(v'_i, v_{-i}) - p_i(v'_i, v_{-i}).$$} The
 characterization of single-parameter truthful mechanisms in
\cite{myerson-81, archer01} states that this is equivalent to $x_i$ being a
non-decreasing function of $v_i$ (for a fixed
$v_{-i}$) and payments being calculated by
\columnsversion{
$p_i(v_i, v_{-i}) = v_i x_i(v_i, v_{-i}) - \int_0^{v_i} x_i(u,v_{-i}) du$.}
{$$p_i(v_i, v_{-i}) = v_i x_i(v_i, v_{-i}) - \textstyle\int_0^{v_i}
x_i(u,v_{-i}) du.$$}
\item {\em Pareto-optimality}: An allocation $x(v) \in X$ and payments $
p(v)\leq B$ is Pareto-optimal if and only if there is no alternative allocation
and payments where all players' 
utilities and the revenue of the auctioneer do not decrease,
and at least one of them increases. In
other words, there is no alternative $(x',p')$ such
that $v_i x'_i - p'_i \geq v_i x_i(v) - p_i(v)$, $\sum_i p'_i \geq \sum_i
p_i(v)$ and at least one of those inequalities is strict.
\end{itemize}

Next we prove a useful lemma about the structure of Pareto-optimal outcomes.

\begin{lemma}\label{charac-lemma}
 A feasible outcome $(x,p)$, i.e. $x\in X$ and $p \leq B$, is Pareto-optimal iff
there is no $d \in \R^n$ in a dominated direction at $x$ (i.e. $x+ d \in
X^0 := \{x' \in X; \exists \hat{x} \in X \setminus x', \hat{x} \geq x'\}$)
such that
$d^t v \geq 0$ and $d_i \leq 0$ for all $i$ that have $p_i = B_i$.
\comment{
 A feasible outcome $(x,p)$, i.e. $x\in X$ and $p \leq B$, is Pareto-optimal iff
\begin{enumerate}
\item $x$ is on the Pareto-boundary of $X$.
\item If there exists a feasible direction $d \in \R^n$, i.e. $x+d \in X$, such
that $d.v = 0$ and  $x+d$ is not on the Pareto-boundary of $X$, then $\exists i$
such that $d_i > 0 $ and $p_i = B_i$. 
\end{enumerate}
}
\end{lemma}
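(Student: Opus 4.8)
The plan is to prove the contrapositive in both directions, translating "Pareto-optimal" into the geometric statement about feasible directions. The key observation is that any alternative allocation $x' \in X$ can be written as $x' = x + d$ for some $d \in \R^n$, and since payments can be redistributed freely subject to $p' \le B$ and $\sum_i p'_i \ge \sum_i p_i$, the only real constraint coupling the allocation change to the payment change is that a player $i$ already at her budget cap ($p_i = B_i$) cannot be charged more, so her utility can only be maintained if $d_i \le 0$.

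First I would show that if such a dominated direction $d$ exists --- with $x + d \in X^0$, $d^t v \ge 0$, and $d_i \le 0$ for all $i$ with $p_i = B_i$ --- then $(x,p)$ is not Pareto-optimal. Since $x+d \in X^0$, there is $\hat{x} \in X$ with $\hat{x} \ge x+d$ and $\hat{x} \ne x+d$; by downward-closedness and convexity of $X$ we may in fact push to a point $x'' \ge x+d$ that strictly dominates $x+d$ coordinatewise in at least one coordinate, or simply use $\hat x$ directly. The idea is to exhibit an alternative $(x', p')$ that weakly Pareto-dominates $(x,p)$ with one strict inequality. I would set $x' = \hat x$ and then choose payments: for each $i$ I want $v_i x'_i - p'_i \ge v_i x_i - p_i$, i.e. $p'_i \le p_i + v_i(x'_i - x_i)$. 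Summing the slack $\sum_i v_i(x'_i - x_i) = v^t(\hat x - x) \ge v^t d \ge 0$ (using $\hat x \ge x+d$ and $v \ge 0$) shows there is enough total "room" to keep everyone's utility at least as high while not decreasing $\sum_i p_i$; the constraint $p'_i \le B_i$ is handled because for players with $p_i = B_i$ we have $d_i \le 0$ and can push further down to $x'_i$ only by the dominating move, but more carefully one keeps $p'_i = p_i = B_i$ for those players and distributes all the surplus among the others. The strict improvement comes either from the strict coordinate in $\hat x$ (some player gets strictly more allocation, hence strictly more utility if we don't raise her payment) or, if $v^t d > 0$, from a strict increase in total revenue. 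The main obstacle here is bookkeeping: carefully splitting the utility surplus so that budget caps are respected and at least one inequality is strict; one must treat separately the case $v^t d > 0$ (dump surplus into revenue) and $v^t d = 0$ with a strict allocation coordinate (dump surplus into that player's utility).

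Conversely, suppose $(x,p)$ is not Pareto-optimal, witnessed by $(x', p')$ with $v_i x'_i - p'_i \ge v_i x_i - p_i$ for all $i$, $\sum_i p'_i \ge \sum_i p_i$, and at least one inequality strict. Set $d = x' - x$, so $x + d = x' \in X$. I need to verify $x' \in X^0$, $d^t v \ge 0$, and $d_i \le 0$ whenever $p_i = B_i$. For the budget condition: if $p_i = B_i$ then $p'_i \le B_i = p_i$, so from $v_i x'_i - p'_i \ge v_i x_i - p_i$ we get $v_i x'_i \ge v_i x_i + (p'_i - p_i) \ge \cdots$ --- actually $v_i x'_i - p'_i \ge v_i x_i - p_i$ gives $v_i(x'_i - x_i) \ge p'_i - p_i$, which alone does not force $d_i \le 0$, so I would instead argue that if some $d_i > 0$ for a capped player we can locally modify $(x',p')$ to decrease $x'_i$ back toward $x_i$ and redistribute, contradicting minimality of the witness, or choose the witness to already have this property --- this is the delicate point. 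For $d^t v \ge 0$: sum $v_i(x'_i - x_i) \ge p'_i - p_i$ over $i$ to get $v^t d \ge \sum_i(p'_i - p_i) \ge 0$. For $x' \in X^0$: if $x'$ were on the Pareto-boundary with no dominating point, one shows the witness could not have a strict inequality without violating budgets, again by a redistribution argument. The hard part throughout is the reduction from an arbitrary Pareto-improving pair to one whose allocation difference is a "clean" dominated direction satisfying the sign conditions on capped coordinates; I expect to handle this by a short exchange/normalization argument showing we may assume the witness is chosen to keep capped players' payments at $B_i$ and their allocations at most $x_i$.
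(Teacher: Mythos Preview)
Your skeleton is right, and the forward direction (dominated direction $\Rightarrow$ not Pareto-optimal) is essentially the paper's argument, though the paper does it more cleanly by setting $p'_i = p_i + v_i d_i$ so that every player's utility is \emph{exactly} preserved, total payment does not decrease since $d^t v \geq 0$, and then the dominating point $\hat{x} \geq x+d$ is used only at the very end to hand someone extra allocation for free. You don't need to split into the cases $v^t d > 0$ versus $v^t d = 0$.

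The converse direction has a genuine gap. You correctly derive $v^t d \geq 0$ by summing, but you are right that this does \emph{not} force $d_i \leq 0$ for budget-capped players, and your proposed fix via ``minimality of the witness'' or an unspecified exchange argument is not a proof. The paper's move here is concrete and you are missing it: if some capped player has $d_i > 0$, \emph{truncate} by setting $d'_i = \min\{0, d_i\}$ for capped $i$ and $d'_i = d_i$ otherwise. Then $x + d' \leq x + d$ with strict inequality in at least one coordinate, so $x+d' \in X^0$ automatically. The work is to recheck $(d')^t v \geq 0$: one sets $x'' = x + d'$, keeps $p'' = p'$, observes that every player (capped or not) still satisfies $v_i x''_i - p''_i \geq v_i x_i - p_i$, and sums together with $\sum p'' = \sum p' \geq \sum p$.

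You are also missing how to land in $X^0$ in the other case, where all capped players already have $d_i \leq 0$. Here $x+d = x'$ need not be dominated. The paper first notes that summing the Pareto-improvement inequalities gives the \emph{strict} inequality $v^t d > 0$; this slack lets you shave a small amount off some coordinate with $d_i > 0$, producing $d'$ with $x + d' \leq x + d$, $x+d' \neq x+d$ (hence $x+d' \in X^0$), while still $(d')^t v \geq 0$. Your proposal never establishes the strict inequality and so has no room for this perturbation.
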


\begin{proof}
 In order to show the $\Rightarrow$ direction, assume there is a dominated
direction $d$ such that $d^t v \geq 0$ and $d_i \leq 0$ for all $i$ that have
$p_i = B_i$. Then define $x'_i = x_i+d_i$ and $p'_i = p_i + v_i d_i$
and we obtain same utilities and the total payment didn't decrease, since
$\sum_i p'_i - \sum_i p_i = d^t v \geq 0$. Now, since $x+d$ is not in the
boundary of the polytope, we can give some more of good $g$ to some players
without charging extra payments and increase their utility. Therefore $(x,p)$
is not Pareto-optimal.

For the $\Leftarrow$ direction, suppose $(x',p')$ is a Pareto
improvement. Define $d = x'-x$. First we claim that $d^t v > 0$. By the
definition of Pareto optimality, $v_i x_i - p_i \leq v_i x'_i - p'_i$,
$\sum_i p_i \leq \sum_i p'_i$ and at least one inequality is strict. Summing
them all, we get that $\sum_i v_i x_i < \sum_i v_i x'_i$, which implies that
$d^t v > 0$. Now, consider two cases:

If $x'_i \leq x_i$ for all $i$ with $p_i = B_i$, then $d_i \leq 0$ for all
such $i$. Simply pick some $i$ for which $d_i > 0$ and decrease
$d_i$ slightly. The result will be a dominated direction $d'$ (since
$x+d' \leq x+d$ and $x+d' \neq x+d$) with ${d'}^t v > 0$ and $d'_i \leq 0$ for
all $i$ with $p_i = B_i$.

If $x'_i > x_i$ for some $i$ with $p_i = B_i$. Then define $d'$ such that $d'_i
= d_i$ if $p_i < B_i$ and $d'_i = \min\{0, d_i\}$ if $p_i = B_i$. Now, consider
$x'' = x + d'$ and $p'' = p'$. Clearly $d'$ is a dominated direction (since
$x+d' \leq x+d$ and $x+d' \neq x+d$) and $d'_i \leq 0$ for $p_i = B_i$ by
definition. Now, we will show that ${d'}^t v \geq 0$. Notice that $\sum_i p''_i
= \sum_i p'_i \geq \sum_i p_i$. Also, except for $i$ with $x'_i > x_i$ and $p_i
= B_i$ we have: $v_i x''_i - p''_i = v_i x'_i - p'_i \geq v_i x_i - p_i$. For
theremaining $i$, one has: $v_i x''_i - p''_i = v_i x_i - p'_i \geq v_i x_i -
p_i$. Summing all those inequalities, we get $\sum_i v_i x''_i \geq \sum_i v_i
x_i$, implying that ${d'}^t v \geq 0$.
\end{proof}

\comment{
One implication of the above characterization is that, unlike welfare or revenue efficiency, 
Pareto-optimality is not a convex property, i.e., given two outcomes $(x,p)$ and
$(x',p')$ which are Pareto-optimal, a convex combination $(\alpha x + (1-\alpha)
x', \alpha p + (1-\alpha) p')$ is  not necessarily Pareto-optimal. This is easy
to see by Lemma \ref{charac-lemma} that Pareto-optimal outcomes must lie in the
boundary of $X$, and in general a linear combination of points in the boundary
is not in the boundary. This is not the only problem, though, even in the case
$X = \{x \in \R^2_+; x_1 + x_2 \leq 1\}$, a convex combination might not be
Pareto-optimal. In appendix \ref{appendix:non_convex}, we further develop
this point by showing that the family of mechanisms in Dobzinski et al
\cite{dobzinski_budgets} do not form a convex set. This highlights the
difficulty of designing mechanisms with budgets: unlike the quasi-linear
settings, searching for a suitable mechanism is a search in a non-convex space.
}

Another simple observation is that if $(x,p)$ is a Pareto-optimal outcome in which no
budget is fully exhausted, then $x  = \text{argmax}_{x \in X} v^t x$. For
small valuations, any Pareto-optimal mechanism that satisfies individually
rationality cannot exhaust budgets, 
so it must behave like VCG.

\subsection{Polymatroidal environments}\label{subsec:adwords_polymatroid}

Our most interesting applications correspond to settings where the environment is a
packing polytope $P = \{x \in \R^n_+; Ax \leq b\}$ for some $m \times n$ matrix
with $A_{ij} \geq 0$ and $b \in \R^m_+$. We call such environments a polyhedral
environment. Examples of polyhedral environments are ubiquitous in game theory
(see \cite{nguyen07, NV11} for many examples).

A rich subclass of packing polytopes is the class of polymatroids, which are
polytopes that can be written as $P = \{x \in \R^n_+; \sum_{i \in S} x_i \leq
f(S)\}$ where $f:2^{[n]} \rightarrow \R_+$ is a monotone submodular function,
i.e., a function satisfying:
$$f(S \cup T) + f(S \cap T) \leq f(S) + f(T), \forall S,T \subseteq [n]$$
$$f(S) \leq f(T), \forall S \subseteq T \subseteq [n]$$
Such polymatroidal environments generalize  matroid environments.
It is easy to see that all previously studied settings are instances of a
these environments: Dobzinski et al's result \cite{dobzinski_budgets}
corresponds
to the \emph{uniform matroid} and Fiat et al's result \cite{fiat_clinching}
corresponds
to the \emph{transversal matroid}.  Bikhchandani et al \cite{Bikhchandani11}
give many examples of polymatroid environments  including scheduling with due
dates, network 
planning, pairwise kidney exchange, spatial markets, bandwidth markets and 
multi-class queueing systems \cite{Bikhchandani11}. In section
\ref{sec:applications} we discuss some of those applications in more depth and
present a novel application of polymatroids to sponsored search auctions.

\section{Clinching Auction for \columnsversion{}{\\}
Polymatroids}\label{sec:clinching_main}

In this section, we describe our main positive result, i.e.,  an auction with 
all the desirable properties for polymatroidal environments. This auction 
 is based on the clinching auctions framework of Ausubel
\cite{Ausubel_multi}. \comment{
Although following Ausubel's framework, we need to invent 
the main component of the mechanism, i.e., the {\em clinching step}
in a way that copes with the allocation constraints.
As we will see, this auction is well-defined for all polyhedral
constraints $P$, but only for polymatroids it is guaranteed to always produce
Pareto-optimal outcomes. Moreover, we need the polymatroidal structure to make
the most critical step (clinching) polynomial-time computable.
}
Before we study more complicated constraints, let's
recall the clinching auction \cite{Ausubel_multi, dobzinski_budgets} for the
multi-unit setting, i.e., $P = \{x; \sum_i x_i \leq s_0\}$. We begin by
setting the supply $s = s_0$ and $B_i$ the budget available to each agent. We
maintain a price clock $p$ that begins at zero and gradually ascends. For each
price $p$, the agents are asked how much of the good they demand at the current
price. Their demand will be $d_i = \frac{B_i}{p}$ (how much they can afford
with their remaining budget) if $p \leq v_i$, and zero otherwise (the case
where the price exceeds their marginal value). Then agent $i$ is able to
\emph{clinch} an amount $\delta_i = [s - \sum_{j\neq i} d_j]^+$, which is the
minimum amount we can give to player $i$ while we are still able to meet the
aggregate demands of the other players. \emph{Clinching} means that player $i$
gets $\delta_i$ amounts of the good, and $\delta_i p$ is subtracted from his
budget. The price increases and we repeat the process until the supply is
completely sold.

The heart of the mechanism is the clinching step and generalizing it for  more
complicated environments involves various challenges: how does one
define the notion of supply and aggregate demand \comment{from a set of agents}
(it is not a
single number anymore, since there are constraints restraining the
possible allocation)? Finally, we need to make sure the clinching step doesn't
violate feasibility. \\

{\bf Clinching Framework.}
First, in Algorithm \ref{polyhedral-clinching-auction}, we 
consider a slightly modified
version of the clinching framework: we maintain a price vector $p \in \R^n_+$
and increase the prices one player at a time. The vector $\rho \in \R^n_+$
contains the promised allocations in each step and its final value is the
final allocation of the mechanism. The payment of each agent is the total
amount that was deducted from their budget during the execution~\footnote{Note
that the details of the main procedure {\bf clinch} is not described in this algorithm.}.

\begin{algorithm}[h]
 \caption{Polyhedral Clinching Auction}
  \textbf{Input:} $P, v_i, B_i$\\
  $p_i = 0$, $\rho_i = 0$, $\hat{i}=1$ \\
  \textbf{do} \\
  \algspace $d_i = B_i / p_i$ if $p_i < v_i$ and $d_i = 0$ otherwise, \\
  \algspace $\delta = \text{\textbf{clinch}}(P,\rho,d)$, \\
   \algspace $\rho_i = \rho_i + \delta_i$, \quad $B_i = B_i - p_i \delta_i$, \\
   \algspace  $d_i = B_i / p_i$ if $p_i < v_i$ and $d_i = 0$ otherwise, \\
  \algspace $p_{\hat{i}} = p_{\hat{i}} + \epsilon$ ,\quad $\hat{i} = \hat{i}
+ 1 \mod n$ \\
  \textbf{while} $d \neq 0$ 
\label{polyhedral-clinching-auction}
\end{algorithm} 

For each price, we
calculate the demand $d_i$ of each player, which is the amount of the good they
would like to get for price $p$. Then we invoke a procedure called
\textbf{clinch} which decides the amount to grant to each player at that price.
We update the promises, remaining budget and adjust demands\footnote{Clearly
updating demands is not necessary at this point, but we do in order to make
the analysis cleaner.}. Then we increase the price.

In order to define clinching, we need to define analogues of the remnant supply
and to demands for the case where the the environment is a generic polytope.
Instead of being a single number as in the multi-unit auctions case, the
remnant supply and aggregate demands will be polytopes:

\begin{defn}[aggregate demands]\label{remnant_supply_defn}
Given $P$, a vector of promised allocation $\rho \in P$, the remnant supply is
described by the polytope $P_\rho = \{x \geq 0; \rho + x \in P\}$. If $d \in
\R^n_+$ is the demand vector, the aggregate demand is defined by $P_{\rho,d} =
\{x \geq 0; \rho+x \in P, x\leq d\}$.
\end{defn}

In the multi-unit auctions case, the amount player $i$ clinched was the maximum
amount we could give him while still being able to meet the demands of the
other players. We generalize this notion to polyhedral environments (the
concepts are depicted in Figure \ref{fig:clinching}):

\begin{defn}[polyhedral clinching]\label{polyhedral_clinching_dfn}
The {\em demand set} of
players $[n]\setminus i$ if one allocates $x_i$ to player $i$ is represented by
the polytope $P_{\rho,d}^i(x_i) = \{x_{-i} \in \R_+^{[n]\setminus i}; (x_i,
x_{-i}) \in P_{\rho,d}\}$. Since $P$ is a
packing polytope, clearly $P_{\rho,d}^i(x_i) \supseteq P_{\rho,d}^i(x'_i)$ if
$x_i \leq x'_i$. The amount player $i$ is able to \textbf{clinch} is the
maximum amount we can give him without making any allocation for the other
players infeasible. More formally,  $\delta_i = \sup\{x_i \geq 0;
P_{\rho,d}^i(x_i) = P_{\rho,d}^i(0)\}$.
\end{defn}

\columnsversion{
\begin{figure}
\centering
\includegraphics{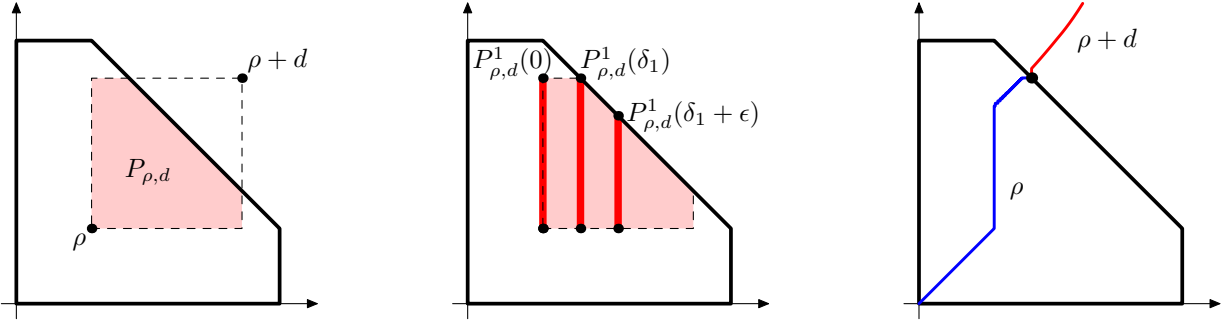}
\caption{Illustration of polyhedral clinching: the first two figures depict
the polytopes defined in Definitions \ref{remnant_supply_defn} and
\ref{polyhedral_clinching_dfn}. The third depicts the mechanism running on
polytope $P$: during the execution, the vector $\rho$ walks inside the polytope
(blue line) and the vector $\rho+d$ walks outside it (red line), until they meet
at the boundary. The point they meet corresponds to the final allocation.}
\label{fig:clinching}
\end{figure}
}
{
\begin{figure*}
\centering
\includegraphics{pictures/polymatroid_clinching1.mps}
\caption{Illustration of polyhedral clinching: the first two figures depict
the polytopes defined in Definitions \ref{remnant_supply_defn} and
\ref{polyhedral_clinching_dfn}. The third depicts the mechanism running on
polytope $P$: during the execution, the vector $\rho$ walks inside the polytope
(blue line) and the vector $\rho+d$ walks outside it (red line), until they meet
at the boundary. The point they meet corresponds to the final allocation.}
\label{fig:clinching}
\end{figure*}
}

We need to ensure that the clinching step is well-defined, i.e., that after
clinching is performed, the vector of promised allocations is still feasible.
This is done by the following lemma:

\begin{lemma}\label{lemma_feasibility}
 For each step of the auction above, if $\rho \in P$, then $\rho+\delta \in P$.
\end{lemma}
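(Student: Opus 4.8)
The plan is to prove the stronger statement by a coordinate-by-coordinate induction. For $k = 0, 1, \ldots, n$ let $\delta^{(k)} \in \R^n_+$ be the vector with $\delta^{(k)}_i = \delta_i$ for $i \le k$ and $\delta^{(k)}_i = 0$ for $i > k$, so that $\delta^{(0)} = 0$ and $\delta^{(n)} = \delta$. I will show $\rho + \delta^{(k)} \in P$ for every $k$ by induction on $k$; the case $k = n$ is the lemma.

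Before the induction I would record two elementary facts about the clinching operation of Definition \ref{polyhedral_clinching_dfn}. First, since $P$ is closed and the sets $P_{\rho,d}^i(x_i)$ shrink as $x_i$ grows, the supremum defining $\delta_i$ is attained, i.e. $P_{\rho,d}^i(\delta_i) = P_{\rho,d}^i(0)$; equivalently, for every $x_{-i} \in P_{\rho,d}^i(0)$ one has $(\delta_i, x_{-i}) \in P_{\rho,d}$ and hence $\rho + (\delta_i, x_{-i}) \in P$. Second, $\delta_i \le d_i$: otherwise $P_{\rho,d}^i(\delta_i) = \emptyset$, whereas $0 \in P_{\rho,d}^i(0)$ because $\rho \in P$ and $0 \le d$, a contradiction.

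The base case $k = 0$ is the hypothesis $\rho \in P$. For the inductive step, assume $\rho + \delta^{(k-1)} \in P$ and consider the vector $x_{-k} := (\delta^{(k-1)})_{-k}$, whose entries are $\delta_j$ for $j < k$ and $0$ for $j > k$. I claim $x_{-k} \in P_{\rho,d}^k(0)$: indeed $\rho + (0, x_{-k}) = \rho + \delta^{(k-1)} \in P$ by the inductive hypothesis, and $x_{-k} \le d_{-k}$ because $\delta_j \le d_j$ for $j < k$ by the second fact above. Applying the first fact to player $k$ with this $x_{-k}$ yields $\rho + (\delta_k, x_{-k}) \in P$, and $(\delta_k, x_{-k})$ is precisely $\delta^{(k)}$. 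This closes the induction, giving $\rho + \delta = \rho + \delta^{(n)} \in P$.

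I do not expect a genuine obstacle here; the only points that need care are the bookkeeping of which coordinate is being updated at each step and the verification that the supremum in the definition of $\delta_i$ is attained, which uses only that $P$ is closed. It is worth remarking that this argument uses neither submodularity nor even that $P$ is a packing polytope --- just that $P$ is a closed set containing $\rho$ --- so the clinching step is well defined and feasibility-preserving in every such environment, the polymatroid structure being needed only later for Pareto-optimality and for making the clinching step polynomial-time computable.
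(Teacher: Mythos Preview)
Your proof is correct and follows essentially the same coordinate-by-coordinate induction as the paper: both argue that the partial sums $\sum_{i\le k}\delta_i\chi^i$ lie in $P_{\rho,d}$, using the defining property $P_{\rho,d}^k(\delta_k)=P_{\rho,d}^k(0)$ to adjoin the next coordinate. You are somewhat more explicit than the paper in separately verifying that the supremum in Definition~\ref{polyhedral_clinching_dfn} is attained and that $\delta_i\le d_i$, and your closing remark that the argument needs only closedness of $P$ (not the polymatroid structure) matches the paper's own observation after Lemma~\ref{almost_all_properties_lemma}.
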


\begin{proof}
 Let $\chi^i$ be the $i$-th coordinate vector. Note that $\delta_1 \chi^1 \in
P_{\rho,d}$ by definition of $\delta$. Now, notice that  $\delta_1 \chi^1 \in
P^2_{\rho,d}(0) = P^2_{\rho,d}(\delta_2)$, so: $\delta^1 \chi_1 + \delta_2
\chi^2 \in P_{\rho,d}$. By induction, we can show that $\sum_{i=1}^j \delta_i
\chi^i \in P_{\rho,d}$. The induction is easy: $\sum_{i=1}^j \delta_i
\chi^i \in P_{\rho,d}^{j+1}(0) = P_{\rho,d}^{j+1}(\delta_{j+1})$, so
$\sum_{i=1}^{j+1} \delta_i \chi^i \in P_{\rho,d}^{j+1}$. 
\end{proof}

This auction is clearly truthful, since each player $i$ reports only $v_i$, and she
can stop her participation earlier (which she doesn't want, since she will
potentially miss items she are interested in) or later (which will potentially give 
her items
for a price higher than her valuation). It is also individually rational,
since players only get items for prices below their valuation and respect
budgets by the definition. Notice that those facts are true regardless of the
trajectory of the price vector: any process that increases prices (in a
potentially non-uniform way) has this property.

\begin{lemma}\label{almost_all_properties_lemma}
 The auction in Algorithm \ref{polyhedral-clinching-auction} 
 along with the clinching step described in
 Definition~\ref{polyhedral_clinching_dfn} 
 is truthful, individually-rational and budget-feasible.\\
\end{lemma}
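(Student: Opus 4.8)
The plan is to prove the three properties separately, largely by reducing to the single-parameter characterization recalled in the setup and to Lemma~\ref{lemma_feasibility}.

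\textbf{Budget-feasibility.} First I would observe that whenever an agent clinches $\delta_i$ at price $p_i$, the update $B_i \leftarrow B_i - p_i\delta_i$ keeps $B_i \geq 0$: indeed at the moment of clinching the demand $d_i$ was set to $B_i/p_i$ (when $p_i < v_i$, otherwise $d_i=0$ and $\delta_i \le d_i = 0$), and the aggregate-demand polytope $P_{\rho,d}$ imposes $x \le d$, so $\delta_i \le d_i = B_i/p_i$, i.e. $p_i\delta_i \le B_i$. Hence the remaining budget never goes negative, and the total payment of agent $i$, being the sum of the deductions $p_i\delta_i$ over the run, is at most the initial $B_i$. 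Termination / finiteness of the total clinched amount follows because $\rho$ stays in the bounded polytope $P$ by Lemma~\ref{lemma_feasibility}, and once all prices exceed all valuations every $d_i=0$, so the loop exits.

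\textbf{Individual rationality.} Each unit agent $i$ clinches is acquired at the current price $p_i$, and clinching only happens while $d_i > 0$, which requires $p_i < v_i$. So every unit is bought at a price strictly below $v_i$; summing, $p_i = \sum (\text{price at clinch})\cdot \delta_i < v_i \rho_i$, giving $u_i = v_i\rho_i - p_i > 0 \ge 0$. Combined with budget-feasibility (so the payment is finite and $\le B_i$), the agent's utility is genuinely non-negative rather than $-\infty$.

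\textbf{Truthfulness.} Here I would invoke the Myerson/Archer characterization quoted in the preamble: it suffices to show that the final allocation $x_i(v_i,v_{-i})$ is non-decreasing in $v_i$ and that the payment equals $v_i x_i - \int_0^{v_i} x_i(u,v_{-i})\,du$. The clean way is to note that, fixing $v_{-i}$, the entire trajectory of prices, promised allocations $\rho$, demands, and clinched amounts for players $j \ne i$ up to the price clock reaching level $t$ is \emph{identical} for any two reports $v_i, v_i' \ge t$ of player $i$ --- because player $i$'s behaviour (demand $B_i/p_i$, and what it can clinch) depends on $v_i$ only through the threshold test $p_i < v_i$, so as long as the clock has not passed either threshold the two runs coincide. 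Consequently a higher report only (weakly) prolongs $i$'s participation and can only add clinched units, giving monotonicity; and the per-unit prices at which $i$ clinches are exactly the prices on $i$'s clock, which are determined by the common run and not by $v_i$ (beyond the stopping time), so the payment is the ``integral'' of the allocation curve, matching Myerson's formula. This is the standard argument that ascending clinching auctions are truthful, and it is already sketched informally in the paragraph preceding the lemma (``she can stop her participation earlier\dots or later\dots'').

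The main obstacle I anticipate is making the truthfulness argument fully rigorous in the \emph{discretized} auction (price increments of $\epsilon$, round-robin over $\hat\imath$): one must be careful that the coupling between the $v_i$- and $v_i'$-runs is exact, i.e. that player $i$'s demand update and clinch in a given iteration never feed back into other players' state except through quantities that are the same in both runs until the clock crosses a valuation threshold. A secondary subtlety is that the characterization of \cite{myerson-81,archer01} is naturally stated for allocation rules that are right-continuous/limits of such; since the paper presumably intends $\epsilon \to 0$, one should either argue directly in the discrete model (allocation is a step function of $v_i$, payment is the corresponding sum, monotone $\Rightarrow$ truthful) or pass to the limit --- but neither is hard, and the feasibility and IR parts are immediate from the observations above.
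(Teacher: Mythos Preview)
Your proposal is correct and follows essentially the same approach as the paper, which in fact gives no formal proof of this lemma at all---only the informal paragraph you cite (``she can stop her participation earlier\dots or later\dots''). Your elaboration (bounding $\delta_i \le d_i$ via the constraint $x \le d$ in $P_{\rho,d}$, noting clinching only occurs while $p_i < v_i$, and the coupling argument for monotonicity) simply fills in the details the paper left implicit; the subtleties you flag about the discrete-$\epsilon$ model are real but, as you note, easily handled by the direct over/under-reporting argument rather than Myerson's integral formula.
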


{\bf Clinching for polymatroids.}
Notice that we haven't used anything from polymatroids yet, so Lemma
\ref{almost_all_properties_lemma} holds for any polytope $P$. However, two
things are left to be shown: (i)
that amount clinched can be computed efficiently and (ii) that the outcome is
Pareto optimal. To show both
of  these properties, we use the fact that $P$ is a polymatroid.

\begin{lemma}\label{eff-clinching}
 If the environment is a polymatroid $P$ defined by a submodular
function $f$,
then the amount player $i$ clinches in Algorithm
\ref{polyhedral-clinching-auction} is given by: $$\delta_i = ( \textstyle\max_{x
\in
P_{\rho,d}} \one^t x ) - ( \textstyle\max_{x \in P_{\rho,d}} \one_{-i}^t x_{-i}
). $$
Moreover, this can be calculated efficiently using submodular minimization.
\end{lemma}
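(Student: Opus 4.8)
The plan is to establish the formula $\delta_i = \max_{x \in P_{\rho,d}} \one^t x - \max_{x \in P_{\rho,d}} \one_{-i}^t x_{-i}$ by showing that the two quantities describe, respectively, the largest total mass that can be distributed inside the aggregate demand polytope, and the largest total mass distributable among the other players alone. The key observation is that $P_{\rho,d} = \{x \ge 0 : \rho + x \in P, x \le d\}$ is itself (the truncation of) a polymatroid: writing $P$ as $\{x \ge 0 : x(S) \le f(S)\}$, the remnant supply $P_\rho$ is the polymatroid with rank function $f_\rho(S) := \min_{T \subseteq S}\bigl(f(T) - \rho(T)\bigr)$ (shifting by a feasible point $\rho$), and further intersecting with the box $\{x \le d\}$ keeps it a polymatroid, since truncating a submodular function by a box is a standard operation. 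This is the structural fact I would isolate first.

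Next I would use the exchange/saturation property of polymatroids. Let $M = \max_{x \in P_{\rho,d}} \one^t x$, attained at some $x^*$, and let $M_{-i} = \max \one_{-i}^t x_{-i}$ over the same polytope, attained at $x^\dagger$ with $x^\dagger_i = 0$ WLOG by monotonicity in the sense that we only care about the other coordinates. The claim $\delta_i = M - M_{-i}$ unpacks into two inequalities. For $\delta_i \ge M - M_{-i}$: I want to show that allocating $x_i = M - M_{-i}$ to player $i$ still leaves $P^i_{\rho,d}(x_i) = P^i_{\rho,d}(0)$, i.e. the set of feasible completions for the others is unchanged. Because $P_{\rho,d}$ is a polymatroid (intersected with a box), for any fixed $x_i \le$ some threshold the set of achievable $x_{-i}$ is $P_{\rho,d}$ with the $i$-th coordinate contracted/deleted, and the threshold below which this deleted polytope does not shrink is exactly the amount of "slack" at coordinate $i$ — which by a direct submodular computation equals $\min_{S \ni i}(f_\rho(S) \wedge d(S)) - (\text{max mass on } S \setminus i)$; summing/choosing the right $S$ this collapses to $M - M_{-i}$. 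For $\delta_i \le M - M_{-i}$: if we could clinch $x_i > M - M_{-i}$ with the others' feasible region unchanged, then we could still route $M_{-i}$ mass among the others on top of $x_i$, giving total mass $> M$ in $P_{\rho,d}$, contradicting the definition of $M$.

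The efficiency claim follows because maximizing $\one^t x$ over a polymatroid is a classical result (the greedy algorithm for $\max\{\one^t x : x(S) \le g(S)\}$ gives $\max = g([n])$ when $g$ is the rank function; more generally over $P_{\rho,d}$ one maximizes the total mass, which equals $\min_S\bigl(f(S) - \rho(S) + d([n] \setminus S)\bigr)$ by polymatroid intersection / Dilworth-type truncation), and this minimum over subsets is exactly a submodular minimization, solvable in polynomial time given a value oracle for $f$; the $\max \one_{-i}^t x_{-i}$ term is the same computation on the contracted ground set $[n] \setminus i$.

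I expect the main obstacle to be the first inequality, $\delta_i \ge M - M_{-i}$: one must argue carefully that giving player $i$ the full amount $M - M_{-i}$ does not shrink the others' feasible completion polytope $P^i_{\rho,d}(x_i)$ at all — not merely that some feasible completion survives, but that every point of $P^i_{\rho,d}(0)$ remains feasible. This requires exploiting the polymatroid exchange property (if a coordinate is not yet saturated, one can push mass into it without disturbing any feasible configuration of the rest), and the cleanest route is to phrase it via the truncated rank function and check that the rank function of the contracted polymatroid on $[n]\setminus i$ is literally unchanged as $x_i$ ranges over $[0, M - M_{-i}]$.
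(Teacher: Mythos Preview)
Your plan is correct and follows essentially the same route as the paper. The paper also first observes that $P_{\rho,d}$ is a polymatroid (with rank function $\hat f(S)=\min_{T\subseteq S}\{f(T)-\rho(T)+d(S\setminus T)\}$, monotonized to $\bar f$), then writes the rank function of the slice $P^i_{\rho,d}(x_i)$ on $[n]\setminus i$ as $\tilde f(S)=\min\{\bar f(S),\bar f(S\cup i)-x_i\}$ and uses the ``equal polymatroids $\Leftrightarrow$ equal rank functions'' fact to conclude $\delta_i=\min_{S\subseteq [n]\setminus i}\bigl(\bar f(S\cup i)-\bar f(S)\bigr)$; submodularity then pins this minimum at $S=[n]\setminus i$, giving $\bar f([n])-\bar f([n]\setminus i)=M-M_{-i}$. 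Your sketch of the hard direction (``check that the rank function of the contracted polymatroid is literally unchanged'') is exactly this computation; the only thing the paper makes more explicit than you do is the one-line use of submodularity to identify where the minimum marginal is attained. Your contradiction argument for $\delta_i\le M-M_{-i}$ is a clean alternative to reading it off the rank-function formula and works for any packing polytope.
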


The main ingredients of the proof are the following two facts about
polymatroids:

\begin{fact}[Schrijver \cite{schrijver-book}, sections
44.1 and 44.4]\label{schrijver-fact} If $P$ is a polymatroid defined by the
submodular
function $f$, then $P_{\rho,d}$ is also
a polymatroid defined by the following submodular function:
$$\hat{f}(S) = \min_{T \subseteq S} \{f(T) - \rho(T) + d(S \setminus T )\}.$$
Notice that $\hat{f}(\cdot)$ might not be monotone. However, $$\bar{f}(S) =
\min_{S' \supseteq S} \hat{f}(S')$$ is a monotone submodular function that
defines the same polymatroid.
\end{fact}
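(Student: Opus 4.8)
The plan is to realize $P_{\rho,d}$ as the outcome of two classical polymatroid operations applied to $P$ — a modular translation (``contraction'' of $P$ along $\rho$) followed by intersection with the box $\{x\le d\}$ (``truncation'') — and then to pass to a monotone closure. Throughout, for a set function $g$ with $g(\emptyset)\ge 0$ I write $P_+(g)=\{x\ge 0:\ x(S)\le g(S)\ \forall S\subseteq[n]\}$, so that $P=P_+(f)$; I also normalize $f(\emptyset)=0$, which does not change $P$. For the translation step I would note that, since $\rho\in P$ (as required in Definition~\ref{remnant_supply_defn}), the function $g_0:=f-\rho$ satisfies $g_0(S)\ge 0$ for every $S$, and it is submodular because $f$ is submodular and $\rho(\cdot)$ is modular (subtracting a modular function preserves submodularity, though in general it destroys monotonicity). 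Unwinding the definition, the condition $\rho+x\in P$ is exactly $x(S)\le f(S)-\rho(S)$ for all $S$, so $\{x\ge 0:\ \rho+x\in P\}=P_+(g_0)$.

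For the box step I would prove $P_+(g_0)\cap\{x\le d\}=P_+(\hat f)$ by two inclusions. If $x\ge 0$, $x\le d$, and $x\in P_+(g_0)$, then for any $S$ and any $T\subseteq S$, writing $x(S)=x(T)+x(S\setminus T)\le g_0(T)+d(S\setminus T)$ and minimizing over $T$ gives $x(S)\le\hat f(S)$; conversely, putting $T=S$ in the minimum shows $\hat f(S)\le g_0(S)$, and putting $S=\{i\}$, $T=\emptyset$ shows $\hat f(\{i\})\le d_i$, so $x\in P_+(\hat f)$ recovers both $x\in P_+(g_0)$ and $x\le d$. Since $P_{\rho,d}=\{x\ge 0:\ \rho+x\in P,\ x\le d\}$, this identifies $P_{\rho,d}=P_+(\hat f)$. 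To see $\hat f$ is submodular I would substitute $d(S\setminus T)=d(S)-d(T)$ to get $\hat f(S)=d(S)+\min_{T\subseteq S}(f-\rho-d)(T)$, and use that $k:=f-\rho-d$ is submodular and that the down-closure $S\mapsto\min_{T\subseteq S}k(T)$ is submodular — for the latter, given $A,B$ with minimizers $T_A\subseteq A$, $T_B\subseteq B$, submodularity of $k$ gives $k(T_A)+k(T_B)\ge k(T_A\cup T_B)+k(T_A\cap T_B)$ while $T_A\cup T_B\subseteq A\cup B$ and $T_A\cap T_B\subseteq A\cap B$. Adding back the modular term $d(S)$ keeps submodularity, $\hat f(\emptyset)=0$, and $P_{\rho,d}$ is bounded (it lies in the translate $P-\rho$ of the bounded set $P$) and downward closed (since $P$ is), so $P_{\rho,d}$ is a genuine polymatroid.

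Finally, for $\bar f(S)=\min_{S'\supseteq S}\hat f(S')$: it is monotone by construction; it is submodular by the mirror-image ``up-closure'' argument (given $A,B$ with minimizers $A'\supseteq A$, $B'\supseteq B$, submodularity of $\hat f$ together with $A'\cup B'\supseteq A\cup B$ and $A'\cap B'\supseteq A\cap B$ does it); and it defines the same polytope, since $\bar f\le\hat f$ yields $P_+(\bar f)\subseteq P_+(\hat f)$, while if $x\in P_+(\hat f)$ then $x\ge 0$ forces $x(S)\le x(S')$ for $S\subseteq S'$, hence $x(S)\le\min_{S'\supseteq S}\hat f(S')=\bar f(S)$.

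I expect the only real obstacle to be the submodularity bookkeeping: keeping straight which operations (subtracting a modular function, infimal convolution with a modular function, and the down/up closures above) preserve submodularity while possibly breaking monotonicity, and then confirming that a nonnegative-but-not-monotone submodular $g$ still has $P_+(g)$ equal to the polymatroid of its monotone closure. Everything else is a mechanical unwinding of the definitions; alternatively, one can simply invoke these as the contraction and truncation operations on polymatroids recorded in Schrijver \cite{schrijver-book}, \S\S44.1 and 44.4.
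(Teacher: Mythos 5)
Your proposal is correct. The paper offers no proof of this Fact at all --- it simply cites Schrijver \S\S44.1 and 44.4 --- and what you have written is a complete, self-contained verification of precisely the operations that citation points to: contraction of $P$ by the modular function $\rho$, truncation by the box $\{x\le d\}$ (whose rank function is the infimal convolution $\hat f$), and passage to the monotone closure $\bar f$. All the individual steps check out, including the two places that actually need care: the reverse inclusion $P_+(\hat f)\subseteq P_+(g_0)\cap\{x\le d\}$ (which uses $T=S$ and $T=\emptyset$ with the normalization $f(\emptyset)=0$), and the preservation of submodularity under the down- and up-closures via the union/intersection of minimizers. So your argument fills in, rather than deviates from, what the paper delegates to the reference.
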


\begin{fact}\label{equal-polymatroids}
Given two monotone submodular functions $f, \tilde{f}$, then the
polymatroids $P, \tilde{P}$ defined by them are equal iff the functions are
equal. The $(\Leftarrow)$ direction is trivial. For the other direction, notice
that
if $f(S) < \tilde{f}(S)$ say for $S = \{1,\hdots, i\}$, notice that the point
$x$ such that $x_j = \tilde{f}(\{1..j\})-\tilde{f}(\{1..j-1\})$ for $j \leq i$
and zero otherwise is such that $x \in \tilde{P} \setminus P$.
\end{fact}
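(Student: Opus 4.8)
The plan is to reduce the nontrivial direction ($P=\tilde P \Rightarrow f=\tilde f$) to the standard fact that, for a monotone submodular function $f$ normalized so that $f(\emptyset)=0$, the value $f(S)$ is recovered from the polytope $P$ via $f(S)=\max_{x\in P}\sum_{j\in S}x_j$; once this is available for both $f$ and $\tilde f$, equality of the polytopes forces $f(S)=\tilde f(S)$ for every $S\subseteq[n]$. The $(\Leftarrow)$ direction is immediate from the definition $P=\{x\ge 0:\ x(S)\le f(S)\ \forall S\}$. First I would fix the normalization: we may assume $f(\emptyset)=\tilde f(\emptyset)=0$, since monotonicity gives $f(\emptyset)\le f(S)$ for all $S$ and the constraint $x(\emptyset)\le f(\emptyset)$ influences $P$ only when $f(\emptyset)<0$, which is impossible for a nonempty polymatroid; this is the usual convention for polymatroids.

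The core step is the greedy/exchange claim. Fix $S$ and, after relabeling, write its elements as $\{1,\dots,i\}$. Let $x$ be the point with $x_j=\tilde f(\{1,\dots,j\})-\tilde f(\{1,\dots,j-1\})$ for $j\le i$ and $x_j=0$ for $j>i$. I would check three things: $x\ge 0$, which follows from monotonicity of $\tilde f$; $\sum_{j\in S}x_j=\tilde f(S)$, which is a telescoping sum using $\tilde f(\emptyset)=0$; and $x\in\tilde P$. The last is the only real work. For arbitrary $T$, since $x$ vanishes outside $S$ we may take $T=\{j_1<\dots<j_k\}\subseteq\{1,\dots,i\}$. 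Because $\{j_1,\dots,j_{\ell-1}\}\subseteq\{1,\dots,j_\ell-1\}$ and $j_\ell$ lies in neither set, submodularity in its diminishing-marginals form gives $\tilde f(\{1,\dots,j_\ell\})-\tilde f(\{1,\dots,j_\ell-1\})\le \tilde f(\{j_1,\dots,j_\ell\})-\tilde f(\{j_1,\dots,j_{\ell-1}\})$; summing over $\ell$ and telescoping yields $x(T)\le\tilde f(T)$, so $x\in\tilde P$.

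With this in hand I finish by contradiction: suppose $f\ne\tilde f$ but $P=\tilde P$, and pick $S$ with $f(S)<\tilde f(S)$, swapping the roles of $f$ and $\tilde f$ if the strict inequality goes the other way. Relabel $S=\{1,\dots,i\}$ and build the greedy point $x$ for $\tilde f$ as above. Then $x\in\tilde P$, yet $\sum_{j\in S}x_j=\tilde f(S)>f(S)$, so $x$ violates the defining inequality $x(S)\le f(S)$ of $P$; hence $x\in\tilde P\setminus P$, contradicting $P=\tilde P$. The main obstacle is precisely the feasibility step $x\in\tilde P$, i.e. verifying all remaining linear constraints at the greedy vertex; the rest is monotonicity bookkeeping and a telescoping identity. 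As a concluding remark I would note that the same point shows directly $\max_{x\in P}x(S)=f(S)$, which gives the slicker argument sketched at the outset.
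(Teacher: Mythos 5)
Your proposal is correct and follows essentially the same route as the paper: the paper's (very terse) argument is exactly the greedy point $x_j=\tilde f(\{1..j\})-\tilde f(\{1..j{-}1\})$, and you supply the one verification the paper leaves implicit, namely that this point lies in $\tilde P$ via the diminishing-marginals form of submodularity and a telescoping sum. The normalization remark about $f(\emptyset)=0$ is a sensible (and standard) clarification, but does not change the substance of the argument.
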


\begin{proofof}{Lemma \ref{eff-clinching}}
Using the fact \ref{schrijver-fact}, we know that $P^i_{\rho,d}(x_i)$ is also a
polymatroid defined over $[n]\setminus i$ by the function $\tilde{f}(S) = \min\{
\bar{f}(S), \bar{f}(S\cup i) - x_i\}$. Now, we use Fact
\ref{equal-polymatroids} to see that $P^i_{\rho,d}(x_i) = P^i_{\rho,d}(0)$ iff
$\bar{f}(S) \leq \bar{f}(S\cup i) - x_i, \forall S \subseteq [n] \setminus i$.
So, $\delta_i = \min_{S \subseteq [n] \setminus i} \bar{f}(S\cup i) -
\bar{f}(S)$. Since $\bar{f}$ is submodular, the smallest marginal can only be
$$\bar{f}([n]) - \bar{f}([n]\setminus i) = \max\{0, \hat{f}([n]) -
\hat{f}([n]\setminus i) \}$$ which is exactly the expression in the
statement of the lemma. Now, one can easily see that 
evaluating $\hat{f}$ is a submodular minimization
problem.
\end{proofof}

Now we  prove that the outcomes are Pareto-optimal in two steps. The first step
is to characterize Pareto-optimal allocations for polymatroids. This
characterization is stronger than that of Lemma \ref{charac-lemma}, since it
explores the structure of polymatroids. Afterwards, we show that the outcomes of
the chinching auction defined in Algorithm
\ref{polyhedral-clinching-auction} satisfy the two conditions in the
characterization lemma.

In the following, for a vector $x \in \R^n$ and $S \subseteq [n]$ we denote
$x(S) = \sum_{i \in S} x_i$.

\begin{lemma}\label{charac-polymatroid}
For a polymatroidal environment  $P$ defined by a submodular function $f$,
 an allocation $(x,p)$ is Pareto optimal iff:
\begin{enumerate}
 \item All items are sold, i.e., $x([n]) = f([n])$, and
 \item Given a player $i$ with $p_i < B_i$ and player $j$ with $v_j < v_i$, then
there exists a set $S$ such that $x(S) = f(S)$, $i \in S$ and $j \notin S$
\end{enumerate}
\end{lemma}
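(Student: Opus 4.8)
The plan is to derive this from Lemma~\ref{charac-lemma} together with two structural facts about a polymatroid $P$ defined by a monotone submodular $f$. First, the family $\mathcal{F} = \{S : x(S) = f(S)\}$ of \emph{tight} sets is closed under unions and intersections: if $S, S' \in \mathcal{F}$ then submodularity and $x \in P$ give $f(S) + f(S') \ge f(S \cup S') + f(S \cap S') \ge x(S \cup S') + x(S \cap S') = f(S) + f(S')$, forcing equality. Second, since $P$ is downward closed, the set $X^0$ of dominated points is exactly $\{x' \in P : x'([n]) < f([n])\}$ (if the total falls short of $f([n])$, the union of all tight sets of $x'$ is a proper subset of $[n]$, and any coordinate outside it can be increased while staying in $P$). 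Hence, assuming~(1), Lemma~\ref{charac-lemma} reduces to: $(x,p)$ is \emph{not} Pareto optimal iff there is a direction $d$ with $x + d \in P$, $d([n]) < 0$, $d^t v \ge 0$, and $d_i \le 0$ for every budget-saturated $i$. I will also use repeatedly that removing a zero-allocation coordinate from a tight set leaves it tight (by monotonicity of $f$), and that $T_i := \bigcap\{S \in \mathcal{F} : i \in S\}$ is itself tight.

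For necessity: condition~(1) is immediate, since if $x([n]) < f([n])$ then increasing the allocation of a coordinate outside the maximal tight set, with all payments unchanged, strictly helps one player and harms nobody. For~(2), suppose it fails for a non-saturated $i$ and some $j$ with $v_j < v_i$; failure means $j \in T_i$. I would first note $x_j > 0$: if $x_j = 0$ then $T_i \setminus \{j\}$ is a tight set containing $i$ but not $j$, contradicting $j \in T_i$. With $x_j > 0$, exhibit the direction $d = \alpha \chi^i - \beta \chi^j$ where $0 < \beta \le x_j$ and $\beta v_j / v_i < \alpha < \beta$ (a nonempty range since $v_j < v_i$), with both small. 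Feasibility $x + d \in P$ holds because every set containing $i$ but not $j$ is non-tight --- all tight sets through $i$ contain $j$ --- so has slack to absorb $\alpha$, while any set containing both $i$ and $j$ sees a net decrease; moreover $d^t v = \alpha v_i - \beta v_j > 0$, $d([n]) = \alpha - \beta < 0$, and the only positive coordinate of $d$ is the non-saturated $i$. So $d$ witnesses non-optimality.

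For sufficiency, assume~(1) and~(2) and a bad direction $d$; I want a contradiction. The point is to manufacture a tight set that strictly overflows under $d$. Consider $g(t) = d(\{i : v_i \ge t\})$: it is a step function with $\int_0^\infty g(t)\,dt = d^t v \ge 0$ yet $g(t) = d([n]) < 0$ for $t$ near $0$, so there is a threshold $t^*$, which we may take to miss every $v_i$, with $g(t^*) > 0$. Let $A = \{i : v_i \ge t^*\}$, so $d(A) > 0$; in particular some $i \in A$ has $d_i > 0$, which makes $i$ non-saturated. Condition~(2) applied to such an $i$ shows every element of $T_i$ has value $\ge v_i \ge t^*$, hence $T_i \subseteq A$. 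Now let $R = \bigcup\{T_i : i \in A,\, d_i > 0\}$; it is tight (union of tight sets) and satisfies $\{i \in A : d_i > 0\} \subseteq R \subseteq A$. Since $R$ omits from $A$ only coordinates with $d_i \le 0$, we get $d(R) \ge d(A) > 0$, contradicting $d(R) \le 0$ (tightness of $R$).

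The main obstacle is the sufficiency direction, and specifically the passage from an arbitrary bad direction to the single offending tight set $R$: it requires the threshold trick to isolate an upper-value set $A$ on which $d$ is positive, the use of~(2) to confine each high-value player's minimal tight set $T_i$ inside $A$, the union-closure of $\mathcal{F}$ to glue these into $R$, and the sign condition $d_i \le 0$ off $\{d > 0\}$ to get $d(R) \ge d(A)$. The necessity direction, by contrast, is routine once one notices the $x_j = 0$ corner case and the fact that zero coordinates can be stripped from tight sets.
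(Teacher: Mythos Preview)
Your proposal is correct. The necessity direction is essentially the paper's argument, though you handle the ``dominated'' requirement of Lemma~\ref{charac-lemma} more carefully by taking $\alpha < \beta$ (the paper swaps equal amounts between $i$ and $j$, which lands on the boundary rather than in $X^0$, but this is easily patched).

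The sufficiency direction, however, is a genuinely different route. The paper argues directly from an alleged Pareto improvement $(x',p')$: it builds a nested chain $S_1 \subset \cdots \subset S_k = [n]$ of tight sets indexed by the non-saturated players (sorted by value), bounds $\sum_i (p_i - p'_i)$ via a telescoping sum over the layers $T_j = S_j \setminus S_{j-1}$, and then pushes equality back through the chain to conclude $\sum_i v_i x_i \geq \sum_i v_i x'_i$. Your argument instead passes through Lemma~\ref{charac-lemma} to obtain a single bad direction $d$, uses the threshold function $g(t) = d(\{i:v_i\ge t\})$ to locate an upper level set $A$ with $d(A)>0$, and then glues the minimal tight sets $T_i$ of the positive-$d_i$ players inside $A$ into a single tight set $R$ that overflows. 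This is shorter and more conceptual: it replaces the paper's layer-by-layer bookkeeping with one appeal to the lattice property of tight sets. What the paper's approach buys in exchange is that it works directly with the primitive Pareto-improvement definition (so it does not rely on the correctness of Lemma~\ref{charac-lemma}) and exposes the full chain structure, which can be informative in its own right.
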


The following elementary facts about submodular functions will be useful in the
proof of the Lemma~\ref{charac-polymatroid}:

\begin{fact}\label{union-intersection-fact}
Given a vector $x \in P$,  if two sets $S,T$ are tight (i.e.
$x(S) = f(S)$ and $x(T) = f(T)$), then $S \cap T$ and $S\cup T$ are also tight.
The proof is quite elementary: $x(S \cup T) = x(S) + x(T) - x(S \cap T) \geq
f(S) + f(T) - f(S\cap T) \geq f(S \cup T)$. So, all the inequalities must be
tight and therefore $x(S \cup T) = f(S \cup T)$ and $x(S \cap T) = f(S \cap T)$.
\end{fact}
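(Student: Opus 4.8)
The plan is to prove tightness of $S\cap T$ and $S\cup T$ by sandwiching a modularity identity for $x$ between submodularity of $f$ and feasibility $x\in P$. First I would record the three ingredients, all immediate from the definition of a polymatroid in Section~\ref{subsec:adwords_polymatroid}. Since $x\geq 0$ and $x(U)=\sum_{i\in U}x_i$ is an additive (hence modular) set function, one has the exact identity $x(S)+x(T)=x(S\cup T)+x(S\cap T)$. Feasibility $x\in P$ gives the two bounds $x(S\cap T)\leq f(S\cap T)$ and $x(S\cup T)\leq f(S\cup T)$ (the defining inequalities of $P$ hold for \emph{every} subset, not just a generating chain). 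Submodularity of $f$ gives $f(S)+f(T)\geq f(S\cup T)+f(S\cap T)$; note monotonicity of $f$ is not needed.

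Next I would chain these using the hypotheses $x(S)=f(S)$ and $x(T)=f(T)$:
$$x(S\cup T)+x(S\cap T)=x(S)+x(T)=f(S)+f(T)\geq f(S\cup T)+f(S\cap T)\geq x(S\cup T)+x(S\cap T).$$
The left and right ends of this chain are literally the same quantity, so every inequality in it is an equality; in particular $x(S\cup T)+x(S\cap T)=f(S\cup T)+f(S\cap T)$. Combining this with the two individual feasibility bounds $x(S\cup T)\leq f(S\cup T)$ and $x(S\cap T)\leq f(S\cap T)$ forces each slack to vanish (a sum of two nonnegative slacks equal to zero), hence $x(S\cup T)=f(S\cup T)$ and $x(S\cap T)=f(S\cap T)$, which is exactly tightness of both sets.

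There is essentially no obstacle: the only point requiring a moment of care is justifying that $x$ is modular on \emph{all} subsets and that $x(U)\leq f(U)$ for \emph{all} $U$, both of which are immediate from $x\in P=\{x\in\R^n_+;\ x(U)\leq f(U)\ \forall U\}$. The displayed chain in the statement already contains the argument verbatim; the proposal merely makes explicit why "all the inequalities must be tight" (the two ends agree) and why this splits into the two coordinatewise equalities (via the a-priori feasibility bounds).
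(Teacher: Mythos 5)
Your proposal is correct and is essentially the paper's own argument: both rest on modularity of $x$, submodularity of $f$, feasibility of $x$, and the two tightness hypotheses, chained so that the two ends coincide and every intermediate inequality collapses. The only cosmetic difference is that you keep $x(S\cup T)+x(S\cap T)$ together and split the zero total slack at the end, whereas the paper isolates $x(S\cup T)$ and reads off the two equalities from tightness of the individual steps.
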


\begin{fact}\label{slack-element-fact}
If $x([n]) < f([n])$ then there is one component $x_i$ that we can increase by
$\delta > 0$ such that $x$ is still in $P$. It follows from the previous fact:
if all players $i$ were contained in a tight set, one could take the union of
those and $[n]$ would be tight. Then there is some element $i$ which is in no
tight set.
\end{fact}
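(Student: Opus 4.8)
The plan is to use Fact~\ref{union-intersection-fact} to pin down a coordinate $i$ that lies outside every tight set, and then to slacken that coordinate by the smallest slack among the constraints that mention it. Concretely, let $\mathcal{T} = \{S \subseteq [n] : x(S) = f(S)\}$ be the family of tight sets and set $U = \bigcup_{S \in \mathcal{T}} S$ (interpreting $U = \emptyset$ when $\mathcal{T} = \emptyset$). By Fact~\ref{union-intersection-fact} the union of two tight sets is tight, so a finite induction shows $U$ itself is tight whenever $\mathcal{T} \neq \emptyset$; thus $U$ is the inclusion-maximal tight set. Since $x([n]) < f([n])$ says $[n] \notin \mathcal{T}$, we get $U \neq [n]$, and we may fix an index $i \in [n] \setminus U$.

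Next I would note that no set containing $i$ can be tight: if $i \in S$ and $x(S) = f(S)$, then $S \in \mathcal{T}$, hence $S \subseteq U$, contradicting $i \notin U$. Consequently $f(S) - x(S) > 0$ for every $S$ with $i \in S \subseteq [n]$, and since there are only finitely many such $S$ the quantity $\delta := \min\{\, f(S) - x(S) : i \in S \subseteq [n] \,\}$ is well-defined and strictly positive. It then remains to verify that $x + \delta \chi^i \in P$, where $\chi^i$ denotes the $i$-th coordinate vector: for a set $S$ with $i \notin S$ the sum $x(S)$ is unchanged and still at most $f(S)$; for a set $S$ with $i \in S$ we have $(x + \delta \chi^i)(S) = x(S) + \delta \leq x(S) + (f(S) - x(S)) = f(S)$; and nonnegativity is preserved because we only increase a single coordinate. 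This exhibits the desired $\delta$ and $i$.

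I do not expect a genuine obstacle here: the content is entirely combinatorial bookkeeping. The only step that needs a moment's care is the equivalence between "$i$ lies in no tight set" and "$i$ lies outside the maximal tight set," which is exactly what closure of $\mathcal{T}$ under unions (Fact~\ref{union-intersection-fact}) provides; if one prefers not to introduce $U$ explicitly, the same argument runs by choosing any inclusion-maximal tight set and observing that its uniqueness follows from closure under union.
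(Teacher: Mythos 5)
Your proof is correct and follows the same route as the paper's: both use the closure of tight sets under union (Fact~\ref{union-intersection-fact}) to conclude that some index $i$ lies in no tight set, and then slacken that coordinate. You merely spell out the choice of $\delta$ as the minimum positive slack over sets containing $i$, which the paper leaves implicit.
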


\begin{proofof}{Lemma \ref{charac-polymatroid}} The $(\Rightarrow)$ direction
is easy. If $x([n]) < f([n])$ then  we can increase some $x_i$ (Fact
\ref{slack-element-fact}) and still
get point $P$ generating a Pareto improvement. Also, if there is $p_i < B_i$
and $v_j < v_i$ and no tight set separating them, then we can consider another
outcome where we increase $x_i$ by some $\delta > 0$, decrease $x_j$ by some
$\delta < 0$ and still get a feasible point improving $x^t v$. Now, this
would not be Pareto optimal by Lemma \ref{charac-lemma}.

For the $(\Leftarrow)$ direction, let $(x,p)$ be an outcome satisfying
properties $1$ and $2$ and suppose $(x',p')$ is a Pareto-improvement. This
means that $v_i x'_i - p'_i \geq v_i x_i - p_i$ and $\sum_i p'_i \geq \sum_i
p_i$.

Let $\{i_1, \hdots, i_k\} = \{n\} \cup \{i; p_i < B_i\}$, sorted in
non-increasing order of $v_i$. Using property $2$ (notice it holds for player
$n$ trivially) together with fact \ref{union-intersection-fact},
we define the following family of tight sets $S_1 \subseteq S_2 \subseteq \hdots
\subseteq S_k = [n]$, tight in the sense that $x(S_i) = f(S_i)$. For all $v_t <
v_{i_1}$ there is a tight set
$S_{1t}$ that has $i_1$ but not $t$. Let $S_1$ be the intersection of such sets.
Now, given $S_1 \subseteq \hdots \subseteq S_{j-1}$, we define $S_j$ in the
following way: If $i_j \in S_{j-1}$,  take $S_j = S_{j-1}$ (notice can only
happen if $v_{i_j} = v_{i_{j-1}}$). If not, for each $v_t <
v_{i_j}$ there is a tight set $S_{jt}$ that has $i_j$
but not $t$. Now, define $S_j$ as the union of $S_{j-1}$ and the intersection of
the $S_{jt}$ sets. 

By eliminating duplicates and its corresponding elements from $\{i_1
\hdots i_k\}$, we get a family $S_1 \subset \hdots \subset S_k$. Define $T_j =
S_j \setminus S_{j-1}$ and it is clear the family obtained has the following
properties:
\begin{itemize}
 \item all $t \in S_j$ have $v_t \geq v_{i_j}$
 \item $i_j \in T_j$
 \item for all $i \in T_j$ either $v_i = v_{i_j}$ or $p_i = B_i$.
\end{itemize}
Let $T'_j = \{i \in T_j ; v_i \neq v_{i_j}\}$ and $T''_j = \{i \in T_j; v_i =
v_{i_j}\}$. Since the players in $T'_j$ have exhausted their budget, $p_i \geq
p'_i$. Using that and Pareto-optimality, we get:

\begin{equation}\label{chain-ineq-tj}
\begin{aligned}
& \sum_{i \in T'_j} p_i - p'_i \geq \\& \quad \geq \sum_{i \in
T'_j , x_i \geq x'_i} p_i - p'_i \geq \sum_{i \in T'_j , x_i \geq x'_i} v_i(x_i
- x'_i) \stackrel{*}{\geq} \\
& \quad  \geq \sum_{i \in T'_j , x_i \geq x'_i}
v_{i_j}(x_i - x'_i) \stackrel{**}{\geq} 
\sum_{i \in T'_j } v_{i_j}(x_i - x'_i) 
  \end{aligned}
\end{equation}

Now, we can add the inequality $p_{i} - p'_{i} \geq v_{i_j}(x_{i} -
x'_{i})$ for $i \in T''_j$ and obtain:
\begin{equation}\label{ineq-tj}
 \sum_{i \in T_j } p_i - p'_i \geq \sum_{i \in T_j}
v_{i_j}(x_i - x'_i)
\end{equation}
Summing those for all $j$ and get:
\columnsversion{
$$ \begin{aligned}\sum_i p_i - p'_i & \geq \sum_j \sum_{i \in T_j}v_{i_j}(x_i -
x'_i) = \sum_j (v_{i_j}
- v_{i_{j+1}}) \sum_{i \in S_j} (x_i - x'_i) \geq 0 \end{aligned}$$
}{
$$ \begin{aligned}\sum_i p_i - p'_i & \geq \sum_j \sum_{i \in T_j}v_{i_j}(x_i -
x'_i) = \\ & = \sum_j (v_{i_j}
- v_{i_{j+1}}) \sum_{i \in S_j} (x_i - x'_i) \geq 0 \end{aligned}$$
}
since $x(S_j) = f(S_j) \geq x'(S_j)$. Therefore $\sum_i p_i \geq \sum_i p'_i$
and therefore equal. This means in particular all of the inequalities in
(\ref{chain-ineq-tj}) and (\ref{ineq-tj}) must be tight. Therefore for all $i
\in T'_j$ we need to have $x_i = x'_i$, since if $x_i > x'_i$
then inequality $*$ in (\ref{chain-ineq-tj}) would be strict. If $x_i < x'_i$,
then inequality $**$ would be strict.  We use this fact to show that $\sum_{i
\in
S_j} v_i(x_i - x'_i) \geq 0$ by induction on $j$. If we show that, we can take
$j = k$ and then we are done, since this will imply that $\sum_i v_i x_i \geq
\sum_i
v_i x'_i$ and therefore $(x',p')$ cannot be a Pareto-improvement.

For $j=1$, this is trivial, since we can write:
$$ \sum_{i\in S_1, v_i \neq v_{i_1}} v_i(x_i -x'_i) \geq \sum_{i\in S_1, v_i
\neq
v_{i_1}} v_{i_1}(x_i -x'_i) $$ since both terms are zero, and then sum $v_i
(x_i-x'_i)$ for the rest of the
elements in $S_1$ and use the fact that $S_1$ is tight. For other $j$, we use
that:
\columnsversion{
$$
\begin{aligned}
& \sum_{i \in S_j} v_i(x_i - x'_i) \geq \\ & \quad \geq v_{i_{j-1}} \sum_{i \in
S_{j-1}}
(x_i - x'_i) + \sum_{i \in T'_j} v_i(x_i - x'_i) + \sum_{i \in
T''_j} v_{i_j}(x_i - x'_i) \geq \\
& \quad \geq  v_{i_j} \sum_{i \in S_j} (x_i - x'_i) = v_{i_j} (x(S_j) -
x'(S_j)) \geq 0,
\end{aligned}
$$
}{
$$
\begin{aligned}
& \sum_{i \in S_j} v_i(x_i - x'_i) \geq \\ & \quad \geq v_{i_{j-1}} \sum_{i \in
S_{j-1}}
(x_i - x'_i) + \sum_{i \in T'_j} v_i(x_i - x'_i) \\ & 
\quad \quad  + \sum_{i \in
T''_j} v_{i_j}(x_i - x'_i) \geq \\
& \quad \geq  v_{i_j} \sum_{i \in S_j} (x_i - x'_i) = v_{i_j} (x(S_j) -
x'(S_j)) \geq 0,
\end{aligned}
$$
}
by the fact that $S_j$ is tight.
\end{proofof}

Now, we argue that, for sufficiently small $\epsilon$, the outcome satisfied the
two properties in Lemma \ref{charac-polymatroid} and hence is Pareto-optimal. 
We prove this fact using the following
sequence of lemmas:

\begin{lemma}\label{after-clinching}
After the clinching step is executed, and before updating prices, 
$\hat{f}([n]) \leq
\hat{f}([n] \setminus j), \forall j \in [n]$.
\end{lemma}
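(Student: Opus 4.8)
The plan is to compute how the submodular function $\hat f$ describing the remnant-supply polymatroid $P_{\rho,d}$ changes over one clinching step, and then to reduce the asserted inequality to the closed-form expression for $\delta_j$ obtained in the proof of Lemma~\ref{eff-clinching}.

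First I would record the effect of a clinching step on the pair $(\rho,d)$. The step replaces $\rho$ by $\rho+\delta$, and since it sets $B_i \leftarrow B_i - p_i\delta_i$ and then recomputes $d_i = B_i/p_i$, it replaces $d_i$ by $d_i-\delta_i$ (in the case $p_i \geq v_i$ both $d_i$ and $\delta_i$ are $0$, so this still holds). Substituting into the formula $\hat f(S) = \min_{T\subseteq S}\{f(T)-\rho(T)+d(S\setminus T)\}$ from Fact~\ref{schrijver-fact}, every term of the minimization decreases by $\delta(T)+\delta(S\setminus T)=\delta(S)$, which does not depend on $T$; hence, writing $\hat f$ and $\hat f'$ for the functions before and after the step respectively, $\hat f'(S) = \hat f(S) - \delta(S)$ for all $S\subseteq[n]$.

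Next I would plug this identity into the goal. The function appearing in the statement is the post-step function $\hat f'$, so the claim $\hat f'([n]) \leq \hat f'([n]\setminus j)$ reads $\hat f([n]) - \delta([n]) \leq \hat f([n]\setminus j) - \delta([n]\setminus j)$; cancelling $\delta([n]\setminus j)$ from both sides and using $\delta([n])-\delta([n]\setminus j)=\delta_j$, this is equivalent to $\hat f([n]) - \hat f([n]\setminus j) \leq \delta_j$. But the proof of Lemma~\ref{eff-clinching} shows precisely $\delta_j = \bar f([n]) - \bar f([n]\setminus j) = \max\{0,\ \hat f([n]) - \hat f([n]\setminus j)\}$, which is at least $\hat f([n]) - \hat f([n]\setminus j)$. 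This finishes the proof.

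I do not expect a real obstacle here; the argument is essentially bookkeeping. The two points that need care are (i) checking that the demand vector truly drops by exactly $\delta$ coordinatewise — including the boundary cases $p_i \geq v_i$ (both sides zero) and $p_i=0$ (where $d_i=+\infty$ and the shift identity still holds for the finite sets $S$ that enter the inequality) — and (ii) keeping straight that the $\hat f$ in the lemma statement is taken after the updates to $\rho$ and $d$ while the $\hat f$ in the formula for $\delta_j$ is taken before them, which is exactly what the identity $\hat f'(S)=\hat f(S)-\delta(S)$ reconciles.
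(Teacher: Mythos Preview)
Your proof is correct and follows essentially the same route as the paper: the paper also computes that $\hat f$ drops by $\delta(S)$ on every set $S$ after the clinching updates, and then uses $\delta_j \geq \hat f_0([n]) - \hat f_0([n]\setminus j)$ to conclude. Your treatment is slightly more explicit about why the demand vector drops by exactly $\delta$ and about the boundary cases, but the argument is the same.
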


\begin{proof}
 In the clinching step, given an initial $\hat{f}_0$, we define $\delta_i =
\max \{0, \hat{f}_0([n]) - \hat{f}_0([n]\setminus i) \}$. After we update
$\rho,B,d$,
$\hat{f}$ is updated to $\hat{f}_1(S) = \hat{f}_0(S) - \delta(S)$. Now, it is
easy to check that:
\columnsversion{
$$\begin{aligned} \hat{f}_1([n]) & = \hat{f}_0([n]) - \sum_i \delta_i  =
\hat{f}_0([n]) - \delta_j - \sum_{i \neq j} \delta_i  \leq
\hat{f}_0([n] \setminus j) - \sum_{i \neq j}\delta_i = \hat{f}_1([n] \setminus
j) \end{aligned}$$
}{
$$\begin{aligned} \hat{f}_1([n]) & = \hat{f}_0([n]) - \sum_i \delta_i  =
\hat{f}_0([n]) - \delta_j - \sum_{i \neq j} \delta_i  \leq \\ & \leq
\hat{f}_0([n] \setminus j) - \sum_{i \neq j}\delta_i = \hat{f}_1([n] \setminus
j) \end{aligned}$$
}
\end{proof}

\begin{lemma}\label{sell-all-the-goods}
 The outcome $(x,p)$ of the clinching auction is such that $x([n]) = f([n])$.
\end{lemma}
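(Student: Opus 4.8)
I would follow a single scalar through the execution of Algorithm~\ref{polyhedral-clinching-auction}. For the current state $(\rho,d)$ let $\hat f$ be the submodular function describing the aggregate demand polytope $P_{\rho,d}$ from Fact~\ref{schrijver-fact}, i.e. $\hat f(S)=\min_{T\subseteq S}\{f(T)-\rho(T)+d(S\setminus T)\}$; the value $\hat f([n])$ is (morally) the largest total amount that can still be handed out while respecting feasibility and the current demands. The plan is to prove that the identity
$$\rho([n]) + \hat f([n]) = f([n])$$
is an invariant of the algorithm. The lemma then follows immediately: the loop terminates because prices rise without bound, so eventually $p_i\ge v_i$ and hence $d_i=0$ for every $i$; at that point $d=0$ forces $P_{\rho,d}=\{0\}$, so $\hat f([n])=0$ (take $T=\emptyset$ in the formula), and the invariant gives $x([n])=\rho([n])=f([n])$. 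Throughout one uses Lemma~\ref{lemma_feasibility} so that $\rho\in P$ and the $P_{\rho,d}$ are genuine polymatroids (and $f(T)-\rho(T)\ge 0$).

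I would verify the invariant at initialization and under the two kinds of updates. At the start $\rho=0$ and all prices are zero, so all demands are infinite, $P_{\rho,d}=P$ and $\hat f\equiv f$; both sides equal $f([n])$. Under a clinching step, the computation already performed in the proof of Lemma~\ref{after-clinching} shows that after clinching and updating $\rho,B,d$ the function becomes $\hat f_1(S)=\hat f_0(S)-\delta(S)$ (using that the updated demand of $i$ is $d_i-\delta_i$); in particular $\hat f([n])$ decreases by exactly $\sum_i\delta_i$, which is precisely the amount by which $\rho([n])$ increases, so the invariant survives.

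The only genuinely non-trivial point — and the step I expect to be the main obstacle — is that a price-increase step leaves $\hat f([n])$ unchanged; this is exactly where Lemma~\ref{after-clinching} is needed. Just before prices are bumped we are in the state guaranteed by that lemma, so $\hat f([n])\le\hat f([n]\setminus\hat i)$ for the player $\hat i$ whose price is about to rise. For $T\subseteq[n]\setminus\hat i$ the bracket in the formula for $\hat f([n])$ equals $f(T)-\rho(T)+d([n]\setminus\hat i\setminus T)+d_{\hat i}$, so the minimum of that bracket over such $T$ is $\hat f([n]\setminus\hat i)+d_{\hat i}\ge\hat f([n]\setminus\hat i)\ge\hat f([n])$; hence the minimum defining $\hat f([n])$ is attained at some $T$ with $\hat i\in T$. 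A price bump only lowers $d_{\hat i}$ (to $B_{\hat i}/p_{\hat i}$, or to $0$ once $p_{\hat i}\ge v_{\hat i}$), which does not touch any bracket with $\hat i\in T$ and only lowers the others; since the minimum was attained on a set containing $\hat i$ and the lowered brackets stay $\ge\hat f([n]\setminus\hat i)\ge\hat f([n])$, the value $\hat f([n])$ is unchanged. The argument works verbatim when $d_{\hat i}=+\infty$.

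Assembling the three checks gives the invariant for the whole run, and then $x([n])=f([n])$ as described. The only global reasoning is about termination of the price trajectory; everything else is local and rests only on Fact~\ref{schrijver-fact} and Lemma~\ref{after-clinching}. This also dovetails with the next step of the argument, since combined with condition~1 of Lemma~\ref{charac-polymatroid} we will then only need to establish the tight-set separation condition to conclude Pareto-optimality.
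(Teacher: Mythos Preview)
Your proposal is correct and follows essentially the same approach as the paper: both track the invariant $\rho([n])+\hat f([n])$ through clinching and price updates, invoking Lemma~\ref{after-clinching} for the latter. The only cosmetic difference is that the paper handles the price-increase step by contradiction (if $\hat f([n])$ dropped, the new minimizer $T\not\ni\hat i$ would force $\theta>d_{\hat i}$), whereas you argue directly that the lowered brackets remain $\ge\hat f([n]\setminus\hat i)\ge\hat f([n])$; your intermediate claim that the minimum is \emph{attained} at some $T\ni\hat i$ is not quite justified by the inequality you cite (which is circular), but it is also unnecessary---the bound on the lowered brackets alone already gives $\hat f'([n])\ge\hat f([n])$, and the reverse inequality is trivial.
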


\begin{proof}
We show the following invariant: if we define $\hat{f}$ as in Fact
\ref{schrijver-fact}, updating it each round as $\rho, d$ changes, we claim that
the value of $\one^t \rho + \hat{f}([n])$ remains constant.

To do so, we consider the events that can cause it to drop:
\begin{enumerate}
\item clinching: just after clinching occurs (i.e. $\rho_i$  increases by
$\delta_i$, budgets decrease by $p \delta_i$, demands
are adjusted, but before the price 
increases), the amount $\one^t \rho + \hat{f}([n])$ remains the same since
$\rho^t \one$ increases by $\delta^t \one$ and for all $S$, $\hat{f}(S)$
decreases by $\delta(S)$, because to each $i$, $\rho$ increases by $\delta_i$
and $d_i$ decreases by $\delta_i$.
\item price $p_i$ increases and $d_i$ decreases by $\theta, 0 \leq \theta \leq
d_i$. If $\one^t \rho + \hat{f}([n])$ decreased then there was some $T, i
\notin
T$ such that:
$$\hat{f}([n]\setminus i) + d_i - \theta \leq f(T) - \rho(T) + d([n]\setminus T)
- \theta < \hat{f}([n])$$
Using Lemma \ref{after-clinching}, we know that $
\hat{f}([n]) \leq \hat{f}([n]\setminus i) $, so $d_i < \theta$ which is not
true.
\end{enumerate}
\end{proof}

The proofs of the previous two lemmas intuitively establishes the maximality of
the clinching procedure. Lemma \ref{after-clinching} can be interpreted as
saying that if we apply the clinching procedure twice, without updating prices,
then the second time will have no effect. The proof of Lemma
\ref{sell-all-the-goods} identifies an invariant that is maintained during the
execution of the mechanism.

\begin{lemma}\label{second-property}
 If $\epsilon < \min_{v_i \neq v_j} \abs{v_i - v_j}$, then property $2$ of
Lemma \ref{charac-polymatroid} is satisfied.
\end{lemma}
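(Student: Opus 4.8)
The plan is to show that when $\epsilon < \min_{v_i \neq v_j}\abs{v_i - v_j}$, any player $i$ with $p_i < B_i$ at the end of the auction has a value $v_i$ that is essentially at least as large as the ``final price'' at which the good got sold, and then use the polymatroidal structure from Fact \ref{schrijver-fact} and Fact \ref{slack-element-fact} to extract the separating tight set. Concretely, I would first analyze what it means for player $i$ to have $p_i < B_i$ at termination: since the auction stops only when $d = 0$, and $d_i = B_i/p_i$ whenever $p_i < v_i$, the only way player $i$ ends with leftover budget and zero demand is that the price clock $p_i$ reached (within $\epsilon$ of) $v_i$ before her budget was exhausted. So $p_i \geq v_i - \epsilon$ at termination, where $p_i$ here is the final value of the price coordinate. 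The choice of $\epsilon$ then guarantees that if $v_j < v_i$ (with $v_j \neq v_i$, hence $v_i - v_j > \epsilon$), player $j$ dropped out strictly earlier than player $i$: there was a moment in the execution where $p_j \geq v_j$ but $p_i < v_i$, so $d_j = 0$ while player $i$ was still actively clinching.

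Next I would track the tight set. At the moment player $j$'s demand first hit zero, consider the remnant-supply polymatroid $P_{\rho,d}$ with its defining function $\bar f$ (from Fact \ref{schrijver-fact}). Because $d_j = 0$, no further allocation can be promised to $j$, so from that point on $\rho_j$ is frozen while player $i$ continues to clinch positive amounts (she does so as long as $d_i > 0$, which holds until either her budget runs out — contradicting $p_i < B_i$ — or $p_i$ reaches $v_i$). By Lemma \ref{sell-all-the-goods} the final allocation satisfies $x([n]) = f([n])$, i.e. $[n]$ is tight. I want a tight set $S$ with $i \in S$, $j \notin S$. The key point is that in the final allocation, $j$ is not in every tight set: if $j$ lay in every tight set containing $i$, then by Fact \ref{union-intersection-fact} the intersection of all tight sets containing $i$ would be a tight set, and any attempt to shift mass from $j$ to $i$ would be blocked — but the extra amount player $i$ clinched \emph{after} $j$ dropped out is exactly an allocation that the mechanism certified feasible while $\rho_j$ stayed fixed, which (via Definition \ref{polyhedral_clinching_dfn} and Lemma \ref{lemma_feasibility}) means the binding constraint on $i$'s last clinch was a tight set $S$ excluding $j$. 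Formally I would argue: let $S$ be the set realizing the minimum in $\delta_i = \min_{S \subseteq [n]\setminus i}\bar f(S\cup i) - \bar f(S)$ at the last clinching step for $i$ in which $\delta_i > 0$ and $d_j = 0$; since $d_j = 0$ contributes $0$ to the marginal through $S$, one checks $j \notin S\cup i$ works, and tightness $x(S\cup i) = f(S\cup i)$ in the final allocation follows because subsequent clinches only kept that constraint tight (the invariant of Lemma \ref{sell-all-the-goods} together with Lemma \ref{after-clinching} shows clinched constraints remain tight through later price increases).

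The main obstacle I anticipate is the last point: verifying that a constraint which is tight at an intermediate clinching step for player $i$ remains tight at \emph{termination}. This requires showing that later price increases (and the accompanying clinches by other players) never ``un-tighten'' the set $S \cup i$ — equivalently, that $\bar f(S\cup i)$ and $\rho(S \cup i)$ stay in lockstep, i.e. whatever capacity $f$ assigns to $S\cup i$ that is not yet used by $\rho$ gets clinched before the end. This is precisely the maximality captured by Lemmas \ref{after-clinching} and \ref{sell-all-the-goods}, applied now to the sub-polymatroid on $S \cup i$ rather than to $[n]$: since $\one^t\rho + \hat f([n])$ is invariant and terminates at $\hat f([n]) = 0$, and an analogous argument restricted to coordinates in $S\cup i$ shows $\rho(S\cup i) + \hat f_{S\cup i}$ is invariant and bottoms out, the set $S \cup i$ is tight at the end. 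Once that is in hand, taking $S' = S \cup i$ gives exactly property $2$ of Lemma \ref{charac-polymatroid}: $x(S') = f(S')$, $i \in S'$, $j \notin S'$, completing the proof.
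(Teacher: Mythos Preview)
Your approach diverges from the paper's, and the divergence opens a real gap.

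The paper does \emph{not} try to exhibit the separating tight set directly. It argues by contradiction: assume there is no tight $S$ with $i\in S$, $j\notin S$; then for small $\theta>0$ the perturbation $\tilde x$ with $\tilde x_i = x_i+\theta$, $\tilde x_j = x_j-\theta$ lies in $P$. Now go back to the step just before player $j$ \emph{last clinched} a positive amount $\delta_j$. The $\epsilon$ hypothesis guarantees that at that step $d_i>0$ (since $p_i \leq p_j+\epsilon < v_j+\epsilon < v_i$ and $i$'s budget is never exhausted). Using the defining property of clinching for \emph{player $j$}, namely $P^j_{\rho,d}(0)=P^j_{\rho,d}(\delta_j)$, one shows that $(\tilde x-\rho)_{-j}\in P^j_{\rho,d}(0)=P^j_{\rho,d}(\delta_j)$, so $\hat x=(x_j,\tilde x_{-j})\in P$. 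But $\hat x([n])=f([n])+\theta$, contradicting feasibility. No tight set is ever tracked.

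Your constructive route has two problems. First, the minimizer $S$ in $\delta_i=\min_{S\subseteq [n]\setminus i}\bar f(S\cup i)-\bar f(S)$ is tight with respect to $\bar f$, the submodular function of the \emph{remnant} polymatroid $P_{\rho,d}$, not with respect to $f$; by submodularity the minimizer is in fact $S=[n]\setminus i$, so $S\cup i=[n]$, which contains $j$. Your remark that ``$d_j=0$ contributes $0$'' does not let you drop $j$: the optimizing $T$ inside $\hat f$ may well include $j$, and even if it does not, tightness of $[n]\setminus j$ for $\bar f$ is not the same as $x([n]\setminus j)=f([n]\setminus j)$. Second, the persistence argument you sketch---restricting the invariant of Lemma~\ref{sell-all-the-goods} to a subset $S\cup i$---does not follow from that lemma, which only tracks $\hat f([n])$; later clinches by players outside $S\cup i$ can change $\hat f(S\cup i)$ in ways the invariant does not control.

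In short, the paper sidesteps both obstacles by contradiction and by looking at $j$'s last clinch (using the clinching equality for $j$), rather than trying to read off a tight set from $i$'s clinching minimizer.
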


\begin{proof}
 Suppose not and for the final outcome there are $v_j < v_i$, $p_i < B_i$ and
all
sets $S$ such that $i \in S$, $j \notin S$ are not tight. First, clearly $x_j
\neq 0$, otherwise $[n] \setminus j$ would be tight by Lemma
\ref{sell-all-the-goods}. Then consider $\tilde{x}$
where $\tilde{x}_i = x_i + \theta$, $\tilde{x}_j = x_j - \theta$ and
$\tilde{x}_k = x_k$ for all $k \neq i,j$. It is feasible for some small
$\theta$.

Now, consider the promised allocation $\rho$ and demands $d$ just before the
last time
player $j$ clinched an amount $\delta_j > 0$. If necessary decrease
$\theta$ so that it
becomes smaller than this last amount clinched, i.e., $\theta < \delta_j$.
 At this point $\rho \leq x \leq \rho + d$. By the definition
of clinching: $P_{\rho,d}^j(\delta_j) = P_{\rho,d}^j(\theta) = P_{\rho,d}^j(0)$.
\comment{ but this is not true,
since $(\tilde{x}-\rho)_{-j} \in P^j_{\rho,d}(0) \setminus
P^j_{\rho,d}(\theta)$.
}

At this point, $\rho \leq x$ and $\rho_j + \theta < \rho_j + \delta_j = x_j$.
Therefore $\tilde{x} \geq \rho$. Also, we have that $x - \rho \leq d$ and $x_i -
\rho_i < d_i$, since agent $i$ hasn't dropped his demand
to zero yet and his demand never increases and won't be met while $v_i < p_i$.
Here we are strongly using that $\epsilon < \min_{v_i \neq v_j} \abs{v_i - v_j}$
to ensure that for the last time player $j$ clinches, player $i$ demand is not
zero yet. This implies that $\tilde{x} -\rho \in P_{\rho,d}$ so $(\tilde{x}
-\rho)_{-j} \in P_{\rho,d}^j(0)$. Now, the fact that $P_{\rho,d}^j(0) =
P_{\rho,d}^j(\delta_i)$ implies that $\hat{x} = (x_j, \tilde{x}_{-j}) \in P$.
But  $\hat{x}([n]) = x([n]) + \theta = f([n]) + \theta > f([n])$, which is an
absurd.
\end{proof}

We can summarize the results as:

\begin{theorem}
 For a polymatroidal environment, the auction in Algorithm
\ref{polyhedral-clinching-auction} along with
 the clinching step described in
 Definition~\ref{polyhedral_clinching_dfn}
has all the desirable properties.\\
\end{theorem}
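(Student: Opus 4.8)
The plan is to assemble the lemmas established above. Truthfulness, individual rationality and budget-feasibility are immediate from Lemma~\ref{almost_all_properties_lemma}, which was proved for an arbitrary packing polytope and hence applies in particular to a polymatroid; the only thing to add is that the mechanism terminates, which it does because once $p_i \geq v_i$ for every $i$ all demands $d_i$ are zero, and this happens after finitely many $\epsilon$-steps (prices ascend without bound). Efficiency of the clinching step --- the one place where the algorithm is not obviously implementable --- is exactly Lemma~\ref{eff-clinching}, which via Facts~\ref{schrijver-fact} and~\ref{equal-polymatroids} reduces each $\delta_i$ to a single submodular-minimization call.

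The substantive part is Pareto-optimality, and here I would invoke the polymatroid-specific characterization of Lemma~\ref{charac-polymatroid}: it suffices to check that the final outcome $(x,p)$ (i) sells everything, $x([n]) = f([n])$, and (ii) for every $i$ with $p_i < B_i$ and every $j$ with $v_j < v_i$ there is a tight set containing $i$ but not $j$. Part (i) is Lemma~\ref{sell-all-the-goods}: the proof tracks the invariant $\one^t \rho + \hat{f}([n])$, shows clinching leaves it unchanged, and uses Lemma~\ref{after-clinching} (a ``a second clinch with no price update does nothing'' statement) to rule out the only dangerous event, a price increase that would shrink $\hat{f}([n])$. Part (ii) is Lemma~\ref{second-property}, and this is where the real work sits: assuming property~2 fails, one builds the swap vector $\tilde{x}$ moving a tiny $\theta$ from $j$ to $i$, rewinds to the last round in which $j$ clinched a positive amount, and contradicts the maximality of that clinch, crucially using $\epsilon < \min_{v_i \neq v_j}\abs{v_i - v_j}$ to guarantee that $i$'s demand is still positive at that moment.

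Putting it together: fix $\epsilon < \min_{v_i \neq v_j}\abs{v_i - v_j}$. Then Lemma~\ref{almost_all_properties_lemma} gives IC, IR and budget-feasibility; Lemma~\ref{eff-clinching} gives polynomial-time computability; Lemmas~\ref{sell-all-the-goods} and~\ref{second-property} give properties~1 and~2 of Lemma~\ref{charac-polymatroid}, hence Pareto-optimality; so all the desirable properties hold. No extra limiting argument is needed, since all the cited lemmas are already stated for the discrete $\epsilon$-algorithm. The main obstacle throughout is property~2 / Lemma~\ref{second-property} --- controlling the interaction between the ascending-price dynamics and the combinatorial tight-set structure of the polymatroid --- which is precisely what the $\epsilon$-threshold and the rewind-to-last-clinch argument are designed to handle.
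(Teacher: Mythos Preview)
Your proposal is correct and matches the paper's approach exactly: the theorem is stated in the paper as a summary (``We can summarize the results as'') with no separate proof, precisely because it is the conjunction of Lemma~\ref{almost_all_properties_lemma} (IC, IR, budget-feasibility), Lemma~\ref{eff-clinching} (efficient clinching), and Lemmas~\ref{sell-all-the-goods} and~\ref{second-property} feeding into the characterization Lemma~\ref{charac-polymatroid} for Pareto-optimality. Your added remark on termination and the explicit $\epsilon$-threshold are fine clarifications but not new content.
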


 \textbf{Extensions and limitations of the clinching framework:} The
clinching framework described in Algorithm \ref{polyhedral-clinching-auction}
and Definition~\ref{polyhedral_clinching_dfn}
is quite flexible: one can change the way clinching is done or the way prices
ascend and obtain an auction that is still truthful, individually rational, and
respects budgets. Pareto-optimality, however, is a delicate property to
achieve. A natural question is for which environments Pareto-optimality is
still achievable? In appendix \ref{appendix:framework_limits}, we show we can
extend this framework a little further (to scaled polymatroids), but not
further than that. \\

\textbf{Faster clinching:} We showed in this section that for a
generic polymatroid, we can calculate the clinched amount using submodular
minimization as a sub-routine. For each individual environment, however, one
can usually find much faster clinching subroutines. We illustrate this in
appendix \ref{appendix:eff_clinching_adwords} for the single-keyword AdWords
polytope.
\section{The AdWords Polytope and \columnsversion{}{\\} other
applications}\label{sec:applications}

We begin by discussing some interesting applications of the auction presented in
section \ref{sec:clinching_main}. Then we introduce a novel application of
polymatroidal constraints to sponsored search auctions that generalizes the
classical models of Edelman et al \cite{edelman07sellingbillions} and Varian
\cite{Varian06positionauctions}.

\begin{itemize}
 \item {\em Multi-unit auctions \cite{dobzinski_budgets}:} corresponds to the
polymatroids associated with constant submodular functions, i.e., $f(S) = Q,
\forall S$.

 \item {\em Combinatorial auctions with matching constraints
\cite{fiat_clinching}:} there is a bipartite graph $([m],[n],E)$ between
items $[m]$ and bidders $[n]$ and each buyer $i$ has additive value $1$ for each
item $j$ such that $(i,j) \in E$ and value $0$ for each item not connected to
him. We can represent this setting by a polymatroid where $f(S)$ is the number
of items connected to some player in $S$. This is called the \emph{transversal
matroid}.

\item {\em Video on demand \cite{Bikhchandani11}:} Consider company that
provides video on demand that is located on a node $s$ of a direct network with
capacities on the edges $G=(V,E,c)$. Each buyer corresponds to a node in the
network. An allocation $x$ is feasible if it is possible to transmit at rate
$x_i$ for each player $i$ simultaneously. This is possible if for each subset $S
\subseteq [n]$ of players, $\sum_{i \in S} x_i$ is smaller then the min-cut from
$s$ to $S$. Using the submodularity of the cut-function, it is easy to see that
the environment is a polymatroid.

\item {\em Spanning tree auctions:} Consider the abstract setting where the
agents are edges of a graph $G$ and the auctioneer is allowed to allocate goods
to a set only if it has no cycles. This corresponds to the graphical matroid of
graph $G$. A more practical setting is when a telecommunication company owns a
network that contains cycles and decides to auction their redundant edges. This
setting corresponds to the dual-graphical matroid of $G$.
\end{itemize}

\subsection{AdWords Polytope}\label{subsec:adwords_polytope}

Consider $n$ advertisers and $m$ keywords. Each
advertiser $i$ is interested in a subset of the keywords $\Gamma(i) \subseteq
[m]$. For a keyword $k$, we denote by $\Gamma(k)$, the set of advertisers
interested in this keyword. With each keyword $k$, we associate
$\abs{\Gamma(k)}$ positions. Position $j$ for keyword $k$ has
click-through-rate $\alpha^k_j$ (possibly zero) such that 
$\alpha^k_1 \geq \alpha^k_2 \geq \hdots \geq \alpha^k_{\abs{\Gamma(k)}}$ for
each $k$.

Assuming that each keyword gets a large amount of queries, we see $\alpha^k_j$
as the sum of number of clicks that the $j$-th 
position of keyword $k$ gets across all queries that it matches. For now, let's
assume that the number of clicks a player gets in slot $j$ of keyword $k$
depends only on $j,k$ and not on the identity of the player. One is able to
relax this assumption, as we see later.

Let $\mathcal{A}_k = \{\pi_k:\Gamma(k) \hookrightarrow [\abs{\Gamma(k)}]\}$ be
the set of all allocations (one-to-one maps) from players to slots for keyword
$k$. Also, let $\Delta(\mathcal{A}_k)$ be the distributions of such allocations.
Given that, we can define the AdWords polytope in the following way: an
allocation of clicks $x$ is feasible if there is a distribution over allocations
of players to slots for each keyword such that player $i$ gets $x_i$ clicks in
expectation. More formally:

\begin{defn}[AdWords Polytope]
 The {\em AdWords polytope} is the set of feasible allocations
\footnote{Since the number of clicks is typically very large  we treat them as
divisible goods and consider also fractional allocations.} of
clicks $x =
(x_1, \hdots, x_n)$  such that there are distributions $\mathcal{D}_k \in
\Delta(\mathcal{A}_k)$ for each keyword, and
$$x_i \leq \sum_{k \in \Gamma(i)} \E_{\pi_k \sim
\mathcal{D}_k}[\alpha^k_{\pi_k(i)}]$$
\end{defn}

Our main result in this Section is that:

\begin{theorem}\label{adwords_polymatroid}
 The AdWords polytope is a polymatroid.
\end{theorem}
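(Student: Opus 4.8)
The plan is to show that the AdWords polytope is exactly a polymatroid by exhibiting its defining submodular function explicitly and verifying the two polymatroid axioms (monotonicity and submodularity). The natural candidate is the function that, for each set $S$ of advertisers, gives the maximum total number of clicks that the advertisers in $S$ can collectively receive, i.e. $f(S) = \max\{x(S) ; x \in \text{AdWords polytope}, x_i = 0 \text{ for } i \notin S\}$. The first task is to argue that the AdWords polytope equals $\{x \geq 0; x(S) \leq f(S) \ \forall S\}$ for this $f$; one direction is immediate, and the reverse direction needs the observation that the keyword allocation constraints decouple across keywords, so a feasible click allocation for the whole set can be assembled keyword-by-keyword from feasible allocations on the restrictions.

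The key structural step is to get a clean formula for $f(S)$. Since the keywords are independent, $f(S) = \sum_{k \in [m]} f_k(S \cap \Gamma(k))$, where $f_k(S')$ is the maximum expected number of clicks advertisers in $S'$ can get from keyword $k$ alone, using distributions over one-to-one maps from $\Gamma(k)$ into the slot positions with click-through-rates $\alpha^k_1 \geq \alpha^k_2 \geq \cdots$. By an averaging/greedy argument (the expected allocation polytope of a matching on identical-valued slots is itself the base polytope of a uniform-like matroid), the best a set $S'$ of $t = |S'|$ advertisers can do on keyword $k$ is to occupy the top $\min\{t, |\Gamma(k)|\}$ slots, so $f_k(S') = \sum_{j=1}^{\min\{|S'|, |\Gamma(k)|\}} \alpha^k_j$. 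A sum over $k$ of such functions is monotone (clear) and submodular: each $f_k$ is a composition of the concave function $g_k(t) = \sum_{j=1}^{\min\{t,|\Gamma(k)|\}}\alpha^k_j$ (concave because the $\alpha^k_j$ are nonincreasing) with the set-cardinality function restricted to $\Gamma(k)$, and a concave function of $|S \cap \Gamma(k)|$ is submodular; submodularity is preserved under nonnegative sums.

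I expect the main obstacle to be the justification that randomized matchings buy nothing beyond the greedy "top-slots" allocation — i.e. proving $f_k(S') = \sum_{j \leq \min\{|S'|,|\Gamma(k)|\}} \alpha^k_j$ rigorously rather than just intuitively. The cleanest route is probably to note that for a fixed keyword $k$ restricted to advertiser set $S'$, the set of achievable expected-click vectors is the convex hull of the vectors induced by deterministic injections, and then show this convex hull is precisely the polymatroid base-type polytope $\{y \geq 0 \text{ on } S'; y(U) \leq \sum_{j=1}^{\min\{|U|,|\Gamma(k)|\}}\alpha^k_j \ \forall U \subseteq S', \ y(S') = \sum_{j=1}^{\min\{|S'|,|\Gamma(k)|\}}\alpha^k_j\}$; a Birkhoff–von Neumann style argument on doubly-substochastic matrices, together with the rearrangement inequality to see which vertex maximizes $y(U)$, gives both containments. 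Once the per-keyword formula and its concavity are in hand, monotonicity and submodularity of $f = \sum_k f_k$ are routine, and combined with the first step identifying the polytope with $\{x \geq 0; x(S) \leq f(S)\}$, this establishes that the AdWords polytope is a polymatroid.
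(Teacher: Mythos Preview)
Your high-level strategy matches the paper's: characterize the single-keyword feasibility polytope as a polymatroid with rank function $f_k(S') = \sum_{j=1}^{|S'|}\alpha^k_j$, then combine across keywords via $f(S) = \sum_k f_k(S\cap\Gamma(k))$. The paper dispatches the first step by citing Feldman et al.\ and the second by citing McDiarmid's polymatroidal Rado theorem. You instead propose to reprove the single-keyword step via a Birkhoff--von~Neumann/rearrangement argument (fine, and essentially what Feldman et al.\ do), and to verify submodularity of $f$ directly by noting each $f_k$ is a concave function of $|S\cap\Gamma(k)|$ (also fine, and arguably more transparent than invoking McDiarmid just for submodularity).

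The gap is in your ``reverse direction'' of the polytope identification. You write that it follows from ``the observation that the keyword allocation constraints decouple across keywords, so a feasible click allocation for the whole set can be assembled keyword-by-keyword.'' But this is precisely the nontrivial content of McDiarmid's theorem (equivalently, the polymatroid-sum theorem): given $x\geq 0$ with $x(S)\leq \sum_k f_k(S\cap\Gamma(k))$ for all $S$, one must \emph{produce} a decomposition $x_i = \sum_{k\in\Gamma(i)} x_i^k$ with each $x^k$ feasible for keyword $k$. The decoupling you mention is only at the level of checking feasibility \emph{once a decomposition is given}; it does not by itself furnish the decomposition. This step is where the paper's citation of McDiarmid does real work, and your proposal either needs to cite the polymatroid-sum theorem (the Minkowski sum of polymatroids $P_k$ defined by $f_k$ is the polymatroid defined by $\sum_k f_k$) or supply an argument, e.g.\ via polymatroid intersection or an exchange argument. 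You have correctly identified the single-keyword step as requiring care, but you are underestimating the multi-keyword combination.
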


\comment{
The proof, which can be found in appendix \ref{setting-appendix}, uses two main
ingredients: one is the connection made by Feldman et al
\cite{feldman08} between the single-keyword sponsored search model and the
problem of scheduling in related machines. The second is a generalized version
of Hall's Theorem.
}

In order to prove the theorem, we first consider the setting with a single
keyword and all advertisers
interested in it. Let the click-through-rates be $\alpha_1 \geq \hdots \geq
\alpha_n$. 
Feldman et al \cite{feldman08} relate the problem of deciding if a vector $x$ is
feasible to a classical problem in machine scheduling, i.e.,  scheduling in
related
machines with preemptions ($Q \vert pmtn \vert C_{\max}$ \cite{graham}). What
follows is a re-statement of their characterization in a format that makes it
clear it is a polymatroidal environment.

\begin{lemma}[Feldman et al \cite{feldman08}]\label{single-keyword-lemma}
 An allocation vector $x$ is feasible iff for each $S$, $x(S) \leq
\sum_{j=1}^{\abs{S}} \alpha_j$, where $x(S) = \sum_{i \in S} x_i$ for each set
$S \subseteq [n]$.
\end{lemma}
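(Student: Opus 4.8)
\medskip
\noindent\textbf{Proof plan.}
The statement identifies the single--keyword feasibility region with the polymatroid whose rank function is $f(S)=\sum_{j=1}^{\abs S}\alpha_j$, so the plan is to prove the two inclusions separately: the ``only if'' direction by an averaging argument, and the ``if'' direction by reducing to a classical majorization (equivalently, preemptive--scheduling) fact.

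For the ``only if'' direction, suppose $x$ is feasible, witnessed by a distribution $\mathcal D$ over one--to--one maps $\pi:[n]\to[n]$ with $x_i\le \E_{\pi\sim\mathcal D}[\alpha_{\pi(i)}]$. Fix $S$; then $x(S)\le \sum_{i\in S}\E_\pi[\alpha_{\pi(i)}]=\E_\pi\big[\sum_{i\in S}\alpha_{\pi(i)}\big]$. For every realization $\pi$, the indices $\{\pi(i):i\in S\}$ are $\abs S$ distinct slots, so $\sum_{i\in S}\alpha_{\pi(i)}$ is a sum of $\abs S$ distinct terms of the nonincreasing sequence $\alpha_1\ge\cdots\ge\alpha_n$, hence at most $\sum_{j=1}^{\abs S}\alpha_j$. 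Taking expectations gives $x(S)\le\sum_{j=1}^{\abs S}\alpha_j$.

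The ``if'' direction is the crux. I would first reformulate feasibility: $x$ is feasible iff there is a doubly stochastic matrix $M$ with $(M\alpha)_i\ge x_i$ for all $i$. Indeed, given such an $M$, Birkhoff--von Neumann expresses $M$ as a convex combination of permutation matrices, i.e.\ as a distribution over one--to--one maps whose expected allocation is $M\alpha\ge x$, and the region is downward closed; conversely, the matrix of slot--probabilities of any feasible distribution is doubly stochastic. By the Hardy--Littlewood--P\'olya/Rado theorem, $\{M\alpha:M\text{ doubly stochastic}\}$ is exactly the set of vectors majorized by $\alpha$, so it remains to show: if $x(S)\le\sum_{j=1}^{\abs S}\alpha_j$ for all $S$ --- equivalently, the sorted partial sums of $x$ are dominated by those of $\alpha$, and in particular $x([n])\le\sum_j\alpha_j$ --- then some $y\ge x$ is majorized by $\alpha$. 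I would construct such a $y$ by water--filling: keeping the entries of $x$ in nonincreasing order, repeatedly raise the smallest entries until the total reaches $\sum_j\alpha_j$; monotonicity of $\alpha$ guarantees that raising small coordinates never breaks a partial--sum inequality, and the procedure terminates since the total strictly increases. (Alternatively, one invokes the Gonzalez--Sahni / Horvath--Lam--Sethi formula for $Q\,\vert\,pmtn\,\vert\,C_{\max}$: treating slots as uniformly related machines of speeds $\alpha_j$ and player $i$ as a job of size $x_i$, the hypothesis says precisely that the optimal preemptive makespan is at most $1$; slicing such a schedule along the time axis and applying Birkhoff--von Neumann at each instant yields the required distribution over one--to--one assignments.)

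The main obstacle is this reverse--direction construction: either stating and verifying the water--filling invariant that sorted partial--sum domination is preserved at each raise, or --- if one prefers to cite the scheduling literature --- carefully translating a fractional preemptive schedule on $[0,1]$ into a probability distribution over one--to--one player--slot assignments that delivers $x_i$ clicks to player $i$. The remaining ingredients --- the averaging bound, Birkhoff--von Neumann, and the Rado characterization of majorization --- are standard.
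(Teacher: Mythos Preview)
Your proposal is correct. Note, however, that the paper does not actually prove this lemma: it is stated as a citation to Feldman et al.\ \cite{feldman08}, with the remark that their characterization is being restated ``in a format that makes it clear it is a polymatroidal environment.'' So there is no paper-proof to compare against; you are supplying what the paper deliberately outsources.

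That said, your two routes are exactly the right ones. The scheduling reduction you mention parenthetically is in fact the argument in the cited source: Feldman et al.\ identify the single-keyword feasibility question with $Q\,\vert\,pmtn\,\vert\,C_{\max}$ on related machines of speeds $\alpha_j$, and the Horvath--Lam--Sethi/Gonzalez--Sahni makespan formula then gives precisely the inequalities $x(S)\le\sum_{j\le|S|}\alpha_j$. Your primary route via Birkhoff--von~Neumann and Hardy--Littlewood--P\'olya is a clean alternative that avoids the scheduling detour entirely; the one step that deserves a line of justification is the water-filling claim. Concretely, if the bottom $n-k$ entries are raised to a common level $v$ chosen so that the total equals $\sum_j\alpha_j$, the partial-sum inequality at any intermediate level $j\in(k,n)$ follows from $X_k\le A_k$ together with the concavity-type inequality $(n-k)A_j\ge (j-k)A_n+(n-j)A_k$, which is just the statement that $\alpha$ is nonincreasing. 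With that invariant checked (or by simply invoking the standard fact that weak submajorization $x\prec_w\alpha$ implies the existence of $y$ with $x\le y\prec\alpha$), the argument is complete.
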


Notice that $f_k(S) = \sum_{j=1}^{\abs{S}} \alpha^k_j$ is a submodular
function, so the set of feasible allocations for the single-keyword setting is
a polymatroid.

For the multiple-keyword setting, we say that an allocation vector $x$ is
feasible if we can write $x_i = \sum_{k \in \Gamma(i)} x_i^k$ in such a way
that the vector $(x^k_i)_{i \in \Gamma(k)}$ is feasible for keyword $k$, i.e.,
$x^k(S) \leq f_k(S)$ for every $S \subseteq \Gamma(k)$.

The fact that this allocation set is a polymatroid is a direct consequence of
the following theorem, which is a polymatroidal version of Rado's Theorem due
to McDiarmid \cite{mcdiarmid75}.

\begin{theorem}[McDiarmid \cite{mcdiarmid75}]
Given a bipartite graph $([n] \cup [m], E)$,  its neighborhood
map $\Gamma(\cdot)$, $m$ submodular functions $f_1, \hdots, f_m$ and their
respective polymatroids $P_1, \hdots, P_m$, then the set:
$$P^* = \{x \in \R^n_+; x_i = \sum_{k \in \Gamma(i)} x_i^k \text{  and  } x^k
\in P_k \} $$
is a polymatroid defined by the function $$f^*(S) = \sum_k f_k(S \cap
\Gamma(k))$$.\\
\end{theorem}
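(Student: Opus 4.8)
The plan is to recognize $P^{*}$ as an iterated Minkowski sum of polymatroids and reduce everything to the \emph{polymatroid sum theorem}. First I would check that $f^{*}$ is a legitimate rank function: each map $S \mapsto f_{k}(S \cap \Gamma(k))$ inherits monotonicity and submodularity from $f_{k}$ (intersecting every argument with a fixed set preserves both, and $f_{k}(\emptyset)=0$), and a sum of monotone submodular functions is monotone submodular, so $f^{*}$ defines a polymatroid $Q := \{x \in \R^{n}_{+} : x(S) \le f^{*}(S)\ \forall S\}$. The inclusion $P^{*} \subseteq Q$ is then a one-line computation: for $x \in P^{*}$ with $x_{i} = \sum_{k \in \Gamma(i)} x_{i}^{k}$, $x^{k} \in P_{k}$, and any $S \subseteq [n]$, $x(S) = \sum_{k} \sum_{i \in S \cap \Gamma(k)} x_{i}^{k} = \sum_{k} x^{k}(S \cap \Gamma(k)) \le \sum_{k} f_{k}(S \cap \Gamma(k)) = f^{*}(S)$. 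So the whole content of the theorem is the reverse inclusion $Q \subseteq P^{*}$, i.e.\ that every $x$ respecting all the $f^{*}(S)$ constraints actually admits a valid decomposition.

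For that, I would lift each $P_{k}$ to $\R^{n}$: set $\tilde P_{k} = \{x \in \R^{n}_{+} : x_{i} = 0 \text{ for } i \notin \Gamma(k),\ (x_{i})_{i \in \Gamma(k)} \in P_{k}\}$. A direct check (the constraint $x_{i}=0$ for $i\notin\Gamma(k)$ is just $x(\{i\}) \le f_{k}(\emptyset)$, and $x(S)=x(S\cap\Gamma(k))$ for points of $\tilde P_{k}$) shows $\tilde P_{k}$ is the polymatroid with rank function $\tilde f_{k}(S) = f_{k}(S \cap \Gamma(k))$. Moreover $P^{*} = \tilde P_{1} + \dots + \tilde P_{m}$ as a Minkowski sum, because a decomposition $x_{i} = \sum_{k\in\Gamma(i)} x_{i}^{k}$ with $x^{k}\in P_{k}$ is exactly the same data as writing $x = \sum_{k} \tilde x^{k}$ with $\tilde x^{k}\in\tilde P_{k}$. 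Hence it suffices to prove the two-polymatroid statement: if $g,h$ are monotone submodular on $[n]$ with $g(\emptyset)=h(\emptyset)=0$ and $P_{g},P_{h}$ are the polymatroids they define, then $P_{g} + P_{h} = P_{g+h}$; applying this $m-1$ times collapses $\tilde P_{1}+\dots+\tilde P_{m}$ to $P_{\tilde f_{1}+\dots+\tilde f_{m}} = P_{f^{*}} = Q$, and unwinding the decompositions gives $x \in Q \Rightarrow x \in P^{*}$.

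To prove $P_{g}+P_{h} = P_{g+h}$, the inclusion $\subseteq$ is immediate from $x(S) = y(S) + z(S) \le g(S) + h(S)$. For $\supseteq$, I would use convexity: $P_{g}+P_{h}$, being the Minkowski sum of two polytopes, is itself a closed convex polytope, and $P_{g+h}$ is a bounded polyhedron, hence the convex hull of its vertices, so it is enough to show every vertex $v$ of $P_{g+h}$ lies in $P_{g}+P_{h}$. By Edmonds' greedy characterization of polymatroid vertices (see \cite{schrijver-book}), every such $v$ has the form $v_{e_{i}} = (g+h)(\{e_{1},\dots,e_{i}\}) - (g+h)(\{e_{1},\dots,e_{i-1}\})$ for $i = 1,\dots,j$ and $v_{e}=0$ otherwise, for some ordered sequence $(e_{1},\dots,e_{j})$ of distinct elements. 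Since the marginal of a sum is the sum of the marginals, $v = v' + v''$ where $v'$ (resp.\ $v''$) is the greedy vertex of $P_{g}$ (resp.\ $P_{h}$) for the \emph{same} ordered sequence, and greedy vertices are always genuine points of their polymatroids, so $v \in P_{g}+P_{h}$.

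The argument is largely mechanical once framed this way; the genuine idea — and the step I would flag as the crux — is the additive splitting of marginals that lets each vertex of $P_{g+h}$ be cut into a vertex of $P_{g}$ plus a vertex of $P_{h}$, together with the reduction of $P^{*}$ to a Minkowski sum. The only technical care needed is (i) confirming the lift $\tilde P_{k}$ has rank function $f_{k}(\cdot\cap\Gamma(k))$ and (ii) citing the vertex/greedy characterization. An alternative to the convexity route, which I would fall back on if a self-contained argument is preferred, is: for $x\in P_{g+h}$ pick $y$ maximizing $\one^{t}y$ over $P_{g}\cap\{0\le y\le x\}$ (a polymatroid by Fact~\ref{schrijver-fact}) and show $x-y\in P_{h}$ by a tight-set analysis; this works but is more delicate, so the vertex argument is the one I would write up.
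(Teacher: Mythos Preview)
Your argument is correct. The paper, however, does not give its own proof of this theorem: it is stated as a known result due to McDiarmid~\cite{mcdiarmid75} and invoked as a black box to conclude that the multi-keyword AdWords polytope is a polymatroid. So there is no ``paper's proof'' to compare your proposal against.

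For what it is worth, the route you take --- lifting each $P_k$ to a polymatroid on $[n]$ with rank $S\mapsto f_k(S\cap\Gamma(k))$, identifying $P^{*}$ with the Minkowski sum $\tilde P_1+\cdots+\tilde P_m$, and then proving $P_g+P_h=P_{g+h}$ by splitting each greedy vertex of $P_{g+h}$ into the greedy vertices of $P_g$ and $P_h$ for the same ordering --- is precisely the standard proof of the polymatroid sum theorem (Edmonds), and the steps you flag (the rank of the lift, and the citation of the greedy vertex characterization from~\cite{schrijver-book}) are exactly the places where one needs to be careful. The alternative you mention at the end, taking $y$ maximal in $P_g\cap\{0\le y\le x\}$ and showing $x-y\in P_h$ via tight sets, is also classical and is essentially the matroid-union style argument; either would be acceptable as a self-contained proof.
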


\textbf{Quality factors:} So far, we assumed that the click-through-rate of
player $i$ allocated to slot $j$ of keyword $k$  depends solely on $k$
and $j$. More generally, we would like to consider the click-through-rate of a
slot
depending also on the player allocated in that slot. Let $\alpha^k_{j,i}$ be the
click-through-rate of position $j$ of keyword $j$ when player $i$ is placed
there. Traditionally, we consider the click-through-rates in  a product form,
i.e.,
$\alpha^k_{j,i} = \alpha^k_j \cdot \gamma^k_i$ where $\gamma^k_i$ is called
\emph{quality factor}. Assuming quality factors are public information, one can,
in a similar way, define a polytope of feasible allocations. In general it will
not be a polymatroid.

If the quality factors are uniform among all queries,
i.e., $\gamma^k_i = \gamma_i$, the the set of feasible allocations is given by
$P^*_\gamma = \{x; (\frac{x_i}{\gamma_i})_i \in P^*\}$ where $P^*$ is the
AdWords polytope defined as a function of $\alpha^k_j$.
It is a scaled polymatroid, for which a variant of the auction in 
section \ref{sec:clinching_main} satisfies all desirable properties (see
appendix \ref{appendix:framework_limits}).

\section{Limitations of auctions for\columnsversion{}{\\} budget-constrained
agents}
\label{sec:characterization} \label{SEC:CHARACTERIZATION}
Previously, we argued why simple modifications to the clinching
auction would not work for polyhedral environments beyond (scaled)
polymatroids. Here, we explore the possibility 
of designing an auction of a different format achieving those properties
and  show that this is not possible even for two players. We do so
through a general characterization of Pareto-optimal auctions with desirable
properties. Before stating the characterization, we study the case with one
budget-constrained player and prove some lemmas that are useful in proving the
general characterization result later. 

\subsection{One budget-constrained player}\label{subsec:one_budgeted_player}

For ease of exposition, we first focus on $2$ players and assume that the feasible set
of allocations $P$ has a smooth and strictly-concave boundary, in the sense
that for each $v \in \R^2_+$ there is a single point $x^*(v) \in P$ maximizing
$v^t x$ such that $x^*(v)$ is a $\mathcal{C}^\infty$-function. In fact, one
can approximate any polytope by such a set using the technique of Dolev et al
\cite{linial11}. Using compactness arguments, it is possible to get an auction
for the original environment by taking the limit of the auctions obtained for
its $\mathcal{C}^\infty$-approximations.

Assume that player $1$ is not budget constrained and player $2$ has
budget $B_2$ and let $(x^*, p^*)$ be the VCG mechanism for this setting. Now,
we can define the function:
$$\xi(v_1) = \min\{v_2; p_2^*(v_1, v_2) \geq B_2\}$$

\begin{theorem}\label{thm:one_budgeted_player}
 The allocation rule $$x(v_1, v_2) = x^*(v_1, \min\{v_2, \xi(v_1)\})$$ is
monotone. Moreover, when coupled with the appropriate payment rule, it
generates a Pareto-optimal and budget feasible mechanism
\end{theorem}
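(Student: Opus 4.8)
The plan is to reduce everything to two structural facts about the VCG outcome $(x^*,p^*)$ in this two-player instance. First, since the boundary of $P$ is smooth and strictly concave, the maximizer $x^*(v_1,v_2)$ of $v^tx$ over $P$ depends on $(v_1,v_2)$ only through the ratio $v_2/v_1$, and continuously so; write $x^*_1(v_1,v_2)=g_1(v_2/v_1)$ and $x^*_2(v_1,v_2)=g_2(v_2/v_1)$, where $g_1$ is non-increasing and $g_2$ is non-decreasing. Second, the VCG payment of the budget-constrained player is the externality $p^*_2(v_1,v_2)=v_1(\bar{x}_1-x^*_1(v_1,v_2))$, where $\bar{x}_1=\max_{x\in P}x_1$ (player $1$ alone can obtain welfare $v_1\bar{x}_1$, and his welfare in the joint efficient allocation is $v_1x^*_1$). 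Combining the two, the equation $p^*_2(v_1,\xi(v_1))=B_2$ is equivalent to $g_1(\xi(v_1)/v_1)=\bar{x}_1-B_2/v_1$; since the right-hand side is increasing in $v_1$ while $g_1$ is non-increasing, the threshold ratio $\theta^*(v_1):=\xi(v_1)/v_1$ is non-increasing in $v_1$. (When $v_1\bar{x}_1<B_2$ we set $\xi(v_1)=+\infty$; then the mechanism is plain VCG and every claim below holds trivially for such $v_1$.)

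Monotonicity of $x$ is then immediate. We have $x_1(v_1,v_2)=g_1(\min\{v_2/v_1,\theta^*(v_1)\})$ and $x_2(v_1,v_2)=g_2(\min\{v_2/v_1,\theta^*(v_1)\})$. Increasing $v_1$ decreases both $v_2/v_1$ and $\theta^*(v_1)$, hence decreases the inner minimum, hence does not decrease $x_1$ (a non-increasing function of it); increasing $v_2$ does not decrease the inner minimum, hence does not decrease $x_2$ (a non-decreasing function of it). With the allocation monotone, I would define the payments by the Myerson formula recalled in Section~\ref{sec:setting_polyhedral_clinching}, which automatically yields truthfulness and individual rationality.

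For budget feasibility I would evaluate the Myerson integral for player $2$. If $v_2\le\xi(v_1)$, then $x_2(v_1,\cdot)$ agrees with $x^*_2(v_1,\cdot)$ on $[0,v_2]$, so $p_2=p^*_2(v_1,v_2)\le B_2$ by the definition of $\xi$. If $v_2>\xi(v_1)$, then $x_2(v_1,u)$ is the constant $x^*_2(v_1,\xi(v_1))$ for $u\ge\xi(v_1)$, and the integral collapses to $p_2=p^*_2(v_1,\xi(v_1))=B_2$. Player $1$ is unconstrained, so budget feasibility follows. Pareto optimality I would derive from Lemma~\ref{charac-lemma}. If $v_2\le\xi(v_1)$ the outcome is exactly the VCG outcome, so $x=x^*(v)$ is the unique maximizer of $v^tx$ over $P$; then every $d\neq 0$ with $x+d\in P$ satisfies $d^tv<0$, so there is no dominated direction $d$ with $d^tv\ge 0$, and the outcome is Pareto-optimal. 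If $v_2>\xi(v_1)$, then $x=x^*(v_1,\xi(v_1))$ and only player $2$'s budget is exhausted; suppose a dominated direction $d\neq 0$ had $x+d\in P$, $d^tv\ge 0$, and $d_2\le 0$. Since $x$ is the unique maximizer of $(v_1,\xi(v_1))^tx$ over $P$, we get $v_1d_1+\xi(v_1)d_2<0$; subtracting this from $v_1d_1+v_2d_2\ge 0$ gives $(\xi(v_1)-v_2)d_2<0$, which contradicts $\xi(v_1)<v_2$ and $d_2\le 0$. So the outcome is Pareto-optimal here as well.

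The main obstacle is the first step: extracting the externality formula $p^*_2=v_1(\bar{x}_1-x^*_1)$ together with the ratio-only dependence of $x^*$, which is what turns the opaque composition $x^*(v_1,\min\{v_2,\xi(v_1)\})$ into a manifestly monotone object and, at the same time, pins the payment to exactly $B_2$ on the capped region and supplies the single inequality driving the Pareto argument. Once those are in hand, the Myerson-integral bookkeeping and the two-line contradiction via Lemma~\ref{charac-lemma} are routine.
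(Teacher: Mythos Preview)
Your argument is correct. The budget-feasibility computation and the Pareto-optimality argument via Lemma~\ref{charac-lemma} match the paper's almost verbatim. The monotonicity proof, however, is genuinely different in method.

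The paper establishes monotonicity for player~$1$ analytically: it fixes the identity $p_2^*(v_1+t,\xi(v_1+t))\equiv B_2$ with $p_2^*$ written in Myerson-integral form, differentiates in $t$, and reads off $\frac{d}{dt}x_2^*(v_1+t,\xi(v_1+t))\le 0$ from the sign of $\int_0^{\xi}\partial_1 x_2^*(v_1+t,u)\,du$, which is nonpositive because $x_2^*$ is nonincreasing in $v_1$. You instead exploit two structural facts specific to two players and a homogeneous objective: the argmax depends only on the ratio $v_2/v_1$, and the Clarke-pivot externality gives $p_2^*=v_1(\bar{x}_1-x_1^*)$. These collapse the defining equation $p_2^*=B_2$ to $g_1(\xi(v_1)/v_1)=\bar{x}_1-B_2/v_1$, from which monotonicity of the threshold ratio $\theta^*(v_1)=\xi(v_1)/v_1$ (and hence of the allocation) is immediate. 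Your route is more elementary---no differentiation under the integral sign---and yields the explicit description of the capped region as $\{v_2/v_1\ge\theta^*(v_1)\}$. The paper's differentiation argument, on the other hand, uses only the Myerson payment identity and monotonicity of $x_2^*$ in $v_1$; it does not rely on the two-player externality formula, which is why the same calculation carries over to Corollary~\ref{cor:two_budgeted_players} with two budget-constrained players, where no single clean externality expression is available.
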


\begin{proof}
 The main part of the proof is to show that the allocation is monotone. If we
show that, it is clearly budget feasible for player $2$, since we use the
VCG-payment rule until the point the budget of player $2$ gets exhausted and
from that point on, the allocation is constant. When the budgets of the players
are not exhausted, the allocation is efficient (since it mimics VCG) and
therefore is Pareto-optimal. The allocation when the budget of player $2$ is
exhausted is equivalent to the VCG allocation of a pair $(v_1, v'_2)$ with
$v'_2 \leq v_2$, so player $1$ is getting $x_1^*(v_1, v'_2) \geq x_1^*(v_1,
v_2)$ by monotonicity of VCG. This implies Pareto-optimality as a consequence
of Lemma \ref{charac-lemma}.

\emph{Monotonicity:} The allocation rule is clearly monotone for player
$2$. We need to show it is monotone for player $1$, i.e. that the function $t
\mapsto x_1(v_1+t, v_2)$ is monotone non-decreasing. It is clearly so for
intervals where $\xi(v_1+t) \geq v_2$, so let's assume that for $t \in
(-\epsilon, +\epsilon)$ we have $\xi(v_1+t) < v_2$. Our goal is to show that:
$\frac{d}{dt} x_1^*(v_1+t, \xi(v_1+t)) \geq 0$. Since the VCG-allocation lies in
the boundary of $P$, this is the same as showing that $\frac{d}{dt}
x_2^*(v_1+t, \xi(v_1+t)) \leq 0$. The crucial observation is that the
VCG-payment for player $2$ on the curve $(v_1+t, \xi(v_1+t))$ is constant, i.e.:
\columnsversion{
$$\begin{aligned} & B_2  \equiv p_2^*(v_1+t, \xi(v_1+t)) = \xi(v_1+t)
x_2^*(v_1+t, \xi(v_1+t)) -
\int_0^{\xi(v_1+t)} x_2^*(v_1+t,u) du \end{aligned}$$
}{
$$\begin{aligned} & B_2  \equiv p_2^*(v_1+t, \xi(v_1+t)) = \\ & = \xi(v_1+t)
x_2^*(v_1+t, \xi(v_1+t)) -
\int_0^{\xi(v_1+t)} x_2^*(v_1+t,u) du \end{aligned}$$
}
Now, we can simply derivate it with respect to $t$. We use
the notation $\partial_i f(\cdot)$ for the derivative of $f$ with respect to
the $i$-th variable. We also define $x_2^*(t) = x_2^*(v_1+t, \xi(v_1+t))$. Now, 
\columnsversion{
$$\begin{aligned} & 0 = \xi'(v_1+t) x_2^*(t) + \xi(v_1+t) \frac{d}{dt} x_2^*(t)
 - \xi'(v_1+t)
x_2^*(t) - \int_0^{\xi(v_1+t)} \partial_1 x^*_2(v_1+t, u) du \end{aligned}$$
$$\xi(v_1+t) \frac{d}{dt} x_2^*(t) = \int_0^{\xi(v_1+t)} \partial_1 x^*_2(v_1+t,
u) du \leq 0$$
}{
$$\begin{aligned} & 0 = \xi'(v_1+t) x_2^*(t) + \xi(v_1+t) \frac{d}{dt} x_2^*(t)
- \\ & \quad \quad - \xi'(v_1+t)
x_2^*(t) - \int_0^{\xi(v_1+t)} \partial_1 x^*_2(v_1+t, u) du \end{aligned}$$
$$\xi(v_1+t) \frac{d}{dt} x_2^*(t) = \int_0^{\xi(v_1+t)} \partial_1 x^*_2(v_1+t,
u) du \leq 0$$
}
since $x_2^*(v_1, v_2)$ decreases with $v_1$ by the definition of the VCG
allocation.
\end{proof}

A variant of the proof can be used to show the following result for $2$
budget constrained players. This is useful for our general characterization.

\begin{corollary}\label{cor:two_budgeted_players}
 If the functions $\xi_1(v_2), \xi_2(v_1)$ are such that the regions $\{v;
v_2 \geq \xi_2(v_1) \}$ and $\{v; v_1 \geq \xi_1(v_2) \}$ are disjoint, then one
can define 
$$x(v_1, v_2) = x^*(\min\{v_1, \xi_1(v_2)\}, \min\{v_2, \xi_2(v_1)\})$$
$$p_i(v) = v_i x_i(v) - \int_0^{v_i} x_i(u, v_{-i})du.$$ If $p_2(v_1,
\xi_2(v_1)) \equiv B_2$ and $p_1(\xi_1(v_2), v_2) \equiv B_1$, then $(x,p)$ is
a mechanism with the desirable properties.
\end{corollary}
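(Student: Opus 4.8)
The plan is to mimic the proof of Theorem \ref{thm:one_budgeted_player}, handling the two budget thresholds symmetrically but using the disjointness hypothesis to ensure the two ``capping'' operations never interfere with each other. First I would partition the type space $\R_+^2$ into four regions according to whether $v_1 \geq \xi_1(v_2)$ and whether $v_2 \geq \xi_2(v_1)$; the disjointness hypothesis says the region where \emph{both} inequalities hold is empty, so only three cases survive: (a) neither budget binds, where $x(v) = x^*(v)$ and the mechanism is exactly VCG, hence Pareto-optimal and (since no budget is exhausted) trivially budget-feasible; (b) only player $2$'s cap is active, i.e.\ $v_2 \geq \xi_2(v_1)$ and $v_1 < \xi_1(v_2)$, where $x(v) = x^*(v_1, \xi_2(v_1))$; and (c) the symmetric case with roles of $1$ and $2$ swapped. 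In cases (b) and (c) the analysis is literally the single-budget argument of Theorem \ref{thm:one_budgeted_player}, so the substantive work is to check that gluing these pieces along the region boundaries preserves monotonicity of each $x_i(\cdot, v_{-i})$.

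The key steps, in order: (1) \emph{Monotonicity.} Fix $v_{-i}$ and show $t \mapsto x_i(v_i + t, v_{-i})$ is non-decreasing. Within each of the three regions this follows from monotonicity of VCG together with the computation in Theorem \ref{thm:one_budgeted_player} (the derivative identity $\xi(v_1+t)\,\tfrac{d}{dt}x_2^*(t) = \int_0^{\xi(v_1+t)} \partial_1 x_2^*(v_1+t,u)\,du \leq 0$, applied with the appropriate index). At a boundary between region (a) and region (b)/(c), continuity of $x^*$ and of the $\xi_j$ gives that the allocation is continuous there, and one checks the one-sided derivatives have the right sign; the disjointness hypothesis is exactly what prevents a boundary between region (b) and region (c), which is the only configuration where the two caps could fight. (2) \emph{Payments and IC/IR.} Once $x_i(\cdot, v_{-i})$ is monotone, the stated payment formula $p_i(v) = v_i x_i(v) - \int_0^{v_i} x_i(u,v_{-i})\,du$ is, by Myerson/Archer--Tardos, exactly the unique payment rule making the mechanism truthful and individually rational. (3) \emph{Budget feasibility.} For player $2$: if $v_2 < \xi_2(v_1)$ we are in region (a) and $p_2(v) = p_2^*(v) < B_2$ by the definition of $\xi_2$; if $v_2 \geq \xi_2(v_1)$, then since $x_2(v_1, \cdot)$ is constant equal to $x_2^*(v_1,\xi_2(v_1))$ for $v_2 \geq \xi_2(v_1)$, the payment formula gives $p_2(v) = p_2(v_1, \xi_2(v_1))$, which is $B_2$ by the normalization hypothesis $p_2(v_1, \xi_2(v_1)) \equiv B_2$. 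Symmetrically for player $1$, using $p_1(\xi_1(v_2), v_2) \equiv B_1$. (4) \emph{Pareto-optimality.} In region (a) the outcome is VCG, hence efficient and Pareto-optimal. In region (b), the allocation equals the VCG allocation for the type $(v_1, v_2')$ with $v_2' = \xi_2(v_1) \leq v_2$, so player $1$ receives $x_1^*(v_1, v_2') \geq x_1^*(v_1, v_2)$; apply Lemma \ref{charac-lemma} — the only player whose budget is exhausted is player $2$, and since player $1$'s allocation only went up and player $2$'s went down, there is no dominated direction $d$ with $d^t v \geq 0$ and $d_2 \leq 0$. Region (c) is symmetric.

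The main obstacle I expect is step (1) at the seams: verifying that the piecewise-defined allocation remains monotone across the curves $v_1 = \xi_1(v_2)$ and $v_2 = \xi_2(v_1)$, since a priori one cap switching on as the other is off could create a downward jump in some $x_i$. This is precisely where the hypothesis that $\{v : v_2 \geq \xi_2(v_1)\}$ and $\{v : v_1 \geq \xi_1(v_2)\}$ are disjoint does the work — it guarantees that along any axis-parallel ray only one of the two thresholds is ever crossed, reducing every monotonicity check to the already-established single-budget case; without it the corollary is false. The remaining steps are routine given Theorem \ref{thm:one_budgeted_player} and Lemma \ref{charac-lemma}.
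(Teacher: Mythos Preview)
Your proposal is correct and is precisely the ``variant of the proof'' of Theorem~\ref{thm:one_budgeted_player} that the paper invokes without spelling out: partition the type space by which (if any) cap is active, reuse the derivative computation from Theorem~\ref{thm:one_budgeted_player} inside each capped region, and appeal to Lemma~\ref{charac-lemma} for Pareto-optimality. One small imprecision: disjointness does not literally guarantee that an axis-parallel ray crosses only one of the two threshold curves (it could pass from region (b) through (a) into (c)), but what it \emph{does} guarantee---and what your argument actually uses---is that there is no direct (b)/(c) boundary, so each seam is between the VCG region and a single-cap region where continuity and the one-budget monotonicity argument suffice.
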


The above corollary has a strong fixed-point flavour and it is tempting to
believe one could get the existence of such a mechanism from this theorem. This
is however not true, as shown in the next section. However, this result
remains useful as a tool for
searching for such mechanisms whenever they exist. For example, one can extend the
above theorem to prove the existence of the mechanisms for polyhedral
environments when $B_1$ is much larger then $B_2$.

\subsection{Characterization and impossibility}\label{subsec:impossibility}
\label{SUBSEC:IMPOSSIBILITY}

Now we discuss our main negative result: which states an impossibility of
extending the auction for polymatroids to general polyhedral environments.

\begin{theorem}[Impossibility]\label{thm:impossibility_general_polytopes}
 There is no general auction for every polyhedral environment and every pair of
budgets that satisfies the desirable properties.
\end{theorem}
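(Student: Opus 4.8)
The plan is to exhibit a single polytope $P \subseteq \R^2_+$ and a pair of budgets $(B_1, B_2)$ for which no IC, IR, budget-feasible mechanism can be Pareto-optimal on all valuation profiles. Following the setup of Section~\ref{subsec:one_budgeted_player}, I would first smooth $P$ so that it has a $\mathcal{C}^\infty$, strictly concave boundary and a unique maximizer $x^*(v)$ of $v^t x$ for each $v \in \R^2_+$; by the Dolev et al.\ \cite{linial11} approximation and a compactness argument, an impossibility for the smoothed family lifts back to the polyhedral setting. The key structural input is that truthfulness forces each $x_i(\cdot, v_{-i})$ to be monotone non-decreasing with Myerson payments, and Pareto-optimality (Lemma~\ref{charac-lemma}) forces the outcome, whenever \emph{no} budget binds, to coincide with the welfare-maximizing point $x^*(v)$ --- i.e.\ the mechanism must behave like VCG on the region of small valuations. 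So the mechanism is pinned down except on the region where at least one budget binds, and the whole argument is about the geometry of how the allocation must ``bend'' away from VCG there.

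Next I would set up the obstruction. Consider the two curves $\{v : p_2^*(v) = B_2\}$ and $\{v : p_1^*(v) = B_1\}$ in valuation space, which bound the regions $R_2$ where player $2$'s VCG payment would exceed $B_2$ and $R_1$ where player $1$'s would exceed $B_1$. When these regions are disjoint, Corollary~\ref{cor:two_budgeted_players} gives a valid mechanism by clamping each coordinate independently. The impossibility must therefore come from choosing $P$ and $(B_1, B_2)$ so that $R_1 \cap R_2 \neq \emptyset$: there is an open set of profiles $v$ at which both budgets would be violated by VCG simultaneously. On such a profile the mechanism must pick a point $x(v)$ on the Pareto boundary of $P$ with $p_1(v) \le B_1$, $p_2(v) \le B_2$; by Lemma~\ref{charac-lemma}, at least one of the following must fail to have slack, and the monotonicity-plus-Myerson constraints along the two coordinate directions entering this region will be shown to be mutually contradictory. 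Concretely, I would fix a profile $\bar v$ in $R_1 \cap R_2$ and track $x_1(v_1, \bar v_2)$ as $v_1$ increases into the region: monotonicity forces it up, but the budget-feasibility of player $1$ (payment $= v_1 x_1 - \int_0^{v_1} x_1(u, \bar v_2)\,du \le B_1$) caps how fast $x_1$ can grow, which via the boundary relation forces $x_2$ down, which by the symmetric argument applied to player $2$ (who is also over budget) forces $x_2$ up --- contradiction. The cleanest way to make this rigorous is to derive, as in the proof of Theorem~\ref{thm:one_budgeted_player}, an ODE-type identity for the allocation along the budget-binding curve of each player and show the two identities cannot hold simultaneously on $R_1 \cap R_2$.

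I expect the main obstacle to be pinning down \emph{which} polytope and budgets make $R_1 \cap R_2$ nonempty in a way that also makes the contradiction clean: one needs the two budget-binding curves to genuinely cross, and one needs enough control over $x^*$ and its derivatives near the crossing to run the ODE comparison without the smoothing parameters interfering. A natural candidate is a simple symmetric constraint such as (a smoothed version of) $\{x \ge 0 : x_1 + x_2 \le 1,\ x_1 \le c,\ x_2 \le c\}$ for suitable $c < 1$, or more simply a scaled/truncated simplex, with $B_1 = B_2$ small enough that at symmetric high valuations both players are budget-constrained but the feasible region still forces them to compete --- this is exactly the regime the paper flags as impossible. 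A secondary subtlety is handling the non-uniqueness/degeneracy of Pareto-optimal outcomes (Lemma~\ref{charac-lemma} only says \emph{some} direction is ruled out), so the argument should be phrased in terms of the necessary first-order conditions the allocation must satisfy rather than assuming the outcome is uniquely determined. Once the crossing configuration is fixed, the rest is the differentiation-along-the-curve computation already rehearsed in Theorem~\ref{thm:one_budgeted_player}, applied twice and combined.
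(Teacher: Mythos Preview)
Your proposal has a genuine gap at the level of strategy, not just detail. The central heuristic---``the obstruction is $R_1 \cap R_2 \neq \emptyset$, and there the two budget-binding ODE identities will clash''---is false as stated. For any polymatroid (e.g.\ the simplex $\{x_1+x_2\le 1\}$ with small equal budgets) both VCG budget curves cross and yet the clinching auction of Section~\ref{sec:clinching_main} works. So overlapping budget regions are necessary but far from sufficient, and no purely ODE-style comparison of the two one-sided identities from Theorem~\ref{thm:one_budgeted_player} can yield the contradiction without additional input about the \emph{shape} of $P$. Relatedly, your candidate environment $\{x_1+x_2\le 1,\ x_1\le c,\ x_2\le c\}$ is itself a polymatroid ($f(\{1\})=f(\{2\})=c$, $f(\{1,2\})=\min\{1,2c\}$), so it is covered by the positive result and cannot be a counterexample.

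The paper's proof takes a different route that leans on, rather than smooths away, the polyhedral structure. It fixes the non-polymatroidal polytope $P=\{2x_1+x_2\le 6,\ x_1+2x_2\le 6\}$ with $B_1=B_2=1$ and uses Lemma~\ref{lemma_xi_curves}: because $P$ has finitely many edges, the budget-exhaustion curves $\xi_i$ are forced to sit on the (finitely many) edge-normal rays. This rigidity, together with Lemma~\ref{E_curves_lemma}, lets one \emph{compute} the mechanism explicitly region by region: first it must be VCG on $[0,\tfrac{2}{3})^2$; then an exact integral pins $\tilde v_2$ at $\approx 1.2381$; then symmetry and monotonicity force $x(v)\equiv(2,2)$ on a large upper-right block. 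The contradiction is not an ODE clash but a direct violation of Pareto-optimality: for $v_1$ just below $\tfrac{2}{3}$ and $v_2\to\infty$, the allocation stays $(2,2)$ while player~2's budget is never exhausted, yet he should get strictly more than $2$. Smoothing $P$ would erase exactly the edge-normal discreteness that makes this computation possible, so your first step already moves away from the mechanism that works.
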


We prove it in two steps: first we prove a
sequence of lemmas characterizing $2$-player auctions for polyhedral
environments satisfying all the desirable properties. Then we fix a specific
polyhedral environment and argue that no mechanism can possibly satisfy this
characterization.

First, we begin by understanding the format of an auction with the desirable
properties where the environment is a packing polytope $P \subseteq \R^2_+$.
We start by defining a family of VCG auctions.\\

{\bf VCG-family:} We say that a mechanism is in the
{\em VCG-family} if its allocation $x(v) \in P$ is such that $$x(v) \in X^*(v)
:= \text{argmax}_{x \in P} v^t x.$$ Notice that there might be more than one
such
mechanism: if $v$ is normal to an edge of the polytope,
then the entire edge is in the argmax. Nevertheless, $x(v_1+, v_2) = \lim_{v'_1
\downarrow v_1} x(v'_1, v_2)$ and $x(v_1, v_2+) = \lim_{v'_2 \downarrow v_2}
x(v_1, v'_2)$ are common for the entire family, as we see in the following
lemma. A
consequence of this fact is that the payment function might not be unique,
but $p(v_1+, v_2)$ and $p(v_1, v_2+)$ are unique. 

\begin{lemma}\label{vcg_family_limit_lemma}
Given a convex set $P \subseteq \R^2_+$ and two allocation rules $x(v),
\tilde{x}(v) \in X^*(v) := \text{argmax}_{x \in P} v^t x$, then
$x(v_1+,v_2) = \tilde{x}(v_1+,v_2)$, where  $x(v_1+, v_2) = \lim_{v'_1
\downarrow v_1} x(v'_1,v_2)$.
\end{lemma}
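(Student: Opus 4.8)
\textbf{Proof proposal for Lemma \ref{vcg_family_limit_lemma}.}

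The plan is to use the standard fact that the support function of a convex set is differentiable exactly where the maximizer is unique, together with a monotonicity/sandwiching argument that pins down the one-sided limit. First I would fix $v_2$ and study the curve $v_1 \mapsto x(v_1, v_2)$ for $x(\cdot) \in X^*(\cdot)$. For each $v_1$, any choice of maximizer lies in the face $F(v_1) := \text{argmax}_{x \in P} (v_1, v_2)^t x$, which is either a single point or an edge of $P$ (using that $P \subseteq \R^2_+$ is two-dimensional). The key monotonicity observation is that if $v_1' > v_1$, then for $x \in F(v_1)$ and $x' \in F(v_1')$ we have $x_1' \geq x_1$: this follows from adding the two optimality inequalities $(v_1,v_2)^t x \geq (v_1,v_2)^t x'$ and $(v_1',v_2)^t x' \geq (v_1',v_2)^t x$, which gives $(v_1' - v_1)(x_1' - x_1) \geq 0$. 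Hence $v_1 \mapsto x_1(v_1, v_2)$ is non-decreasing regardless of the tie-breaking rule, so the right limit $\lim_{v_1' \downarrow v_1} x_1(v_1', v_2)$ exists and is independent of the choice of allocation rule within the VCG-family.

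Next I would upgrade this to the full vector $x(v_1+, v_2)$. Since $x_1(v_1+,v_2)$ is determined, I need $x_2(v_1+,v_2)$ determined too. Here I would use that for $v_1' > v_1$ strictly, the face $F(v_1')$ for \emph{generic} $v_1'$ is a single point (the set of $v_1'$ for which $(v_1',v_2)$ is normal to an edge of $P$ is finite, since $P$ has finitely many edges and each edge has a single outward normal direction up to scaling); therefore along any sequence $v_1' \downarrow v_1$ we may as well assume $x(v_1',v_2)$ is the \emph{unique} maximizer, so $\tilde x(v_1', v_2) = x(v_1', v_2)$ eventually, and the two rules have the same sequence and hence the same limit. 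Alternatively, and perhaps more cleanly: the limit point $y := \lim_{v_1' \downarrow v_1} x(v_1', v_2)$ exists by compactness of $P$ plus the monotonicity of the first coordinate (pass to a subsequence to get existence of $y_2$, then argue uniqueness), $y \in P$ by closedness, and $y \in F(v_1)$ by continuity of the linear objective; combined with $y_1 = x_1(v_1+,v_2)$ being pinned down and $y$ lying on the edge $F(v_1)$ (a one-dimensional segment, on which $y$ is determined by its first coordinate unless the edge is vertical), we get $y$ uniquely determined. The vertical-edge case $F(v_1)$ needs a separate remark: if $F(v_1)$ is a vertical segment then $v_1$ is such that $(v_1,v_2)$ is normal to a vertical edge, meaning $v_1 = 0$; this boundary case can be handled directly or excluded since we work with positive valuations.

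The main obstacle I anticipate is exactly this last point — ruling out the degenerate configuration where the relevant face is a segment parallel to a coordinate axis so that "determined first coordinate" does not force "determined point." I expect to dispatch it by noting that such faces correspond to only finitely many (in fact, for the first coordinate, essentially one) critical values of $v_1$, and near any such critical $v_1$, approaching from the right $v_1' \downarrow v_1$ lands in the region of strict uniqueness, so the one-sided limit is unproblematic; the lemma only claims equality of one-sided limits, not of the values $x(v_1,v_2)$ themselves, which is what makes this work. The $v_2+$ statement is symmetric. I would present the proof centered on the two-line monotonicity inequality and the finiteness-of-critical-normals observation, keeping the degenerate-face discussion to a short paragraph.
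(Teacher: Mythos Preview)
Your argument is correct and follows essentially the same idea as the paper's brief proof: both rest on the cross-monotonicity $(v_1'-v_1)(x_1'-x_1)\geq 0$ obtained from adding the two optimality inequalities (which the paper phrases geometrically as ``points to the right of $x(v_1+,v_2)$ are better''), which pins down $x_1(v_1+,v_2)$ independently of the tie-breaking rule; the second coordinate then follows because the limit lies in the face $F(v_1,v_2)$. One small slip to fix: a vertical face has outward normal parallel to $(1,0)$, so the degenerate case is $v_2=0$, not $v_1=0$ --- and in fact since $F(v)\perp v$, the face can never be vertical when $v_2>0$, so the case you worried about dissolves immediately under the standing positivity assumption.
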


The proof is elementary and can be found in Appendix
\ref{appendix:missing_proofs_part_1}.
To illustrate this fact, consider the
simple case of $P = \{x \in \R^2_+; x_1 +
x_2 \leq 1\}$. Then the VCG
mechanism is well-defined for $x_1 \neq x_2$, which is, simply to allocate to
the player with the highest value the entire amount. But notice that completing
this mechanism with any allocation in the points $(v,v)$ generates a truthful
mechanism. The payments of different mechanisms of the VCG family differ on
$(v,v)$, for example $p_i(v,v) = v x_i(v)$, but notice that the payments
everywhere else are well-defined. \\

{\bf Pareto-optimal mechanisms}: Now we turn our attention back
to Pareto-optimal mechanisms for two budget-constrained players. Let $(x,p)$
be such mechanism. As a direct consequence of the characterization of Pareto
optimal outcomes (Lemma \ref{charac-lemma}), we know the following:

\begin{itemize}
 \item if $p_i(v) < B_i$ then $x_i(v) \geq \min \{x_i; x \in X^*(v) \}$
 \item if $p_1 (v) <B_1, p_2 (v) < B_2$ then $x(v) \in X^*(v)$
\end{itemize}

\noindent And a simple consequence of truthfulness:

\begin{itemize}
 \item if $p_i (v) = B_i$ then for all $v'_i \geq v_i$, $x(v'_i, v_{-i}) =
x(v_i, v_{-i})$.
\end{itemize}

Now, we are ready to start proving the characterization theorem. We will
characterize the mechanism in terms of the regions in the space of valuations
where the budgets get exhausted. For formally, we are interested in
understanding the sets:
$$E_i = \{v \in \R^2_+; p_i(v) = B_i\}$$

\begin{lemma}\label{E_curves_lemma}
 If mechanism $(x,p)$ has the desirable properties, then either $E_1 \cap E_2 =
\emptyset$ or there exists some $a \in \R^2_+$ such that $\{v; v
> a\} \subseteq E_1 \cap E_2 \subseteq \{v; v \geq a\}$.
\end{lemma}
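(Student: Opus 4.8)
The plan is to analyze the structure of the exhaustion regions $E_1, E_2$ using monotonicity and truthfulness, and to show that once \emph{both} budgets are simultaneously exhausted at some valuation profile, they must stay simultaneously exhausted at all componentwise-larger profiles. The key structural facts I will use are: (i) the truthfulness consequence stated just above the lemma — if $p_i(v) = B_i$ then $x(v'_i, v_{-i}) = x(v_i, v_{-i})$ for all $v'_i \ge v_i$, so in particular $p_i(v'_i, v_{-i}) = p_i(v_i, v_{-i}) = B_i$, hence each $E_i$ is ``upward closed in coordinate $i$''; and (ii) the monotonicity of $x_i$ in $v_i$ together with the payment formula $p_i(v) = v_i x_i(v) - \int_0^{v_i} x_i(u, v_{-i})\,du$, which makes $p_i$ nondecreasing in $v_i$ for fixed $v_{-i}$. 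Assuming $E_1 \cap E_2 \ne \emptyset$, I would pick a point $\bar v$ in the intersection and then take $a$ to be the ``lower-left corner'' of $E_1 \cap E_2$, e.g. $a = (\inf\{v_1 : (v_1, v_2) \in E_1 \cap E_2 \text{ for some } v_2\}, \ldots)$, or more carefully the componentwise infimum over the intersection; the goal is then to show $\{v : v > a\} \subseteq E_1 \cap E_2 \subseteq \{v : v \ge a\}$.

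The second inclusion $E_1 \cap E_2 \subseteq \{v : v \ge a\}$ should follow almost by definition of $a$ as the relevant infimum, once I argue that the intersection is closed enough (using continuity of the $C^\infty$-approximation setup, the payment functions are continuous, so each $E_i$ is closed). For the first inclusion, the heart of the argument: given $v > a$, I want to conclude $v \in E_1$ and $v \in E_2$. Here I would combine the upward-closure in each coordinate (fact (i)) with a ``diagonal'' monotonicity argument. Concretely: since $a$ is the corner of the intersection, there are points $v^{(1)}, v^{(2)} \in E_1 \cap E_2$ with $v^{(k)}$ arbitrarily close to $a$; by coordinatewise upward closure of $E_i$ in coordinate $i$, from a point in $E_1 \cap E_2$ near $a$ one can slide up in coordinate $1$ to stay in $E_1$, and up in coordinate $2$ to stay in $E_2$ — but I need both simultaneously. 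The trick is that when both budgets are exhausted, by the first Pareto consequence ($p_i < B_i \Rightarrow x_i(v) \ge \min\{x_i : x \in X^*(v)\}$) the allocation is ``frozen'' — from fact (i), raising $v_1$ keeps $x$ constant, and on that frozen allocation raising $v_2$ also keeps $x$ constant (since $x_2$ is already at the value forced by $p_2 = B_2$ and cannot increase without violating $p_2 \le B_2$, by monotonicity of $p_2$ in $v_2$). So on the region above such a point, $x$ is literally constant and both payments stay pinned at $B_1, B_2$.

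The main obstacle I anticipate is handling the \emph{boundary} carefully: ensuring that $E_1 \cap E_2$ really does have the product-of-rays structure near its corner rather than some more exotic shape, and in particular ruling out the possibility that $E_1 \cap E_2$ is nonempty but ``thin'' (e.g. a single curve) so that no open upper-right quadrant sits inside it. To rule this out I would argue by contradiction: if $v \in E_1 \cap E_2$ but some $v' > v$ has, say, $v' \notin E_2$ (so $p_2(v') < B_2$), then by the first Pareto consequence $x_2(v')$ is at least the VCG level, which is at least $x_2(v)$; combined with the truthfulness freezing at $v$ and monotonicity, I can push $p_2(v')$ back up to $B_2$, a contradiction. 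Iterating / taking infima then yields the clean corner $a$ and both inclusions. I expect the write-up to be short modulo this case-check, since all the needed lemmas (monotonicity, the payment formula, Lemma \ref{charac-lemma}, and the three bulleted consequences) are already in hand.
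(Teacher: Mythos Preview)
Your overall plan matches the paper's: take $a$ to be the componentwise infimum of $E_1 \cap E_2$, get the outer inclusion essentially by definition, and for the inner inclusion argue that above any $v^0 \in E_1 \cap E_2$ the allocation is frozen and both payments stay at $B_i$. The genuine gap is in how you justify the ``frozen allocation'' step. You propose: raise $v_1$ first (allocation frozen because $p_1(v^0) = B_1$), then raise $v_2$, which you justify by ``$x_2$ is already at the value forced by $p_2 = B_2$ and cannot increase without violating $p_2 \le B_2$.'' But at the intermediate point $(v_1, v^0_2)$ you have \emph{not} established $p_2 = B_2$; that is exactly what you are trying to prove, so this second freeze is circular. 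Your fallback in the obstacle paragraph, ``$x_2(v')$ is at least the VCG level, which is at least $x_2(v)$,'' also does not work: for $v' > v$ the VCG allocation to player $2$ is generally \emph{smaller}, not larger, so that inequality need not hold.

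The paper closes this gap with a short boundary argument that never touches payments at intermediate points. From $v^0 \in E_1 \cap E_2$ and $v \ge v^0$, set $v^1 = (v_1, v^0_2)$ and $v^2 = (v^0_1, v_2)$. Freezing in each coordinate at $v^0$ gives $x(v^1) = x(v^2) = x(v^0)$. Monotonicity then yields $x_1(v) \ge x_1(v^2) = x_1(v^0)$ and $x_2(v) \ge x_2(v^1) = x_2(v^0)$, so $x(v) \ge x(v^0)$ componentwise; since both allocations lie on the Pareto boundary of the packing polytope $P$, this forces $x(v) = x(v^0)$. Only now does the payment formula enter: because allocations are on the boundary, $x_1$ nondecreasing in $v_1$ implies $x_2(v_1, u) \le x_2(v^0_1, u)$ for every $u$, and combined with $x_2(v) = x_2(v^0_1, v_2)$ the formula gives $p_2(v) \ge p_2(v^0_1, v_2) = B_2$, hence equality by budget feasibility. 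The same boundary trick also handles your anticipated ``corner'' obstacle (two points of $E_1 \cap E_2$ realizing the two coordinate infima separately): it shows any two points of $E_1 \cap E_2$ share the allocation of their componentwise minimum, which is what lets the argument descend to $a$.
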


\begin{proof}
 Assume that $E_1 \cap E_2$ is not empty. Then we will prove the lemma in two
parts. For the first part we will prove two statements: (i) if that if $v^0
\in E_1 \cap E_2$ and $v \geq v^0$ then $x(v) = x(v^0)$ and (ii) if $v, v' \in
E_1 \cap E_2$ and $v^*_i = \min\{v_i, v'_i\}$ then $x(v^*) = x(v) = x(v')$.
Then for the second part, we show that this whole region that has constant
allocation has budget exhausted for the two players. See figure
\ref{fig:lemma_E_curves_lemma} for an illustration of the proof.

For (i), let $v^1 = (v_1, v^0_2)$ and $v^2 = (v^0_1, v_2)$ and notice that
$x(v^0) = x(v^1) = x(v^2)$, since budgets are exhausted so the allocation
can't increase. By
monotonicity, $x_2(v) \geq x_2(v^1) = x_2(v^0)$ and $x_1(v) \geq x_1(v^2)  =
x_1(v^0)$. Since all allocations lie in the boundary of the polytope, we must
have $x(v) = x(v^0)$.

The proof of (ii) is very similar, define $v^0_i = \max\{v_i, v'_i\}$. Then by
(i), $x(v) = x(v^0) = x(v')$ now, by the exact same argument as above we show
that $x(v^*) = x(v^0)$.

Now, if $a_i = \inf \{v_i; v \in E_1 \cap E_2\}$, let us show that for $v > a$,
$v \in E_1 \cap E_2$. Let us show that $v \in E_2$ and then $E_1$ is analogous.
By definition, there is some $v' \in E_1 \cap E_2$ with $v'_1 < v_1$. Then
$(v'_1, v_2) \in E_2$ since the allocation is constant, for $v > a$, the budget
of $2$ is exhausted in $(v'_1, v_2)$ iff it is exhausted in $(v'_1, v'_2)$.
Now, note that by monotonicity $x_1(v'_1, u) \leq x_1(v_1, u)$ and since
allocation is in the boundary of the polytope  $x_2(v'_1, u) \geq x_2(v_1, u)$.
By the payment formula, $p_2(v) = v_2 x_2(v) - \int_0^{v_2} x_2(v_1, u) du$
and by the fact that $x_2(v) = x_2(v'_1, v_2)$, we have that $B_2 \geq p_2(v)
\geq p_2(v'_1, v_2) = B_2$, so $v \in E_2$.
\end{proof}

\begin{figure*}
\centering
\includegraphics{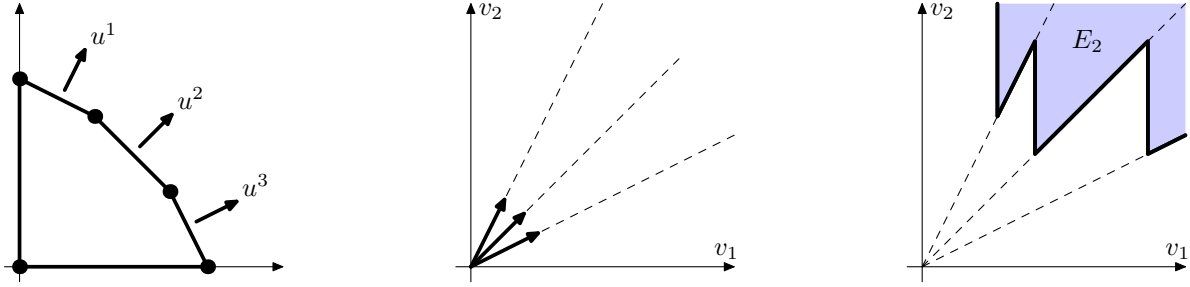}
\caption{Illustration of the proof of Lemma \ref{E_curves_lemma}}
\label{fig:lemma_E_curves_lemma}
\end{figure*}

\comment{
The following is an easy consequence of the previous proof:

\begin{corollary}
 If mechanism $(x,p)$ has the desirable properties and $\{v; v
> a\} \subseteq E_1 \cap E_2$, then for any fixed $v_2$, $x(v_1, v_2)$ is
constant in the range $v_1 \in
(a_1, \infty)$. Similarly for any fixed $v_1$, $x(v_1, v_2)$ is constant in the
range $v_2 \in (a_2, \infty)$.
\end{corollary}

In the next lemma, we compare the sets $E_1, E_2$ with the sets corresponding
to values where the budgets $B_1, B_2$ are exceeded in VCG. We define:
$$E_i^* = \{v; p^*_i(v_i +, v_{-i}) \geq B_i\}$$
where $p^*(v)$ is the payment function of a mechanism in the VCG family. As a
consequence of Lemma \ref{vcg_family_limit_lemma}, the definition is
independent of the choice among the members of the VCG family.

\begin{lemma}
If mechanism $(x,p)$ has the desirable properties and $a \in \R^2_+$ is such
that in $E_1 \cap  E_2 \cap \{x; x<a\} = \emptyset$ then $E_i \cap \{x; x<a\} 
\subseteq E_i^* \cap \{x; x<a\}$.
\end{lemma}

\begin{proof}
 Suppose $v < a$ and $v \in E_1$. Let $$\tilde{v}_1 = \inf\{v'_1;
(v'_1,v_2) \in E_1\}$$ and $(x^*, p^*)$ be a mechanism in the VCG-family.
Clearly: $x_1(\tilde{v}_1+, v_2) \leq
x_1^*(\tilde{v}_1+,v_2)$ since $\tilde{v}_1+\epsilon \in E_1$ so it it not in
$E_2$. Therefore, we have:
$$\begin{aligned}
p_1^*(v_1 +, v_2) & \geq p_1^*(\tilde{v}_1 +, v_2) = \\ & = \tilde{v}_1
x^*(\tilde{v}_1+, v_2) - \int_0^{\tilde{v}_1} x_1^* (u,v_2) du \geq \\ & \geq
 \tilde{v}_1
x(\tilde{v}_1+, v_2) - \int_0^{\tilde{v}_1} x_1 (u,v_2) du = \\ & =
p_1(\tilde{v_1}+,v_2) =  p_1(v_1+,v_2)   
  \end{aligned}
 $$
\end{proof}
}

The next lemma further describes the regions $E_1$ and $E_2$:

\begin{lemma}\label{lemma_xi_curves}
If $P$ is a packing polytope, there is a finite set of vectors $\{u^1, u^2,
\hdots, u^k\}$ such that for $v \neq t u^i$ for some $t \geq 0$, the VCG family
is uniquely defined. Moreover, if $(x,p)$ is a mechanism with the desirable
properties and $E_1 \cap E_2 \subseteq \{v; v \geq a\}$ then if
$\xi_i(v_{-i}) = \inf\{v_i; p_i(v_i, v_{-i}) = B_i\}$ and $v_{-i} < a_{-i}$,
then $(\xi_i(v_{-i}),v_{-i}) = t u^i$ for some $t,i$.
\end{lemma}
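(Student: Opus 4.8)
\textbf{Proof plan for Lemma \ref{lemma_xi_curves}.}

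The plan is to separate the two assertions. The first assertion — that for a packing polytope $P \subseteq \R^2_+$ there is a finite set of "bad directions" $\{u^1,\dots,u^k\}$ off of which the VCG family is single-valued — is essentially a structural fact about linear optimization over a polytope. A packing polytope in $\R^2_+$ has finitely many edges; for a value vector $v$ in general position, $\argmax_{x\in P} v^t x$ is a single vertex (or, on the degenerate part of the boundary, still unique by the downward-closed/packing structure), and the argmax fails to be a singleton exactly when $v$ is normal to one of the finitely many edges of $P$. Collect these normal directions as $u^1,\dots,u^k$. By Lemma \ref{vcg_family_limit_lemma} the limits $x(v_1+,v_2)$ and $x(v_1,v_2+)$ agree across all members of the VCG family even on these bad rays, so the payment functions $p(v_1+,v_2)$, $p(v_1,v_2+)$ are well-defined everywhere; this is what makes $\xi_i$ well-defined as a limit.

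The second assertion is the substantive one: I would show that the boundary curve $v_{-i}\mapsto(\xi_i(v_{-i}),v_{-i})$, restricted to the region $v_{-i}<a_{-i}$ (i.e. below the corner where both budgets are simultaneously exhausted), must lie along one of these finitely many rays $tu^i$. Fix $v_{-i}<a_{-i}$ and set $w_i=\xi_i(v_{-i})$. By definition of $\xi_i$ and continuity of payments (from the right), $p_i(w_i+,v_{-i})=B_i$, and since $v_{-i}<a_{-i}$ we are \emph{not} in $E_1\cap E_2$, so player $-i$'s budget is slack there: $p_{-i}<B_{-i}$. The idea is to invoke the bulleted consequences of Lemma \ref{charac-lemma} stated just before: at $(w_i,v_{-i})$, since player $i$'s budget is exhausted and player $-i$'s is not, and since once $p_i=B_i$ the allocation is frozen in the $v_i$-direction (the truthfulness bullet), the mechanism on the curve $\xi_i$ behaves like VCG at a "pinned" value $v_i'=w_i$. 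Now I would run a derivative argument in the spirit of the proof of Theorem \ref{thm:one_budgeted_player}: the VCG-payment identity $B_i \equiv p_i^*(w_i(v_{-i}),v_{-i})$ along the curve, differentiated in $v_{-i}$, forces $x_i^*$ to be constant along the curve, hence $x^*$ constant, hence the curve stays on a single facet of $P$ — and the value vectors that keep the VCG-optimal point on a fixed facet form exactly the ray(s) normal to that facet, i.e. $tu^i$. A subtlety: to conclude the curve lies on \emph{one} ray (not jumping between several facets), I would use that $\xi_i$ is monotone (from monotonicity of $x_i$ and the payment formula) together with the constancy of the allocation, which pins down both the facet and then the ray.

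I expect the main obstacle to be the last step — ruling out that $\xi_i$ "slides along the boundary" of $P$, passing through the relative interiors of several edges and thus through several bad rays. The clean statement wants a single $u^i$ for a whole segment of $\xi_i$; justifying this requires carefully combining (a) the allocation being frozen in the $v_i$-direction once $p_i=B_i$ (truthfulness), (b) the derivative computation showing $x_i^*$ constant \emph{along} the $\xi_i$-curve, and (c) the fact that the region of $v$ for which $\argmax_{x\in P}v^tx$ contains a fixed boundary face is exactly the normal cone of that face, which for a packing polytope in $\R^2_+$ is either a single ray (edge) or a single vertex. The boundary cases — where $\xi_i$ hits a vertex of $P$ or the boundary of the feasible region of valuations, or where multiple edges have parallel normals — need a short separate argument, but the generic picture is that smoothness/strict-concavity approximation (as invoked elsewhere in this section) reduces everything to the finitely-many-facets combinatorics.
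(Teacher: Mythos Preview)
Your treatment of the first assertion is correct and matches the paper: the bad directions are exactly the outward normals to the finitely many edges of $P$, and off these rays the VCG optimizer is a single vertex.

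For the second assertion your plan diverges from the paper's and carries a real gap. You propose to differentiate the identity $B_i \equiv p_i^*(\xi_i(v_{-i}),v_{-i})$ along the curve, in the style of Theorem~\ref{thm:one_budgeted_player}. But that theorem was set up precisely for environments with smooth strictly concave boundary, where $x^*$ is $\mathcal{C}^\infty$. Here $P$ is a polytope, so $x^*$ is piecewise \emph{constant}: it equals a fixed vertex on each open cone between consecutive rays and jumps only on the rays $t u^j$. Away from the rays $\partial_1 x_i^*\equiv 0$, so your derivative computation yields no information, and on the rays the derivative is not defined. Worse, your implication ``$x^*$ constant $\Rightarrow$ the curve lies on a ray'' has the geometry backwards: if $x^*$ is constant equal to a vertex, the value vectors realizing that vertex fill the two-dimensional normal cone of that vertex, not a single ray. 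You also have not justified that $p_i=p_i^*$ at the threshold, which would require both budgets slack just below; only player $i$'s slackness is immediate there.

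The paper's argument is a short pointwise contradiction that exploits exactly the piecewise-constant structure. Assume $(v_1,\xi_2(v_1))$ is \emph{not} on any ray $t u^j$. Then $x^*(v_1,\cdot)$ is constant on some interval $[\xi_2(v_1)-\delta,\xi_2(v_1)+\delta]$. Slightly above the threshold, $(v_1,v'_2)\in E_2\setminus E_1$ (using $v_1<a_1$), so the Pareto bullet for player~$1$ and the fact that $X^*$ is a singleton give $x_1\ge x_1^*$, hence $x_2(v_1,v'_2)\le x_2^*$. Slightly below, $(v_1,v'_2)\notin E_2$, so the Pareto bullet for player~$2$ gives $x_2(v_1,v'_2)\ge x_2^*$. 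Since $x_2^*$ is constant on the interval and the allocation freezes once in $E_2$, monotonicity sandwiches $x_2(v_1,\cdot)$ to be constant across the threshold, so $p_2$ cannot reach $B_2$ exactly at $\xi_2(v_1)$ --- a contradiction.

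Finally, your ``main obstacle'' is a red herring. The lemma is pointwise: for each fixed $v_{-i}<a_{-i}$, the single point $(\xi_i(v_{-i}),v_{-i})$ lies on \emph{some} ray $t u^j$. The statement that the whole curve $\xi_i$ is piecewise along these rays is a separate remark the paper makes \emph{after} the lemma, using a short monotonicity argument; it is not part of what you need to prove here.
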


\begin{figure*}
\centering
\includegraphics{lemma_charac_21.mps}
\caption{Illustration of the proof of Lemma \ref{lemma_xi_curves}}
\label{fig:lemma_xi_curves}
\end{figure*}

\begin{proof}
The set of vectors $\{u^1, u^2,\hdots, u^k\}$ is simply the set of normals of
the edges of the polytope as depicted in the first part of 
Figure~\ref{fig:lemma_xi_curves}. If $v$ is not an edge in the polytope, then
the
point in $P$ maximizing $v^t x$ is a vertex and therefore uniquely defined. If
we draw the lines $u^i \cdot t$ for $t>0$ we divide the space of all possible
valuations in regions (see second part of the figure): the
regions correspond to the vertices and the lines to edges of the polytope. 

Now, given a certain mechanism $(x,p)$, suppose that $v_1 < a_1$ the vector
$(v_1, \xi_2(v_1))$ is not
 normal to any edge of $P$. Then clearly $x^*(v_1,
\xi_2(v_1))$ is well-defined and moreover, for some $\delta > 0$ and $v'_2 \in
[\xi_2(v_1)-\delta, \xi_2(v_1)+\delta]$, $x^*(v_1, v'_2)$ is well-defined and
constant in the $v'_2$-range. Also $(v_1, v'_2) \notin E_1 \cap E_2$, since
$v_1 < a_1$. For such $v'_2 > \xi_2(v_1)$, we
know that $(v_1, v'_2) \in E_2$,  thus it is not in $E_1$ and therefore
$x_2(v_1, v'_2) = x_2(v_1,\xi_2(v_1)) \leq x_2^*(v_1,\xi_2(v_1))$. For $v'_2 <
\xi_2(v_1)$, $(v_1, v'_2) \notin
E_2$ so  $x_2(v_1, v'_2) \geq x_2^*(v_1,v'_2)$. Using that $x_2(v_1, v'_2)$ is
constant in $v'_2$ in this range and taking $v'_2 \uparrow \xi_2(v_1)$, we get:
$x_2(v_1, v'_2) \geq x_2^*(v_1,\xi_2(v_1))$.

Now, by monotonicity, 
we have $x_2(v_1, v'_2) = x_2^*(v_1,\xi_2(v_1))$ for all $v'_2$ in the interval
$(\xi_2(v_1)-\delta, \xi_2(v_1)+\delta)$. Therefore the budget of player $2$
could not have been exhausted on $\xi_2(v_1)$.
\end{proof}

The third part of Figure \ref{fig:lemma_xi_curves} illustrates how the region
$E_2$ typically looks like. If $u^1, u^2, \hdots, u^k$ are sorted such that
$u^1$ corresponds to the edge that is higher to the left and $u^k$ corresponds
to the edge that is lower to the right, then we can divide $[0,a_1)$ in
segments $[0, r_1), [r_1, r_2), \hdots, [r_k, a_1)$ such that for $v_1 \in
[0,r_1)$, $\xi_2(v_1) = \infty$, and for $v_1 \in [r_i, r_{i+1})$, $(v_1,
\xi_2(v_1))$ is in the line $\{t \cdot u^i; t \geq 0\}$. To see why this is
true, consider $v_1 < v'_1$ and assume that $(v_1,
\xi_2(v_1))$ is in the line $\{t \cdot u^i; t \geq 0\}$. Then   $(v'_1,
\xi_2(v'_1))$ cannot be strictly above this line, by a similar argument 
 used in Lemma \ref{E_curves_lemma}: look at the allocation curves $x_2(v_1,
v_2) \geq x_2(v'_1, v_2)$ and the payment formula -- then player $2$ must be
paying just above the  $\{t \cdot u^i; t \geq 0\}$ for $v'_1$ line at least as
much as he was paying above this line for $v_1$ and hence his budgets must be
exhausted.

\subsection{Proof of the Impossibility
Theorem}

Now, we are ready to prove Theorem \ref{thm:impossibility_general_polytopes},
which states that there is no
general auction with all the desirable properties for all polyhedral
environments $P$. We fix the following setting:  a set of feasible allocations
$$P
= \{x \in \R^2_+; 2x_1 + x_2 \leq 6, x_1 + 2x_2 \leq 6\}$$ and budgets $B_1 =
B_2
= 1$. Assume that $(x,p)$ is a mechanism with the desirable properties for this
setting. We will use the characterization lemmas in Section
\ref{subsec:impossibility} to find a contradiction. We illustrate
the
flow of the proof in Figure \ref{fig:impossibility} (a brief summary of the
proof is given in the caption of the figure).

\begin{figure*}
\centering
\includegraphics{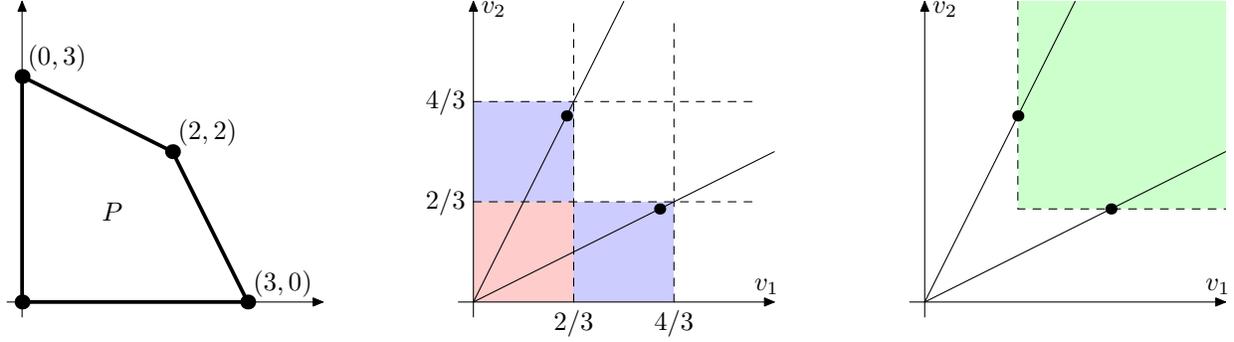}
\caption{Illustration of the proof of the impossibility theorem: in the first
part we represent the polytope $P$, in the second we represent the first main
steps: we show that any auction must resemble VCG in the red region (Fact
\ref{first_fact}) and extend the definition of the auction to the blue region
(Facts \ref{second_fact} and \ref{third_fact}), showing that the budget of
player $1$ must get exhausted at the point $(\frac{1.2381}{2},1.2381)$ and the
budget of player $2$ must get exhausted at the point
$(1.2381,\frac{1.2381}{2})$. We use this fact to show that the allocation must
be constant in the green region (Fact \ref{last_fact}), contradicting
Pareto-optimality for allocation of the form $(v_1, v_2)$ where
$\frac{1.2381}{2}< v_1 < \frac{2}{3}$ as $v_2 \rightarrow \infty$.}
\label{fig:impossibility}
\end{figure*}

\begin{fact}\label{first_fact}
In the region $[0,\frac{2}{3})\times [0,\frac{2}{3})$, the mechanism $(x,p)$
produces an efficient allocation (i.e., it is equal to some mechanism of the VCG
family).
\end{fact}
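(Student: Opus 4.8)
The plan is to show that in the region $[0,\tfrac23)\times[0,\tfrac23)$ neither budget is ever exhausted, which — combined with the first bullet point after the ``Pareto-optimal mechanisms'' heading (namely that if $p_1(v)<B_1$ and $p_2(v)<B_2$ then $x(v)\in X^*(v)$) — immediately forces the allocation to be VCG-like. So the whole fact reduces to a bound on payments. First I would note that the mechanism is individually rational, so $p_i(v)\le v_i x_i(v)$ for each player, and $x$ is always feasible, so $x\in P$. Hence it suffices to bound $v_i x_i(v)$ from above by something strictly less than $B_i=1$ whenever $v_1,v_2<\tfrac23$.

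The key geometric input is the shape of $P=\{x\in\R^2_+;\ 2x_1+x_2\le 6,\ x_1+2x_2\le 6\}$. Its vertices are $(0,0)$, $(3,0)$, $(2,2)$, $(0,3)$, so every feasible $x$ has $x_1\le 3$ and $x_2\le 3$; thus for $v_i<\tfrac23$ we get $v_i x_i(v)< \tfrac23\cdot 3 = 2$, which is not yet below $1$. To get the sharper bound I would exploit that the payment of a truthful single-parameter mechanism is $p_i(v_i,v_{-i})=v_i x_i(v) - \int_0^{v_i} x_i(u,v_{-i})\,du$, and that $x_i(\cdot,v_{-i})$ is monotone non-decreasing; since $x_i$ is also bounded and $x_i(0,v_{-i})\ge 0$, one has $p_i(v)\le \tfrac12 v_i x_i(v)$ is \emph{not} generally true, so instead I would argue directly from Pareto-optimality: at any Pareto-optimal outcome with $v_1,v_2>0$ the allocation lies on the Pareto boundary of $P$ (Lemma~\ref{charac-lemma}), and on that boundary $x_1$ and $x_2$ trade off, so small values of $v_1$ force small $x_1$ unless $x_2$ is already large; quantitatively, if $v_1<\tfrac23$ the efficient point $x^*(v)$ has $x_1^*=0$ unless $v_1\ge \tfrac12 v_2$ and $v_1\ge \tfrac23$ (the normal cone of the vertex $(0,3)$ is $\{v: v_1\le \tfrac12 v_2\}$ and of $(2,2)$ is $\{v:\tfrac12 v_2\le v_1\le 2v_2\}$), so that for $v_1<\tfrac23$ the VCG allocation gives player $1$ at most $x_1^*\le 2$, and then the integral term in the payment formula kills enough of $v_1 x_1$. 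The cleanest route is: while both budgets are unexhausted the outcome is VCG, so $p_i$ equals the VCG payment $p_i^*$, and for $v_i<\tfrac23$ one computes directly that $p_i^*(v)<1$ using the explicit piecewise-linear structure of $x_i^*$ on $P$; this is a bounded bootstrap — assume for contradiction that some budget is first exhausted at a point in the square, derive that up to that point the mechanism is VCG, then evaluate the VCG payment there and get a value $<1$, a contradiction.

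Concretely I would compute the VCG payment threshold. For player $1$ with $v_2$ fixed, $x_1^*(v_1,v_2)$ as a function of $v_1$ is $0$ for $v_1<\max\{\tfrac12 v_2,\ ?\}$, then jumps to $2$ on the vertex $(2,2)$ for $\tfrac12 v_2<v_1<2v_2$ (when $v_2$ is in the appropriate range), then to $3$ at vertex $(3,0)$; so $p_1^*(v_1,v_2)=v_1 x_1^*(v_1,v_2)-\int_0^{v_1}x_1^*(u,v_2)\,du$ equals the ``area to the left'' under the step function, i.e.\ $\sum (\text{jump})\cdot(v_1-\text{breakpoint})$. For $v_1<\tfrac23$ and $v_2<\tfrac23$ the only breakpoint below $v_1$ is at $v_1=\tfrac12 v_2$ with jump at most $2$, giving $p_1^*(v)\le 2\,(v_1-\tfrac12 v_2)\le 2v_1<\tfrac43$ — still not $<1$, so I would need the finer observation that for $v_2<\tfrac23$ the vertex $(2,2)$ is never the maximizer when $v_1<\tfrac23$ (check: $(2,2)$ beats $(0,3)$ iff $2v_1+2v_2\ge 3v_2$ iff $v_1\ge \tfrac12 v_2$, and beats $(3,0)$ iff $2v_1+2v_2\ge 3v_1$ iff $v_2\ge \tfrac12 v_1$; but actually $2v_1+2v_2 \le 2\cdot\tfrac23+2\cdot\tfrac23<3$ while the value at any single-coordinate vertex can also be small), so in fact $x_1^*\in\{0,3\}$ with the breakpoint at $v_1$ where $3v_1 = $ value at $(0,3)=3v_2$, i.e.\ $v_1=v_2$ — wait, that gives breakpoint possibly above $v_1$. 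I would carefully redo the normal-cone computation and conclude that for $v_1,v_2<\tfrac23$ one has $p_i^*(v)<1$ with room to spare.

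The main obstacle I anticipate is precisely this piecewise-linear bookkeeping: getting the breakpoints of $v\mapsto x_i^*(v)$ exactly right on the pentagon $P$, handling the non-uniqueness of the VCG allocation on the edge normals (using the $v_1+$ convention from Lemma~\ref{vcg_family_limit_lemma} and the earlier observation that small valuations cannot exhaust budgets so the mechanism ``must behave like VCG''), and verifying that the constant $\tfrac23$ is exactly the threshold below which the VCG payment stays under $B_i=1$. Everything else — IR, the payment formula, monotonicity, the characterization of Pareto-optimal VCG-like outcomes — is already available from the excerpt, so the proof is a short contradiction argument: suppose the allocation is not efficient somewhere in the open square; by the Pareto characterization some $p_i(v)=B_i$ there; take the infimal such point; up to it the mechanism coincides with VCG; but the VCG payment at that point is $<1=B_i$, contradiction.
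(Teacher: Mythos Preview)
Your high-level plan — show neither budget is exhausted in the square and then invoke the bullet ``$p_1<B_1,\ p_2<B_2\Rightarrow x(v)\in X^*(v)$'' — is exactly what the paper does, and the contradiction/bootstrap you sketch at the end is the right shape. But there is a real gap in the bootstrap, not just bookkeeping.

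When you write ``take the infimal such point; up to it the mechanism coincides with VCG'', you are implicitly assuming that along the integration segment $\{(u,v_{-i}):u<v_i\}$ \emph{neither} budget is exhausted. Your infimum only controls player~$i$'s budget; nothing you have said rules out that player~$-i$'s budget is already exhausted somewhere on that segment, in which case the allocation there need not be $x^*$, the payment integral is not the VCG integral, and the contradiction evaporates. This is precisely the interaction your proposal never addresses.

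The paper resolves this with a decomposition you do not have. It first works on the strip $[0,\tfrac13)\times[0,\tfrac23)$, where the crude IR bound $p_1\le v_1x_1<\tfrac13\cdot 3=1$ kills player~$1$'s budget outright; on that strip the only possible exhaustion is player~$2$'s, and \emph{now} the ``first exhaustion point for player~$2$ along the $v_2$-axis'' argument is sound, because the other player's budget is provably unexhausted throughout. The symmetric strip $[0,\tfrac23)\times[0,\tfrac13)$ is handled the same way. The remaining central square $[\tfrac13,\tfrac23)^2$ is then forced to have allocation $(2,2)$ purely by monotonicity from the two strips (both $x_1$ and $x_2$ are already $2$ on the boundary of the strips, and neither can decrease), with no further payment computation.

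A secondary issue: your normal-cone bookkeeping is wrong. The vertex $(2,2)$ \emph{is} the maximizer whenever $\tfrac12 v_2<v_1<2v_2$, which certainly occurs inside $[0,\tfrac23)^2$ (e.g.\ $v_1=v_2=\tfrac12$), so your claim ``$x_1^*\in\{0,3\}$'' is false. The correct picture is that $x_1^*(\cdot,v_2)$ is the step function $0\mapsto 2\mapsto 3$ with breakpoints at $v_2/2$ and $2v_2$; the VCG payment is then $p_1^*=v_2$ in the middle region and $p_1^*=3v_2$ in the rightmost, and in the square the constraint $v_1<\tfrac23$ together with $v_1>2v_2$ forces $v_2<\tfrac13$, so both are $<1$. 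This is the computation behind the paper's opening sentence ``no mechanism in the VCG-family exhausts the budget in $[0,\tfrac23)^2$'', but the paper does not need to write it out because the strip argument only uses it qualitatively.
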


\begin{proof}
First notice that no mechanism in the VCG-family for $P$ exhausts
the budget in $[0,\frac{2}{3})\times [0,\frac{2}{3})$. Now, we turn our
attention to the mechanism $(x,p)$, which is a mechanism with the desirable
properties for this setting.

In $[0,\frac{1}{3})\times [0,\frac{2}{3})$, player $1$ cannot exhaust his
budget,
since $x_1 \leq 3$ and $v_1 < \frac{1}{3}$. We claim that in this area the
mechanism needs to behave like VCG. If there is a point in this region
where $x(v)$ doesn't maximize $v^t x$ for $x \in P$, then the budget of player
$2$ must be exhausted. So, there is
some $(v_1, v_2)$ such that for $v'_2 > v_2$, $p_2(v_1, v'_2) = B_2$ and for 
$v'_2 < v_2$, $x(v_1, v'_2)$ is in the VCG family. Now, notice that the
allocation for $v'_2 > v_2$ must be $x_2(v'_2, v_1) \leq x_2^*(v'_2, v_1)$.
This contradicts the fact that the budget is exhausted for $v'_2 \downarrow
v_2$. So, this shows that $(x,p)$ must allocate efficiently on
$[0,\frac{1}{3})\times [0,\frac{2}{3})$. Now, for $[0,\frac{2}{3}) \times
[0,\frac{1}{3})$, we can do the same argument. Now, what remains are the points
in $[\frac{1}{3},\frac{2}{3}) \times [\frac{1}{3},\frac{2}{3})$. Let $v$ be such
a point. Notice that for $(\frac{1}{3} - \epsilon, v_2)$ and $(v_1,
\frac{1}{3}-\epsilon)$, the allocation must be $(2,2)$ because of the previous
argument. By monotonicity, $x(v) = (2,2)$ which is the efficient allocation.
\end{proof}

Now, we know how any mechanism $(x,p)$ with the desirable properties should
look like in the red region of Figure \ref{fig:impossibility}. Next, we try to
understand how it should look in the blue region. In order to do so, we need
some definitions. From Lemma \ref{E_curves_lemma} we know that there is $a \in
\R_+^2 \cup \{ (\infty,\infty) \})$ such that: $ \{v; v > a\} \subseteq E_1
\cap E_2 \subseteq \{v; v \geq a\}$. We also define:
 $$\tilde{v}_2 = \min \{v_2; p_1(\frac{v_2}{2} +, v_2) = 1 \}$$
 $$\tilde{v}_1 = \min \{v_1; p_2(v_1, \frac{v_1}{2} +)  = 1\}$$
And we focus on the region $R$, which we define as the interior of the
rectangle  between $(0,0)$ and $(\tilde{v}_2 / 2, \tilde{v}_2)$.

\begin{fact}\label{second_fact}
The budget of player $1$ does not get exhausted in $R$. Also, $\frac{2}{3} < 
\tilde{v}_2 \leq 1.2381$
\end{fact}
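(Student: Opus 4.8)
The plan is to analyze the region $R$ by combining the VCG-like behavior established in Fact \ref{first_fact} with the payment formula and monotonicity. First I would argue that the budget of player~1 cannot get exhausted strictly inside $R$: for $v \in R$ we have $v_1 < \tilde{v}_2/2$, so the VCG-payment of player~1 satisfies $p_1(v_1,v_2) \le p_1^*(v_1+,v_2) \le v_1 x_1^*(v_1+,v_2)$, and along the diagonal edge of $P$ (the segment where $2x_1+x_2=6$ and $x_1+2x_2=6$ meet, i.e.\ near $x=(2,2)$), one can bound $x_1^*$ by a constant; more carefully, by definition $\tilde v_2$ is the first $v_2$ at which player~1's budget is exhausted along the curve $v_1 = v_2/2$, and for $v_1 < v_2/2$ monotonicity gives $x_1(v_1,v_2) \le x_1(v_2/2, v_2)$, hence $p_1(v_1,v_2)$ is dominated by the payment along that critical curve, which is $< 1$ before $\tilde v_2$. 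The key point is that $R$ lies \emph{below} the locus where player~1's budget first binds, so by monotonicity of the allocation (and the resulting monotonicity of payments in the relevant direction) the budget stays slack throughout $R$.

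Next I would establish the two numerical bounds on $\tilde v_2$. For the lower bound $\tilde v_2 > 2/3$: in the red region $[0,2/3)\times[0,2/3)$ the mechanism is exactly VCG by Fact \ref{first_fact}, and one checks that the VCG-payment of player~1 at any point of the form $(v_1, v_2)$ with $v_1 = v_2/2 < 1/3$ is strictly below $1$ — indeed for such small values the efficient allocation gives player~1 at most the quantity $x_1 = 3$ (the max over $P$ of $x_1$) and the payment is at most $v_1 \cdot 3 < 1$, in fact much less since the VCG payment is an integral of a bounded allocation. Hence the budget is not exhausted for $v_2 \le 2/3$, giving $\tilde v_2 > 2/3$. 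For the upper bound $\tilde v_2 \le 1.2381$: I would explicitly compute, along the curve $v_1 = v_2/2$, the VCG allocation and payment of player~1 as a function of $v_2$, and show that by $v_2 = 1.2381$ the payment has reached $1$. This is a direct calculation: one identifies which vertex or edge of $P$ is selected for the relevant range of $(v_2/2, v_2)$, writes $x_1^*$ as a piecewise function, integrates $\int_0^{v_1} x_1^*(u, v_2)\,du$ in the payment formula $p_1 = v_1 x_1^* - \int_0^{v_1} x_1^* du$, and solves $p_1(v_2/2+, v_2) = 1$. The number $1.2381$ is (an upper approximation of) this root.

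The main obstacle is the upper-bound computation: one must correctly track which face of the polytope $P = \{x \ge 0 : 2x_1+x_2 \le 6,\ x_1+2x_2 \le 6\}$ is active for the pair $(v_2/2, v_2)$ as $v_2$ ranges up to $\approx 1.24$. Since $v_2 > v_1 = v_2/2$, player~2 has the higher value, so the efficient point favors player~2; for small values the efficient allocation sits at the vertex maximizing $x_2$ subject to feasibility, namely $x = (0,3)$, until the value ratio crosses the threshold set by the edge normals $u^i$ (the normals to $2x_1+x_2=6$ and $x_1+2x_2=6$ are $(2,1)$ and $(1,2)$). One then has to handle the VCG payment integral across these regime changes. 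The subtlety is purely bookkeeping — identifying the breakpoints in $u \mapsto x_1^*(u, v_2)$ and integrating piecewise — but it must be done carefully to land on the stated numerical bound. Everything else (monotonicity of $x$, the payment formula, Fact \ref{first_fact}) is already available and feeds in routinely.
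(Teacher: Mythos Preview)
Your plan has a genuine gap in the upper-bound step. You propose to compute the \emph{VCG} allocation $x_1^*$ along the curve $v_1=v_2/2$ and integrate it to get $p_1$, solving $p_1(v_2/2+,v_2)=1$ for $v_2$. But inside $R$ the mechanism is \emph{not} VCG once $v_1>1/3$ and $v_2>2v_1$: there the VCG payment of player~2 is $3v_1>1=B_2$, so player~2's budget is already exhausted, and Pareto-optimality forces a different allocation. The paper pins this allocation down exactly (it must make $p_2$ hit $B_2$ on the nose), obtaining $x(v)=\bigl(3-\tfrac{1}{v_1},\,2+\tfrac{1-v_1}{2v_1}\bigr)$ in that sub-region. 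This is the missing idea: you must first determine the forced non-VCG allocation where $E_2$ intrudes into $R$, and only then integrate to get player~1's payment.

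This is not a cosmetic issue --- it flips the direction of the bound. In the region $u\in(1/3,v_2/2)$ one has $x_1(u,v_2)=3-1/u>0=x_1^*(u,v_2)$, so $\int_0^{v_1} x_1(u,v_2)\,du>\int_0^{v_1} x_1^*(u,v_2)\,du$ and hence $p_1\le p_1^*$. A pure VCG computation yields $p_1^*(v_2/2+,v_2)=v_2$, which hits $1$ at $v_2=1$; this would at best give a \emph{lower} bound $\tilde v_2\ge 1$, not the required upper bound. The correct calculation, using $x_1(u,v_2)=3-1/u$ on $(1/3,\tilde v_2/2)$, gives
\[
\lim_{\delta\downarrow 0} p_1\Bigl(\tfrac{\tilde v_2}{2}-\delta,\tilde v_2-3\delta\Bigr)=1-\tfrac{\tilde v_2}{2}+\log\Bigl(\tfrac{3}{2}\tilde v_2\Bigr)\le 1,
\]
whose root is $\tilde v_2\approx 1.2381$. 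Your argument for the first claim (player~1's budget not exhausted in $R$) is also looser than the paper's, which invokes Lemma~\ref{lemma_xi_curves} to localize $\xi_1$ onto the edge-normal rays; your monotonicity sketch does not by itself rule out $E_1$ entering $R$ off the line $v_2=2v_1$.
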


\begin{proof}
 First, assume $\tilde{v}_2 < 4/3$. Then by Fact \ref{first_fact}, the
budget of player $1$ does not get exhausted in
$[0,\frac{2}{3})\times[0,\frac{2}{3})$ and by the definition of $\tilde{v}_2$
and lemma \ref{lemma_xi_curves} it cannot be exhausted for
$[0,\frac{\tilde{v}_2}{2})\times(\frac{1}{3}, \tilde{v}_2)$. Notice that $a$
cannot be in this region, because it would also contradict the definition of
$\tilde{v}_2$.

Given this fact, let us analyze how the mechanism should be in this setting. We
do so, by fixing $v_1$ and looking at $x_2(v_1, v_2)$. We can use the same
argument as in the previous fact to argue that $x(v_1, v_2)$ must be the
efficient allocation for $v_2 < 2 v_1$ in $R$ and also for all $v_1 \leq
\frac{1}{3}$. For $v_1 > \frac{1}{3}$ and $v_2 > 2 v_1$, if we allocate as in
VCG,
we exceed the budget. So, the allocation for those points must exactly match
the budget of player $2$. So, the only possible value must be such that:
$$2 \cdot \frac{1}{2} v_1 + (x_2(v) - 2) 2 v_1 = 1$$
and therefore the allocation must be $$x(v) = \left( 3-\frac{1}{v_1},
2+\frac{1-v_1}{2 v_1} \right).$$

Now, this determines the payment of player $1$ in the rectangle. We know that
for $\delta
\downarrow 0$, $p_1(\frac{\tilde{v}_2}{2} -\delta, \tilde{v}_2 - 3\delta) < 1$,
which we can write as:
\columnsversion{
$$\begin{aligned} \lim_{\delta\downarrow 0 } p_1( \frac{\tilde{v}_2}{2} -\delta,
\tilde{v}_2 - 3\delta ) & = 2 \cdot \frac{\tilde{v}_2}{2} -
\int_{1/3}^{\tilde{v}_2/2} 3 - \frac{1}{z}
dz = 1 - \frac{\tilde{v}_2}{2} + \log \left( \frac{3}{2} \tilde{v}_2
\right)
\leq 1 \end{aligned}$$
}{
$$\begin{aligned} \lim_{\delta\downarrow 0 } p_1( \frac{\tilde{v}_2}{2} -\delta,
\tilde{v}_2 - 3\delta ) & = 2 \cdot \frac{\tilde{v}_2}{2} -
\int_{1/3}^{\tilde{v}_2/2} 3 - \frac{1}{z}
dz = \\ & = 1 - \frac{\tilde{v}_2}{2} + \log \left( \frac{3}{2} \tilde{v}_2
\right)
\leq 1 \end{aligned}$$
}
This implies that $\tilde{v}_2 \leq 1.2381$. Notice that this excludes the fact
that $\tilde{v}_2 > 4/3$, otherwise we could have done the same analysis on $[0,
\frac{2}{3}] \times [0,\tilde{v}_2]$ and arrived in the same conclusion that
$\tilde{v}_2 \leq 1.2381$.
\end{proof}

\begin{fact}\label{third_fact}
$\tilde{v}_2 = 1.2381$ and $x(\frac{\tilde{v}_2}{2}+, \tilde{v}_2) = (2,2)$.
\end{fact}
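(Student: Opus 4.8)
The plan is to upgrade to an equality the inequality $\ln\bigl(\tfrac32\tilde v_2\bigr)\le\tfrac{\tilde v_2}{2}$ established in Fact~\ref{second_fact}. Recall that its proof already forces, for every $v_2$ with $\tfrac23<v_2<\tilde v_2$, the allocation on the segment $\{(v_1,v_2):\tfrac{v_2}{2}<v_1<\tfrac{\tilde v_2}{2}\}$ to be the efficient $(2,2)$, and pins down the whole profile $u\mapsto x_1(u,v_2)$ on $[0,v_1]$ (namely $0$ on $[0,\tfrac13]$, then $3-\tfrac1u$ on $(\tfrac13,\tfrac{v_2}{2})$ by the ``player~$2$'s budget binds'' region, then $2$ on $(\tfrac{v_2}{2},v_1)$). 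I would first restate the resulting closed form for player~$1$'s payment along this ray,
$$p_1\Bigl(\tfrac{v_2}{2}+,\,v_2\Bigr)\;=\;v_2-\int_{1/3}^{v_2/2}\Bigl(3-\tfrac1u\Bigr)du\;=\;1-\tfrac{v_2}{2}+\ln\Bigl(\tfrac{3v_2}{2}\Bigr),\qquad\tfrac23<v_2<\tilde v_2,$$
which is independent of $v_1$ and is a continuous function of $v_2$; likewise $x(\tfrac{v_2}{2}+,v_2)=(2,2)$ on this range.

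Now I would use that $\tilde v_2$ is, by definition, the \emph{smallest} $v_2$ at which $p_1(\tfrac{v_2}{2}+,v_2)$ reaches $B_1=1$: since $v_2\mapsto p_1(\tfrac{v_2}{2}+,v_2)$ is continuous, the level set $\{v_2:p_1(\tfrac{v_2}{2}+,v_2)=1\}$ is closed, the minimum is attained, and $p_1(\tfrac{\tilde v_2}{2}+,\tilde v_2)=1$. Letting $v_2\uparrow\tilde v_2$ in the displayed identity gives $1-\tfrac{\tilde v_2}{2}+\ln\bigl(\tfrac{3\tilde v_2}{2}\bigr)=1$, i.e.\ $\ln\bigl(\tfrac{3\tilde v_2}{2}\bigr)=\tfrac{\tilde v_2}{2}$. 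Setting $w=\tilde v_2/2$, this is $3w=e^{w}$; on $(\tfrac13,\tfrac23)$ the function $w\mapsto 3w-e^{w}$ is strictly increasing and changes sign, so it has a unique root $w\approx0.6190$ there, and since Fact~\ref{second_fact} gives $\tfrac23<\tilde v_2\le1.2381$ we have $w=\tilde v_2/2\in(\tfrac13,\tfrac23)$, forcing $\tilde v_2=2w=1.2381\ldots$. The allocation claim follows the same way: $x(\tfrac{v_2}{2}+,v_2)=(2,2)$ for all $v_2\in(\tfrac23,\tilde v_2)$, this value is constant along the approach to the corner, and player~$1$'s budget becomes binding only in the limit, so no jump can occur and $x(\tfrac{\tilde v_2}{2}+,\tilde v_2)=(2,2)$.

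The one non-routine point is the limiting argument at the corner $(\tfrac{\tilde v_2}{2},\tilde v_2)$: one must justify that the explicit descriptions of $x$ and $p_1$ derived on the \emph{open} rectangle $R$ (and the open half-plane $v_2<2v_1$) pass to the limit there — equivalently, that $v_2\mapsto p_1(\tfrac{v_2}{2}+,v_2)$ is continuous up to $\tilde v_2$ and that the minimum defining $\tilde v_2$ is attained. This holds because near that corner and below the line $v_2=2v_1$ both $x$ and $p_1$ are given by the continuous expressions above (player~$1$'s budget not yet being binding); granting it, the rest is the elementary integral evaluation and solving a one-variable transcendental equation whose admissible root Fact~\ref{second_fact} has already localized.
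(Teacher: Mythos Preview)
Your approach reaches the right equation, but the step you yourself flag as ``the one non-routine point'' is where the argument actually breaks. You need left-continuity of $v_2\mapsto p_1(\tfrac{v_2}{2}+,v_2)$ at $\tilde v_2$, and your justification --- that ``near that corner and below the line $v_2=2v_1$ both $x$ and $p_1$ are given by the continuous expressions above'' --- only establishes the closed form on the \emph{open} interval $(\tfrac23,\tilde v_2)$. At $v_2=\tilde v_2$ the regime changes (player~1's budget becomes binding by definition), and truthfulness gives continuity of $p_1$ in $v_1$, not in $v_2$; so a jump in $x$ as $v_2$ crosses $\tilde v_2$ is not ruled out by anything you have written. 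The sentence ``player~1's budget becomes binding only in the limit, so no jump can occur'' is an assertion, not an argument.

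The paper sidesteps this by working \emph{at} the corner rather than approaching it. It first bounds $x_1(\tfrac{\tilde v_2}{2}+,\tilde v_2)\le 2$ via monotonicity of $x_2$ in $v_2$ (using $x_2(\tfrac{\tilde v_2}{2}+,\tfrac13)=2$ from Fact~\ref{first_fact}), and then writes the Myerson payment identity at $v_2=\tilde v_2$:
\[
1=p_1(\tfrac{\tilde v_2}{2}+,\tilde v_2)=x_1(\tfrac{\tilde v_2}{2}+,\tilde v_2)\cdot\tfrac{\tilde v_2}{2}-\int_{1/3}^{\tilde v_2/2}\bigl(3-\tfrac1z\bigr)\,dz\;\le\;1-\tfrac{\tilde v_2}{2}+\log\tfrac{3\tilde v_2}{2}.
\]
This, together with the inequality $1-\tfrac{\tilde v_2}{2}+\log\tfrac{3\tilde v_2}{2}\le 1$ from Fact~\ref{second_fact}, sandwiches everything and forces both $\tilde v_2=1.2381$ and $x_1=2$ simultaneously. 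No diagonal limit or continuity in $v_2$ is needed; the only ingredient beyond Fact~\ref{second_fact} is the monotonicity bound on $x_1$ at the single point $(\tfrac{\tilde v_2}{2}+,\tilde v_2)$. If you want to repair your route, replace the continuity claim by this monotonicity step --- after which your limiting argument becomes unnecessary, since the payment formula at the corner already gives you equality.
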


\begin{proof}
Since  $\frac{2}{3} < \tilde{v}_2 \leq 1.2381$, we know that
$x_2(\frac{\tilde{v}_2}{2}+, \tilde{v}_2) \geq 2$ by using that
$x_2(\frac{\tilde{v}_2}{2}+, \frac{1}{3}) = 2$ (Fact \ref{first_fact}) and
monotonicity. Therefore, $x_1(\frac{\tilde{v}_2}{2}+, \tilde{v}_2) \leq 2$.
Writing the payment for player $1$ at this point we get:
\columnsversion{
$$\begin{aligned} 1 & = p_1(\frac{\tilde{v}_2}{2}+, \tilde{v}_2) =
x_1(\frac{\tilde{v}_2}{2}+,
\tilde{v}_2) \cdot \frac{\tilde{v}_2}{2} -
\int_{1/3}^{\tilde{v}_2/2} 3 - \frac{1}{z}
dz \leq 1 - \frac{\tilde{v}_2}{2} + \log \left( \frac{3}{2}
\tilde{v}_2 \right) \end{aligned}$$
}{
$$\begin{aligned} 1 & = p_1(\frac{\tilde{v}_2}{2}+, \tilde{v}_2) =
x_1(\frac{\tilde{v}_2}{2}+,
\tilde{v}_2) \cdot \frac{\tilde{v}_2}{2} -
\int_{1/3}^{\tilde{v}_2/2} 3 - \frac{1}{z}
dz \leq \\ & \leq 1 - \frac{\tilde{v}_2}{2} + \log \left( \frac{3}{2}
\tilde{v}_2 \right) \end{aligned}$$
}
which implies that $\tilde{v}_2 = 1.2381$ and $x(\frac{\tilde{v}_2}{2}+,
\tilde{v}_2) = (2,2)$, since all inequalities must be tight.
\end{proof}

\begin{fact}\label{last_fact}
$\tilde{v}_1 = \tilde{v}_2 = 1.2381$ and $x(v) =
(2,2)$ for all valuation profiles  $v > (\frac{\tilde{v}_1}{2},
\frac{\tilde{v}_2}{2})$.
\end{fact}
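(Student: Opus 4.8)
I'll prove the two assertions in turn; the engine is the symmetry of the instance together with the characterization lemmas of Section~\ref{subsec:impossibility}. For the equality $\tilde v_1=\tilde v_2$, note that $P=\{x\in\R^2_+:2x_1+x_2\le 6,\ x_1+2x_2\le 6\}$ and the budgets $B_1=B_2=1$ are invariant under the coordinate swap $(x_1,x_2)\mapsto(x_2,x_1)$, so the arguments behind Facts~\ref{first_fact}--\ref{third_fact} apply verbatim with the two players interchanged. In particular the mirror of Fact~\ref{third_fact} gives $\tilde v_1=1.2381$ and $x(\tilde v_1,\tfrac{\tilde v_1}{2}+)=(2,2)$ with player~$2$'s budget exhausted there; combined with Fact~\ref{third_fact} this yields $\tilde v_1=\tilde v_2=1.2381=:c$. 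It remains to show $x(v)=(2,2)$ for all $v>(\tfrac c2,\tfrac c2)$.

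Next I would build two ``frozen'' rays. Since $p_1(v_1,v_2)=v_1x_1(v_1,v_2)-\int_0^{v_1}x_1(u,v_2)\,du$ and $x_1$ is non-decreasing in $v_1$ by truthfulness, $p_1$ is non-decreasing in $v_1$; with $p_1\le B_1$ and $p_1(\tfrac c2+,c)=B_1$ (Fact~\ref{third_fact}) this forces $p_1(v_1,c)=B_1$ for every $v_1>\tfrac c2$, hence $x_1(\cdot,c)\equiv 2$, hence (on the Pareto boundary of $P$ the only point with $x_1=2$ is the vertex $(2,2)$) $x(v_1,c)=(2,2)$ for all $v_1>\tfrac c2$; symmetrically $x(c,v_2)=(2,2)$ for all $v_2>\tfrac c2$. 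Now monotonicity against these two rays handles two of the four quadrants around $(c,c)$: on $(\tfrac c2,c)^2$ one gets $x_1(v)\le x_1(c,v_2)=2$ and $x_2(v)\le x_2(v_1,c)=2$, and since on the Pareto boundary of $P$ one has $x_1\le 2\iff x_2\ge 2$, both inequalities can hold only at $(2,2)$; on $\{v\ge(c,c)\}$ the reversed monotonicities give $x_1(v)\ge 2$ and $x_2(v)\ge 2$, again forcing $x(v)=(2,2)$.

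The remaining work is the two ``strips'' $\{\tfrac c2<v_1<c,\ v_2>c\}$ and its mirror, and this is the crux: monotonicity against the rays only gives $x_1(v)\le 2$ and $x_2(v)\ge 2$, which are \emph{equivalent} on the Pareto boundary of $P$, so $x(v)$ is pinned merely to the upper-left edge $\{x_1+2x_2=6,\ x_2\ge 2\}$. I would argue by contradiction: suppose $x_1(v_1,v_2)<2$ with $\tfrac c2<v_1<c$, $v_2>c$, and let $\bar v_2\ge c$ be the supremum of the $v_2'$ for which $x(v_1,\cdot)\equiv(2,2)$ on $(\tfrac c2,v_2')$. Just above $\bar v_2$ the allocation lies strictly on the upper-left edge. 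If $\bar v_2<2v_1$ then $X^*(v_1,\bar v_2)=\{(2,2)\}$, so by Lemma~\ref{charac-lemma} ($p_i(v)<B_i\Rightarrow x_i(v)\ge\min\{x_i:x\in X^*(v)\}$) player~$1$'s budget is exhausted just above $\bar v_2$, and truthfulness then freezes $x(\cdot,v_2')$ for first-coordinate $\ge v_1$, contradicting the mirror ray $x(c,v_2')=(2,2)$. If $\bar v_2\ge 2v_1$ then $X^*=\{(0,3)\}$: either $x_2(v_1,\cdot)$ jumps to $3$, which a budget computation rules out (it forces $p_2\ge 3v_1>1$), or $x_2(v_1,\cdot)\in(2,3)$ with player~$2$'s budget exhausted, in which case the allocation below is forced to the region-$R$ shape $x_1(u,\cdot)=(3-1/u)^+$ of Fact~\ref{second_fact} and evaluating $p_1$ at the efficient-region point $(\tfrac{v_2'}{2}+,v_2')$, whose allocation is $(2,2)$, gives $p_1=1-\tfrac{v_2'}{2}+\log(\tfrac32 v_2')>1=B_1$ for $v_2'$ near $c$, a budget violation; the alternative, that the allocation at that point is \emph{not} $(2,2)$, again loses against the mirror ray by truthfulness-freezing. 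Each branch being absurd forces $x(v)=(2,2)$ on the strip, and the mirror region is identical. \emph{The main obstacle is exactly this strip analysis}: monotonicity against the two rays is provably too weak there, and one must feed in the Pareto-optimality characterization, truthfulness-freezing against the \emph{mirror} ray, and the explicit payment estimate along $v_1=v_2/2$ already used for Facts~\ref{second_fact}--\ref{third_fact}; everything else is routine bookkeeping with monotonicity, the payment formula, and the geometry of the pentagon $P$.
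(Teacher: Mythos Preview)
Your treatment of $\tilde v_1=\tilde v_2$ by symmetry, the construction of the two frozen rays via the payment formula, and the monotonicity argument on the small square $(\tfrac c2,c)^2$ and large square $[c,\infty)^2$ are correct and essentially what the paper does (the paper's ``standard monotonicity argument'' for those two squares unpacks exactly into your ray-plus-monotonicity step).

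Where your argument diverges from the paper, and where it breaks, is the strip $\{\tfrac c2<v_1<c,\ v_2>c\}$. Your Case~1 ($\bar v_2<2v_1$) and sub-case~2a (a jump to $x_2=3$ forces $p_2>B_2$) are fine. Sub-case~2b, however, has a genuine gap. You assert that once player~2's budget is exhausted at $(v_1,v_2')$ with $v_2'>\bar v_2\ge 2v_1$, ``the allocation below is forced to the region-$R$ shape $x_1(u,\cdot)=(3-1/u)^+$ of Fact~\ref{second_fact}''. But Fact~\ref{second_fact} established that shape only for second coordinate below $\tilde v_2=c$, whereas here $v_2'>c$; nothing in your argument controls $x_1(u,v_2')$ for $u\neq v_1$, so you cannot evaluate $p_1$ at $(\tfrac{v_2'}{2}+,v_2')$ via that formula. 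Worse, the clause ``for $v_2'$ near $c$'' is inconsistent with Case~2 itself, since $v_2'>\bar v_2\ge 2v_1>c$ can be as large as $2c$. So sub-case~2b is not handled.

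The paper avoids this case analysis entirely with a one-line payment comparison: fix any $v_2'>c$; from the mirror ray $x_1(c,v_2')=2=x_1(c,c)$, and monotonicity in the second coordinate gives $x_1(u,v_2')\le x_1(u,c)$ for every $u$, so
\[
p_1(c,v_2') \;=\; c\cdot 2-\int_0^c x_1(u,v_2')\,du \;\ge\; c\cdot 2-\int_0^c x_1(u,c)\,du \;=\; p_1(c,c)\;=\;B_1.
\]
Budget feasibility forces equality, hence $x_1(u,v_2')=x_1(u,c)$ for all $u\le c$; in particular $x_1(v_1,v_2')=x_1(v_1,c)=2$ for every $v_1\in(\tfrac c2,c)$. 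This is the missing idea: compare payments at the \emph{corner} $(c,\cdot)$ across different heights, which pins down the whole column at once, rather than trying to propagate information along a single vertical line.
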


\begin{proof}
 We can apply the same argument exchanging $1$ and $2$ and conclude that
$\tilde{v}_1 = \tilde{v}_2$. Now, to see that for $v'_1, v'_2 >
\frac{\tilde{v}_i}{2}$ we have $x(v'_1, \tilde{v}_2) = x(\tilde{v}_1, v'_2) =
(2,2)$, we analyze four regions. If $v \in
(\tilde{v_i}/2, \tilde{v_i}] \times (\tilde{v_i}/2, \tilde{v_i}]$ or $v \in
[\tilde{v_i}, \infty) \times [\tilde{v_i}, \infty)$ we can use the standard
monotonicity argument to show that $x(v) = (2,2)$.

For the regions $[\tilde{v}_1, \infty) \times (\frac{\tilde{v}_1}{2},
\tilde{v}_1)$ and $(\frac{\tilde{v}_2}{2}, \tilde{v}_2) \times [\tilde{v}_2,
\infty)$ is a little trickier. We do the analysis for the first one. The second
is analogous.

Clearly $p_2(\tilde{v}_1,\tilde{v}_2) = 1$.
Now, for $v'_1 > \tilde{v}_1$, $x(v'_1, \tilde{v}_2) = x(\tilde{v}_1,
\tilde{v}_2) = (2,2)$. And for all $v_2$ we have $x_1(v'_1, v_2) \geq
x_1(\tilde{v}_1,v_2)$ and therefore $x_2(v'_1, v_2) \leq
x_2(\tilde{v}_i,v_2)$. Now, since $p_2(v_1, v_2) = v_2 x_2(v) - \int_0^{v_2}
x_2(v_1, u) du$ clearly $p_2(v'_1, \tilde{v}_2) \geq p_2(\tilde{v}_1,
\tilde{v}_2) = 1$. Therefore $p_2(v'_1, \tilde{v}_2) = 1$, since the mechanism
respects budgets. Notice that the only way it can be true is that if $x_2(v'_1,
v_2) = x_2(\tilde{v}_i,v_2)$, so we must have $x_2(v'_1,
v_2) = (2,2)$ for $v'_1 \in [\tilde{v}_i, \infty)$.

We can use the exact same argument for region: $(\frac{\tilde{v}_i}{2},
\tilde{v}_i) \times [\tilde{v}_i, \infty) $.
\end{proof}

Now we are ready to prove Theorem \ref{thm:impossibility_general_polytopes}:

\begin{proofof}{Theorem \ref{thm:impossibility_general_polytopes}}
Now, by putting Fact \ref{last_fact} and Fact \ref{first_fact} together, we get
a contradiction with the Pareto-optimality:  consider $\frac{\tilde{v}_i}{2}
< v_1 <
\frac{2}{3}$ then by facts \ref{last_fact} and \ref{first_fact} combined, we
know that $x_2(v_1, v_2) = 0$ for $v_2 < \frac{v_1}{2}$ and $x_2(v_1, v_2) = 2$
for $v_2 > \frac{v_1}{2}$, so the budget of player $2$ never gets exhausted
even for $v_2 \rightarrow \infty$. This contradicts
Pareto-optimality for $v_2 > 2 v_1$, since if his budget is not exhausted, he
should get allocated at least as much as he gets in VCG.
\end{proofof}

\subsection{Multi-unit auctions with decreasing\columnsversion{}{\\}
marginals}\label{subsec:decreasing marginals}

As a by-product of Theorem \ref{thm:impossibility_general_polytopes}, we can
answer in a negative way the question of the existence of truthful
Pareto-optimal auctions for multi-unit auctions with decreasing marginals.
Consider the following setting:\\

\noindent \textbf{Setting:} Consider a supply of $s$ of a certain divisible good
and
two players in such a way that the feasible allocations are $(x_1, x_2)$ such
that $x_1 + x_2 \leq s$. Player $i$ has a public budget $B_i$ and a
private valuation which is a increasing concave function $V_i:[0,s]
\rightarrow \R_+$. Upon getting $x_i$ units of the good and paying $p_i$, player
$i$ has utility $u_i = V_i(x_i) - p_i$.\\

It is tempting to believe that one could adapt the clinching framework in
Algorithm \ref{polyhedral-clinching-auction} to deal with this setting, by
simply redefining the demand function as something like:
$$d_i = \min \left\{\frac{B_i}{p}, \max \{x_i; \partial V_i(\rho_i+x_i)
\leq p \} \right\}$$
where $\partial V_i(x_i)$ is the marginal valuation at $x_i$. Indeed, if
$V_i(x_i) = v_i \cdot x_i$, this recovers the original way of calculating
demands. In appendix \ref{appendix:clinching-counter-example} we give a
counter-example showing that the clinching framework with this new demand
function results in a non-truthful mechanism. The intuition behind the counter
example is that some player can increase his declared value on items he won't
get anyway in order to increase the payment of his opponent, exhausting his
budget earlier. This way, he is able to get items for cheaper in the end.

In the following theorem, we show that no auction mechanism can satisfy all the
desirable properties for this setting:

\begin{theorem}
There is no truthful, Pareto-optimal and budget-feasible auction for this
setting.
\end{theorem}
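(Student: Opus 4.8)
The plan is to reduce to the impossibility result for general polyhedral environments, Theorem~\ref{thm:impossibility_general_polytopes}, whose proof in fact establishes impossibility already for the single concrete instance $P = \{x \in \R^2_+ : 2x_1 + x_2 \le 6,\ x_1 + 2x_2 \le 6\}$ with $B_1 = B_2 = 1$. Thus it suffices to show that any truthful, Pareto-optimal, budget-feasible auction for the multi-unit setting with decreasing marginals would yield such an auction for that polytope instance.

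The key observation is that this particular $P$ is the image of the $2$-dimensional simplex under a separable, concave, coordinate-wise change of variables. Define $g:\R_+ \to \R_+$ by $g(z) = z$ for $z \le 2$ and $g(z) = 1 + z/2$ for $z \ge 2$, so that $g$ is increasing and concave (slopes $1$ then $1/2$) while $g^{-1}$ has slopes $1$ then $2$. A short case analysis on the four regions $\{x_1 \lessgtr 2\}\times\{x_2 \lessgtr 2\}$ then shows
$$\{(x_1,x_2)\ge 0 : g^{-1}(x_1) + g^{-1}(x_2) \le 4\} = P,$$
since on $\{x_1>2,x_2\le 2\}$ the constraint reads $2x_1+x_2\le 6$, on $\{x_1\le 2,x_2>2\}$ it reads $x_1+2x_2\le 6$, on $\{x_1,x_2\le 2\}$ it is implied, and $\{x_1>2,x_2>2\}$ is empty. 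Hence $z=(z_1,z_2)\mapsto x=(g(z_1),g(z_2))$ is a bijection from $\{z\ge 0 : z_1+z_2\le 4\}$ onto $P$.

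With this transformation in hand the reduction is routine. Given a mechanism $M$ for the multi-unit setting, take supply $s=4$, budgets $B_1=B_2=1$, and restrict to valuations $V_i(z)=v_i\,g(z)$ — these are increasing concave functions with marginals $v_i$ then $v_i/2$, hence admissible. Define a mechanism for the polytope instance by: on reported $(v_1,v_2)$, run $M$ on $(v_1 g,\,v_2 g)$ to obtain allocation $z$ and payments $q$, and output $x_i=g(z_i)$, $p_i=q_i$. Under the bijection, $x\in P$; moreover $v_i x_i = V_i(z_i)$, so the utility of every player and the auctioneer's revenue agree across the two instances, feasibility corresponds, and $p_i\le B_i \iff q_i\le B_i$. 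Therefore Pareto-optimality and budget-feasibility transfer; and a deviation $v_i\mapsto v_i'$ in the polytope instance is exactly the deviation $v_i g\mapsto v_i' g$ in $M$, which cannot be profitable because $M$ is dominant-strategy truthful and utilities coincide. So the constructed mechanism has all the desirable properties for $P$ with $B_1=B_2=1$, contradicting Theorem~\ref{thm:impossibility_general_polytopes}.

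The only non-mechanical step is spotting the map $g$ that realizes the specific polytope of Theorem~\ref{thm:impossibility_general_polytopes} as a separable concave image of the simplex; concavity of $g$ (equivalently convexity of $g^{-1}$) is precisely what the diminishing-marginals requirement affords, and the two linear facets of $P$ can be packed into a single separable convex constraint exactly because they meet at the balanced point $(2,2)$. Once $g$ is found, the preservation of truthfulness, Pareto-optimality and budget-feasibility under the bijection is immediate, and the impossibility follows.
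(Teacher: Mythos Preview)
Your proof is correct and follows the same reduction strategy as the paper: encode the polytope instance of Theorem~\ref{thm:impossibility_general_polytopes} inside the multi-unit-with-decreasing-marginals setting, so that a good mechanism for the latter would yield one for the former.

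The difference is in the encoding. The paper parameterizes the Pareto boundary of a general $2$-player packing polytope by a concave non-increasing function $h$, takes supply $s=1$, and uses an \emph{asymmetric} pair of valuations $V_1(z)=v_1\alpha z$ (linear) and $V_2(z)=v_2\,h(\alpha-\alpha z)$ (concave), mapping $(\hat z_1,\hat z_2)$ with $\hat z_1+\hat z_2=1$ to $(\alpha\hat z_1,\,h(\alpha\hat z_1))$ on the boundary of $P$. Your encoding is \emph{symmetric}: you exploit that the specific polytope used in the proof of Theorem~\ref{thm:impossibility_general_polytopes} is the image of the simplex under a separable concave map $(z_1,z_2)\mapsto(g(z_1),g(z_2))$, take $s=4$, and give both players valuations $V_i=v_i g$. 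The paper's encoding buys generality (it works for any $2$-player packing polytope, not just the symmetric one), whereas yours is a bit cleaner for this particular instance and makes the bijection between feasible sets, utilities, and deviations completely transparent. Either way the transfer of truthfulness, budget-feasibility, and Pareto-optimality is immediate once the bijection is in place.
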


\begin{proof}
 Suppose that $(\hat{x}(V_1,V_2), \hat{p}(V_1, V_2))$ is a mechanism satisfying
all the desirable properties for multi-unit auctions with decreasing marginals.
Then we can use it as a black-box to construct a mechanism for a general
polyhedral environment, contradicting Theorem
\ref{thm:impossibility_general_polytopes}.  Given a certain polyhedral
environment, we can describe $$P = \{x \in
[0,\alpha] \times [0, \beta]; x_2 \leq h(x_1)\}$$ where $h:[0,\alpha]
\rightarrow
[0, \beta]$ is a monotone non-increasing concave function, $h(0) = \beta$ and
$h(\alpha) = 0$. Now, using $(\hat{x}, \hat{p})$ for $s=1$, build the following
mechanism:
if players report valuations $v_1, v_2$ build the following concave functions:
$V_1(x_1) = v_1 \cdot \alpha x_1$ and $V_2(x_2) = v_2 \cdot h(\alpha -
\alpha x_2)$.

Now, simply define $$\begin{aligned} & x(v_1, v_2) = (\alpha \hat{x}_1,
h(\alpha \hat{x}_1)) \\  & p(v_1, v_2) =
\hat{p}(V_1, V_2).\end{aligned}$$ The mechanism is clearly truthful,
individually
rational and budget-feasible. It is also easy to see that sets of allocation
can me mapped 1-1 between those two settings, preserving Pareto-optimality.
\end{proof}

\bibliographystyle{abbrv}
\bibliography{sigproc}  
%
%
\appendix
\section{Extensions and Limitations of\columnsversion{}{\\} the  clinching
framework}\label{appendix:framework_limits}

\textbf{Scaled polymatroids:} If $P \subseteq \R^n_+$ is a
polymatroid, $\gamma \in \R^n_+$, then we call
$P_\gamma = \{x; (\frac{x_i}{\gamma_i})_i \in P\}$ a scaled polymatroid. If the
environment is $P_\gamma$, it is easy to see that a truthful, individually-rational, 
Pareto-optimal and budget-feasible auction is obtained by running the
polymatroid clinching auction on $P$ with inputs $\gamma_i v_i$ instead of
$v_i$. It is simple to see that this is equivalent to a standard clinching
auction with input values $v_i$ but price clocks advancing on a different speed
for each player. Scaled polymatroids are important since they correspond to the
setting of AdWords with Quality Factors discussed at the end of
section \ref{subsec:adwords_polytope}).\\

\textbf{Beyond scaled polymatroids:} one could change the way
clinching is done and one could change the price trajectories, maybe in a more
sophisticated way then the one we did for the scaled polymatroids. If the
trajectory is such that it only depends on $P$ and budgets (not on values) and
$p_i$ never decreases, the auction retains truthfulness and budget
feasibility. Here we
argue that none of such changes would generate a Pareto-optimal auction when $P$
is not a scaled polymatroid. Assume
$n=2$ for simplicity and imagine that there is trajectory for the price vector $p$
and a clinching procedure. Also assumes valuations are much smaller than
budgets in such a way that the mechanism cannott exhaust budgets and therefore
the auction must allocate like VCG. Any such mechanism must decide on the whole
allocation
the first time $p_i = v_i$ for some component, having only the information
that $v_j \geq p_j$ for the other component, since at this point he needs to
allocate to the second player. So, the environment must be such that there is a
price trajectory $p(t)$, where the optimal allocation is constant for all
points $(p_1, v_2)$ for all $v_2 > p_2$. And also, it should be constant for
all $(v_1, p_2)$ for all $v_1 > p_1$. Notice that scaled polymatroids are
exactly those environments.

\section{Faster clinching sub-routines}\label{appendix:eff_clinching_adwords}

In Lemma \ref{eff-clinching} we showed that for any generic polymatroidal
environment $P$ we can perform the clinching step in polynomial time if we have
oracle access to the submodular function defining the polymatroid. In order to
do so, we solve a submodular minimization problem. For most practical
applications, however, one can design much simpler and faster algorithms for
clinching. Clinching involves solving the following problem: given an
environment $P$, $\rho \in P$ and $d \in \R^n_+$ we want to compute:

\begin{equation}\label{opt-problem-clinching}
 \max \{\one^t x; x + \rho \in P, 0 \leq x \leq d\}
\end{equation}

We illustrate how to solve this problem efficiently for the single-keyword AdWords
polytope: given $\alpha_1 \geq \alpha_2 \geq \hdots \geq \alpha_n$, consider
the environment:
$$P = \{x \in \R^n_+; x(S) \leq \textstyle\sum_{j=1}^{\abs{S}} \alpha_j, \forall
S
\subseteq [n] \}$$

\begin{lemma}
 For the single-keyword AdWords polytope, the optimization problem defined in
equation (\ref{opt-problem-clinching}) can be solved using the following greedy
algorithm: we can assume wlog that the
components are sorted such that $\rho_1 + d_1 \geq \rho_2 + d_2 \geq \hdots \geq \rho_n
+ d_n$. Now, define inductively $$z_i = \min\{\rho_i + d_i,
\textstyle\sum_{j=1}^i \alpha_j
- \textstyle\sum_{j=1}^{i-1} z_j\}.$$ Then $\sum_i z_i - \rho_i$ is the solution
to the
problem.
\end{lemma}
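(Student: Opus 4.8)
The plan is to recognize (\ref{opt-problem-clinching}) as the maximum of a uniform linear objective over a polymatroid, evaluate that maximum in closed form via Schrijver's formula, specialize it to the single-keyword rank function, and then check that the stated greedy recursion reproduces exactly that value.

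First I would note that (\ref{opt-problem-clinching}) is precisely $\max\{\one^t x : x \in P_{\rho,d}\}$ with $P_{\rho,d} = \{x \ge 0 : \rho + x \in P,\ x \le d\}$ the aggregate demand polytope of Definition~\ref{remnant_supply_defn}, where here $P$ is the single-keyword AdWords polytope with rank function $f(S) = \sum_{j=1}^{|S|}\alpha_j$ (Lemma~\ref{single-keyword-lemma}). By Fact~\ref{schrijver-fact}, $P_{\rho,d}$ is a polymatroid whose (possibly non-monotone) defining function is $\hat f(S) = \min_{T \subseteq S}\{f(T) - \rho(T) + d(S \setminus T)\}$. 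Since $\rho \in P$ gives $f(T) - \rho(T) \ge 0$ for all $T$, we have $\hat f(S) \ge 0 = \hat f(\emptyset)$; its monotonization $\bar f(S) = \min_{S' \supseteq S}\hat f(S')$ defines the same polymatroid, and since $[n]$ is its own only superset, $\bar f([n]) = \hat f([n])$. Using the standard fact that $\max\{\one^t x : x \text{ in a polymatroid}\}$ equals the rank of the ground set, the optimum of (\ref{opt-problem-clinching}) equals $\hat f([n])$.

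Next I would make $\hat f([n]) = \min_{T \subseteq [n]}\{f(T) - \rho(T) + d([n]\setminus T)\}$ explicit. Because $f(T)$ depends only on $|T|$ and the remaining term equals $d([n]) - \sum_{i \in T}(\rho_i + d_i)$, for each fixed cardinality $k$ the minimizing $T$ consists of the $k$ indices with the largest values of $\rho_i + d_i$; hence, in the sorted order of the statement,
$$\hat f([n]) = \min_{0 \le k \le n}\Big\{\textstyle\sum_{j=1}^k \alpha_j - \sum_{j=1}^k \rho_j + \sum_{j=k+1}^n d_j\Big\}.$$
Finally, setting $u_i = \rho_i + d_i$ and $g_i = \sum_{j=1}^i \alpha_j - \sum_{j=1}^i z_j$ (so $g_0 = 0$), the definition of $z_i$ rewrites as $z_i = \min\{u_i,\ \alpha_i + g_{i-1}\}$, which yields the Lindley-type recursion $g_i = \max\{0,\ g_{i-1} + \alpha_i - u_i\}$, whose unrolled solution is $g_n = \max_{0 \le k \le n}\sum_{j=k+1}^n(\alpha_j - u_j)$. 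Therefore $\sum_i (z_i - \rho_i) = \sum_i \alpha_i - g_n - \sum_i \rho_i = \min_{0 \le k \le n}\{\sum_{j \le k}\alpha_j - \sum_{j \le k}\rho_j + \sum_{j > k}d_j\} = \hat f([n])$, the claimed optimal value.

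The two points that need care — and that I would flag as the only real obstacles — are, first, that $\hat f$ need not be monotone, so the identity $\max \one^t x = \hat f([n])$ must be routed through the monotonization $\bar f$; and second, that the greedy point $z - \rho$ need not itself be feasible (it may have negative coordinates), so the argument must be phrased at the level of the optimal value rather than an optimal solution. As a sanity check one may also verify that the sequence $z$ is non-increasing — argue by cases on which term realizes each $\min$, using $\alpha_1 \ge \cdots \ge \alpha_n$ and $u_1 \ge \cdots \ge u_n$ — which shows $z \in P$; but this is not needed for the value computation above.
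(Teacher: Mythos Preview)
Your proof is correct but proceeds along a genuinely different route from the paper. The paper argues directly and combinatorially: it first observes (via a local-exchange argument) that $x = z - \rho$ is optimal for the relaxation of (\ref{opt-problem-clinching}) in which the constraint $x \ge 0$ is dropped, and then repairs feasibility by a sequence of swaps. Whenever $z_i < \rho_i$ at the smallest such index $i$, tightness of $\sum_{j\le i} z_j = \sum_{j\le i}\alpha_j$ together with $\rho \in P$ forces some $k < i$ with $z_k > \rho_k$, and shifting mass from $z_k$ to $z_i$ keeps $z \in P$, $z \le \rho + d$, and $\one^t z$ unchanged; iterating yields a feasible $z' - \rho$ with the same value.

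Your approach instead invokes the polymatroid structure of $P_{\rho,d}$ (Fact~\ref{schrijver-fact}) to identify the optimum with $\hat f([n])$, evaluates $\hat f([n])$ in closed form by exploiting that $f$ depends only on cardinality, and then shows the greedy sum matches this via the Lindley recursion $g_i = \max\{0, g_{i-1} + \alpha_i - u_i\}$. This is cleaner algebraically and sidesteps the somewhat informal local-exchange step; it also makes explicit the min-over-prefixes formula for the optimum. The paper's argument, on the other hand, is more elementary (it needs no polymatroid rank theory) and actually constructs a feasible optimizer rather than just certifying the optimal value --- a distinction you correctly flag in your second caveat.
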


\begin{proof}
If we drop the restriction that $x \geq 0$ in  (\ref{opt-problem-clinching}),
then it is easy to see that $x = z-\rho$ is an optimal solution to this problem
using, for example, a local exchange argument. Now, we show that we can fix
this problem, by modifying $z$ such that $z \geq \rho$.\comment{
 It is easy to see that $z$ satisfies $z \in P$ and $z \leq \rho+d$. This
implies that $\one^t z - \one^t \rho$ is an
upper bound to (\ref{opt-problem-clinching}). In order to show it is also a
lower bound, we can define $y = z-\rho$. Clearly $y+\rho \in P$ and $y \leq d$.
Then $y$ is almost feasible w.r.t the constraints in
(\ref{opt-problem-clinching}) except that $y_i$ might be negative.}

In order to fix that, consider the smaller $i$ such that $z_i < \rho_i$. By the
definition of $z_i$, it must be the case that $z_i = \sum_{j=1}^i \alpha_j
- \sum_{j=1}^{i-1} z_j$, so $\sum_{j=1}^i z_j = \sum_{j=1}^i \alpha_j$. Then
there must be some $k < i$ such that $z_k > \rho_k$, otherwise we would have
$\sum_{j=1}^i \rho_j > \sum_{j=1}^i \alpha_j$ contradicting the fact that $\rho
\in P$. Notice we can increase $z_i$ by some small $\delta$ and decrease $z_k$
by a small $\delta$. And obtain another vector $z$ which is also such that $z
\in P, z \leq \rho+d$ and has the same $\one^t z$ value (the fact that $z \in
P$ after this transformation is due to the nature of the constraints). We can
repeat this process until we get $z \geq \rho$. 
\end{proof}

\section{Missing proofs in Section
\ref{SEC:CHARACTERIZATION} }\label{appendix:missing_proofs_part_1}

\begin{proofof}{Lemma \ref{vcg_family_limit_lemma}}
Suppose that $x(v_1+,v_2) \neq \tilde{x}(v_1+,v_2)$, then say that
$x_1(v_1+,v_2) > \tilde{x}_1(v_1+,v_2)$. Since both are in the boundary of the
polytope, it means that for all $(v'_1,v_2)$ with $v'_1 > v_1$, 
$\tilde{x}_1(v'_1, v_2) \geq x_1(v_1+,v_2)$, since the points to the right of
$x(v_1+,v_2)$ are clearly better then the ones to the left of it. So
$\tilde{x}_1(v'_1, v_2)$ can't converge to
$\tilde{x}_1(v_1+,v_2) < x_1(v_1, x_2)$.
\end{proofof}

\section{Counter-example to clinching with decreasing marginals}
\label{appendix:clinching-counter-example}

Consider the setting of multi-unit auctions with decreasing marginals described
in section \ref{subsec:decreasing marginals}, and the variant of the
clinching auction in Algorithm \ref{polyhedral-clinching-auction} where demands
are calculated by:
$$d_i = \min \left\{\frac{B_i}{p}, \max \{x_i; \partial V_i(\rho_i+x_i) \leq
p \} \right\}$$

Our main claim  here is that this auction is not truthful. We show that by providing
an example: consider an initial supply of $s=2$ budgets $B_1 = \infty, B_2 = 4$
and valuations:

$$V_1(x) = \left\{ \begin{aligned} 4x, \quad & x \in [0,1] \\ 4+x, \quad  & x
\in [1,2] \end{aligned} \right.  \qquad V_2(x) = 3x, \quad x \in [0,2]$$

The outcome of that is $x_1 = 1, x_2 = 1$ and $p_1 = 3, p_2 = 1$ since budgets
never get exhausted. Now, we show that player $1$ has a profitable deviation.
He could report the following valuation instead:
$$\tilde{V}_1(x) = \left\{ \begin{aligned} 4x, \quad & x \in [0,1] \\
4+2x, \quad & x \in [1,2] \end{aligned} \right.$$
The the price clock increases up to $2$ without any clinching. For
$2+$, the demand of player $1$ drops and player $2$ is able to clinch one unit
by the price of $2$, remaining with budget $2$. Now, as price goes up, player
$2$'s demand keeps decreasing, since his leftover budget is only $2$ and the
price is greater than $2$. As a result, player $1$ will be able to start
clinching at price $p=2+$ and therefore will have $\tilde{x}_1 = 1$,
$\tilde{p}_1 < 3$, which is a strict improvement over the truth-telling strategy.

\columnsversion{}{
\balancecolumns
}
\end{document}